\documentclass[preprint,12pt]{elsarticle}

\usepackage[T1]{fontenc}
\usepackage{lmodern}
\usepackage{silence}
\usepackage{microtype}
\usepackage{amsmath,amssymb,amsthm,mathtools}
\usepackage{aliascnt}
\usepackage{booktabs}
\usepackage{array}
\usepackage{tabularx}
\usepackage{enumitem}
\usepackage{graphicx}
\usepackage{subcaption}
\usepackage{float}
\usepackage{placeins}
\usepackage{listings}
\usepackage[hidelinks]{hyperref}
\usepackage[nameinlink,noabbrev]{cleveref}

\usepackage{xcolor}
\usepackage{tikz}

\newcommand{\segmentA}{\draw (0.2,2.2)--(0.8,2.2);}
\newcommand{\segmentB}{\draw (0.9,2.1)--(0.9,1.3);}
\newcommand{\segmentC}{\draw (0.9,1.1)--(0.9,0.3);}
\newcommand{\segmentD}{\draw (0.2,0.2)--(0.8,0.2);}
\newcommand{\segmentE}{\draw (0.1,0.3)--(0.1,1.1);}
\newcommand{\segmentF}{\draw (0.1,1.3)--(0.1,2.1);}
\newcommand{\segmentG}{\draw (0.2,1.2)--(0.8,1.2);}

\newcommand{\vectordigit}[1]{%
  \begin{tikzpicture}[
    x=0.55ex,
    y=0.55ex,
    baseline=-0.15ex,
    line width=0.13ex,
    line cap=round,
    draw=gray
  ]
    \ifcase#1
      \segmentA\segmentB\segmentC\segmentD\segmentE\segmentF
    \or
      \segmentB\segmentC
    \or
      \segmentA\segmentB\segmentG\segmentE\segmentD
    \or
      \segmentA\segmentB\segmentG\segmentC\segmentD
    \or
      \segmentF\segmentG\segmentB\segmentC
    \or
      \segmentA\segmentF\segmentG\segmentC\segmentD
    \or
      \segmentA\segmentF\segmentG\segmentE\segmentC\segmentD
    \or
      \segmentA\segmentB\segmentC
    \or
      \segmentA\segmentB\segmentC\segmentD\segmentE\segmentF\segmentG
    \or
      \segmentA\segmentB\segmentC\segmentD\segmentF\segmentG
    \fi
    \path[use as bounding box] (0,0)--(1,2.4);
  \end{tikzpicture}%
}

\ExplSyntaxOn
\cs_new_protected:Npn \draw_vector_line_number:n #1
  {
    \tl_map_inline:nn {#1}
      {
        \vectordigit{##1}
        \kern 0.12ex
      }
  }
\NewDocumentCommand{\vectorlinenumber}{m}
  {
    \exp_args:Ne \draw_vector_line_number:n {#1}
  }
\ExplSyntaxOff

\lstdefinestyle{pythoncopyable}{
  language=Python,
  basicstyle=\ttfamily\fontsize{8.5}{8.4}\selectfont,
  numbers=left,
  numberstyle=\vectorlinenumber,
  numbersep=10pt,
  stepnumber=1,
  frame=single,
  framesep=4pt,
  breaklines=true,
  breakatwhitespace=false,
  showstringspaces=false,
  keepspaces=true,
  columns=fullflexible,
  tabsize=4,
  upquote=true,
  keywordstyle=\color{blue!70!black},
  commentstyle=\color{green!40!black},
  stringstyle=\color{red!60!black},
  aboveskip=0.2em,
  belowskip=0.2em
}

\newif\ifarxiv
\arxivfalse

\graphicspath{{figures/}}
\journal{Theoretical Computer Science}

\newtheorem{theorem}{Theorem}
\newaliascnt{lemma}{theorem}
\newtheorem{lemma}[lemma]{Lemma}
\aliascntresetthe{lemma}
\newaliascnt{proposition}{theorem}
\newtheorem{proposition}[proposition]{Proposition}
\aliascntresetthe{proposition}
\newaliascnt{corollary}{theorem}
\newtheorem{corollary}[corollary]{Corollary}
\aliascntresetthe{corollary}
\theoremstyle{definition}
\newaliascnt{definition}{theorem}
\newtheorem{definition}[definition]{Definition}
\aliascntresetthe{definition}
\theoremstyle{remark}
\newaliascnt{remark}{theorem}

\aliascntresetthe{remark}

\crefname{lemma}{lemma}{lemmas}
\Crefname{lemma}{Lemma}{Lemmas}
\crefname{proposition}{proposition}{propositions}
\Crefname{proposition}{Proposition}{Propositions}
\crefname{corollary}{corollary}{corollaries}
\Crefname{corollary}{Corollary}{Corollaries}
\crefname{definition}{definition}{definitions}
\Crefname{definition}{Definition}{Definitions}
\crefname{appsection}{}{}
\Crefname{appsection}{}{}

\newcommand{\cF}{\mathcal{F}}
\newcommand{\cB}{\mathcal{B}}
\newcommand{\cH}{\mathcal{H}}
\newcommand{\cU}{\mathcal{U}}
\newcommand{\e}{\mathrm{e}}
\newcommand{\distlang}[2]{L_{#1,#2}}
\newcommand{\fall}[2]{(#1)_{#2}}
\newcommand{\npad}{n_{\mathrm{pad}}}

\begin{document}

\begin{frontmatter}

\title{Breaking the \texorpdfstring{$4^k$}{4\string^k} Barrier for the \texorpdfstring{$k$}{k}-Distinct Language}

\author[aff1]{Ran Ben Basat
}
\ead{r.benbasat@cs.ucl.ac.uk}
\address[aff1]{Department of Computer Science, University College London,
London, United Kingdom}

\begin{abstract}
For integers $k\le n$, let $\distlang{k}{n}$ be the set of words over $[n]$ of length at most $k$ in which no symbol is repeated.
We present a nondeterministic finite automaton (NFA) of size $3.918^k n^{O(1)}$, improving on the $4^{k+o(k)}n^{O(1)}$ construction of Ben-Basat, Gabizon, and Zehavi.

Our proof organizes several classical ingredients---product automata, hashing, and coefficient estimates---into a \emph{gadget-amplification framework}:
We take the product of many copies of a small local NFA gadget, whose language is a subset of $\distlang{r}{c}$, and hash the $k$ input symbols to copies and local colors.
The hash family guarantees that, for every repetition-free input, some hash sends at most $r$ symbols to each copy such that the resulting projection in every copy is accepted by the local gadget.
Taking the nondeterministic union of the corresponding product NFAs yields a global NFA.
Amplifying a \(200\)-state gadget for $L_{6,11}$ obtained from the small Witt design $S(4,5,11)$, this framework gives a $3.967^k n^{O(1)}$-size NFA.

We then introduce the \emph{compose-and-compress} technique, which deletes the expensive middle layers of these products and replaces paths across the deleted bands with sound one-symbol shortcut transitions. We apply it twice, once for enhancing the amplification framework and again for the local gadget, obtaining the stated result.


\end{abstract}

\begin{keyword}
parameterized automata \sep nondeterministic finite automata \sep $k$-distinct language \sep derandomization \sep separating families \sep Witt design
\end{keyword}

\end{frontmatter}

\section{Introduction}

For a finite alphabet $C$ and an integer $r\ge0$, let
\[
  \distlang{r}{C}
  \triangleq \{x_1\cdots x_j:0\le j\le r,\ x_i\in C,
      \text{ and }x_i\ne x_{i'}\text{ whenever }i\ne i'\}\ .
\]
The empty word is included.  For $C=[n]$ we write $\distlang{r}{n}$.
Ben-Basat, Gabizon, and Zehavi use the name $k$-distinct language for the sublanguage of $\distlang{k}{n}$ consisting of words of length exactly $k$~\cite{BenBasatGabizonZehavi2016}.
This convention does not affect the size bounds in this paper.
Our NFA constructions have one layer of states for each input length.
If all states are accepting, they recognize $\distlang{k}{n}$; if only the states in layer $k$ are accepting, they recognize the traditional exact-length language.
We use the prefix-closed convention because it simplifies the presentation of the techniques below.

Ben-Basat, Gabizon, and Zehavi proved a lower bound with exponential base two and an upper bound of
\[
  4^k k^{O(\log^2 k)}n^{O(1)}
  =4^{k+o(k)}n^{O(1)}\ .
\]
We improve the upper-bound base in the same NFA model. Every transition in this paper reads exactly one input symbol, and the size of an automaton is $|Q|+|\Delta|$.

\subsection{Related work}

The NFA-size question for the $k$-distinct language was posed publicly on Theoretical Computer Science Stack Exchange in 2014~\cite{BenBasat2014CST}.
Ben-Basat, Gabizon, and Zehavi subsequently introduced the parameterized-automata formulation of the $k$-distinct language~\cite{BenBasatGabizonZehavi2016}.
Besides the ordinary NFA upper and lower bounds recalled above, they constructed a bounded-ambiguity NFA for approximate counting and a nondeterministic XOR automaton, and showed how such automata combine with problem-specific automata for $k$-Path, $r$-Dimensional $k$-Matching, and Module Motif.  Molina Lovett and Shallit studied the special case $k=n$, in which the exact-length language consists of all permutations of the alphabet~\cite{MolinaLovettShallit2019}. They proved asymptotically tight bounds $4^n n^{-(\lg n)/4+\Theta(1)}$ for the minimum size of a regular expression describing this language. Their work concerns regular-expression size rather than ordinary NFA size and does not provide a sub-$4^k$ NFA construction. To our knowledge, the $O^*(4^{k+o(k)})$ NFA bound of~\cite{BenBasatGabizonZehavi2016} had not been improved before the present work.\footnote{$O^*(f(k))$ is shorthand for $f(k)\cdot n^{O(1)}$.}

The algorithmic lineage is broader.  Color-coding isolates a sought solution by a random coloring and derandomizes the choice with perfect hash families~\cite{AlonYusterZwick1995}. Divide-and-color recursively separates a solution into randomly colored parts~\cite{KneisMolleRichterRossmanith2006}, whereas Koutis introduced the algebraic multilinear-monomial method and obtained an $O^*(2^{3k/2})$ algorithm for general $k$-Path~\cite{Koutis2008}, which Williams sharpened to a randomized $O^*(2^k)$ algorithm~\cite{Williams2009}; narrow sieves extend this viewpoint to paths and packings~\cite{BjorklundHusfeldtKaskiKoivisto2017}.  These are the three techniques from which the automata constructions of Ben-Basat, Gabizon, and Zehavi are distilled.  Representative families and their associated separating collections provide another deterministic way to retain only sets that can still be extended disjointly~\cite[Section~4.2]{FominLokshtanovPanolanSaurabh2016}; Zehavi gives a systematic account of mixtures of representative sets, narrow sieves, and divide-and-color~\cite{Zehavi2015}.

The ordinary $O^*(4^{k+o(k)})$ NFA relevant here arises from divide-and-color.
Our construction follows a different route which we call the \emph{gadget-amplification framework}.
Starting from a local NFA gadget of capacity $r$ over $c$ colors, we take the product of many copies and hash every global input symbol to one copy and one local color.
For every repetition-free input of at most $k$ symbols, a covering hash family supplies a hash that assigns at most $r$ symbols to each copy and makes each local projection one accepted by that copy.
The nondeterministic union of the corresponding product NFAs is the resulting global NFA.

We note that the improved NFA construction of this paper does not translate into a new state-of-the-art algorithm for problems like $k$-Path.
Nederlof recently gave a deterministic $2^{k+O(\sqrt{k}\log^2 k)}(n+m)\log n$-time algorithm for weighted directed $k$-Path in the word-RAM model, assuming that every edge weight fits in one word~\cite{Nederlof2026}.

\subsection{Overview of our results}
We now state our main result.
\begin{theorem}[Main theorem]\label{thm:main}
For every $1\le k\le n$, an acyclic NFA recognizing $\distlang{k}{n}$ can be constructed deterministically such that
\[
  |Q|+|\Delta|
  \le 2^{1.96992k}n^{O(1)}
  <3.918^k n^{O(1)}\ .
\]
The construction time obeys the same bound.
\end{theorem}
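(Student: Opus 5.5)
We prove the theorem in three stages: build a local gadget, amplify it with hashing, and then compress the middle layers.

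\textbf{Stage 1: local gadget.} We construct an acyclic, layered NFA $G$ whose language lies in $\distlang{r}{c}$ and which has few states. The first instance is the $200$-state gadget for $\distlang{6}{11}$ obtained from the Witt design $S(4,5,11)$. In this gadget a state after reading $j$ symbols remembers only a block (or complement of a block) of the design rather than the full set of symbols read. Soundness follows from the block intersection properties, and completeness follows because every $4$-set lies in a unique block.

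\textbf{Stage 2: amplification.} Split $k$ into $m=\lceil k/r\rceil$ copies and take hashes $h:[n]\to[m]\times[c]$ from an explicit covering family. The family must guarantee that, for every repetition-free word $w$ of length at most $k$, some $h$ sends at most $r$ symbols of $w$ to each copy, and the projection of $w$ onto each copy is accepted by $G$. Since $G$ accepts a set of words rather than all of $\distlang{r}{c}$, the right success probability is the probability that a random local coloring lands in $L(G)$. We then derandomize with splitters and perfect hash families, at cost $k^{O(\log k)}n^{O(1)}$ or $2^{o(k)}n^{O(1)}$. For a fixed $h$, the product automaton reads $x$, updates copy $h(x)_1$ with local color $h(x)_2$, and records the total length.

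The size of this product is governed by the number of reachable product states at layer $j$: a sum over compositions $j=\sum j_i$ of $\prod_i |G_{j_i}|$, where $G_t$ is layer $t$ of $G$. A coefficient (saddle-point) estimate on the generating function $(\sum_t |G_t| z^t)^m$ bounds the maximum layer by $\beta^k$. Dividing by the covering probability gives the per-gadget base; for the Witt gadget this yields $3.967^k$.

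\textbf{Stage 3: compose and compress.} The most expensive product layers are those near $j\approx k/2$. We delete a band of layers $(a,b)$ and replace every path from layer $a$ to layer $b$ by a one-symbol shortcut transition. This can only be sound if a single-symbol shortcut does not accept words outside $\distlang{k}{n}$. To get that, we reinterpret the band as a constraint expressible through the distinct-language structure on the remaining layers, for example by merging states that agree on their future-compatible information. The cost then becomes the sum over surviving layers plus the transitions across the band. We apply this twice: once to the global product, and once inside the local gadget to produce an improved gadget. We then re-optimize the parameters in the coefficient estimate and verify numerically that the exponent is at most $1.96992$. Construction time is linear in the output size, since every step is explicit.

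The main obstacle is Stage 3's soundness together with its accounting. We must show that the shortcut transitions never accept a repeated symbol, while also bounding the number of shortcut transitions, which are pairs of states across the band, by something below $4^k$. I would first try restricting shortcuts to pairs whose recorded colorings are consistent, and then bound their count by the product of the boundary-layer sizes evaluated at the optimal split point.
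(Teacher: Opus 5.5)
Your proposal has the right outline (Witt gadget, hashing into a product of copies, compression used twice, numerical check), but it is missing the mechanisms the paper uses for compression, transition accounting and derandomization.

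\textbf{Compression: soundness is easy; completeness is what is missing.} Suppose a shortcut $u\xrightarrow{a}v$ is added only when the raw product has a path from $u$ to $v$ labeled $h_1\cdots h_s a$. Then replacing each shortcut by such a witness turns any compressed run into a raw run whose label contains the compressed word as a subsequence. The raw product reads only repetition-free words, so no state merging or ``consistent coloring'' filter is needed. Your plan never addresses completeness. Deleting $s$ layers lowers the capacity by $s$, so a word can cross the gap only if $s$ colors it does not use can be inserted as hidden witnesses. The paper therefore over-provisions the product, so that its raw capacity $Nr$ exceeds the target $q$. It deletes the central band of width $s=Nr-q$, so only two tails of about $q/2$ layers survive. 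It also works with partial gadgets carrying a downward-closed family of certified sets with the extension property (P4): every certified set of size at most $q$ extends by $s$ fresh colors to a certified set, every ordering of which the raw product reads. This bookkeeping is indispensable for your second use. The product of eleven Witt gadgets is not complete over its $121$ colors, since seven colors from one palette cannot be read. Hence $H_{11}$, obtained by compressing capacity $66$ to $61$, is only a partial gadget. The next round of hashing must therefore count its certified sets through $[z^q]B_{11}(z)^N$, not a vague probability of landing in $L(G)$.

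\textbf{Transition count, construction time and parameters.} Bounding the shortcuts by the product of the two boundary-layer sizes essentially squares the per-branch state count. Each boundary layer is roughly $2^{1.44k}$ at the relevant parameters, so this is far above $4^k$; the paper notes that compressing directly at length $k$ leaves a $2^{O(k)}$ transition factor with too large a constant. The missing idea is the two-scale construction:
\begin{itemize}
\item Compress only at an intermediate length $q=\Theta(\sqrt k)$. This gives an inner NFA for $\distlang{q}{q^3}$ with $2^{(E(x,y)+o(1))q}$ states but only $2^{O(q)}$ transitions.
\item Combine $T=K/q$ inner copies under balanced outer hashes: exactly $q$ symbols per block, with distinct local names in $[q^3]$.
\item Each outer transition updates a single component, so the $2^{O(q)}$ factor appears once and is $2^{o(k)}$.
\end{itemize}
The same structure keeps the deterministic construction time within the bound. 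The inner restriction family, whose density is exponentially small, is built over the domain $[q^3]$ at cost $2^{O(q\log q)}$. Derandomizing directly at length $k$ costs $k^{O(k)}$, so your claimed $k^{O(\log k)}$ overhead is unjustified.

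Finally, Stage 2 as written does not give $3.967^k$. With $m=\lceil k/6\rceil$ Witt copies, every copy must carry exactly six symbols, and the exponent is $\frac16\log_2 200+\log_2\frac{11\e}{6}-\frac16\log_2 462\approx 2.116$, i.e.\ about $4.33^k$. You need $N=\lambda k$ copies with $\lambda>1/r$ and an optimized load histogram, namely the tilt $p_j\propto B_jy^j$. You also need the tail bound $\sum_{j\le q/2}[z^j]A(z)^N\le x^{-q/2}A(x)^N$ and the concrete choice $x=153/200$, $y=81/50$ for $H_{11}$. Your proposal specifies none of these.
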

As mentioned above, the same transition graph, with precisely the states in layer \(k\) accepting, gives the identical bound for the exact-length-$k$ convention of~\cite{BenBasatGabizonZehavi2016}.

To prove the theorem, we develop the gadget-amplification framework in three stages.
We first construct a constant-size NFA gadget from a $(c,a,b)$-separating family $\mathcal F$.
The gadget recognizes $\distlang{a+b}{c}$ using
\[
  \sum_{j=0}^{a-1}\binom cj+|\mathcal F|+\sum_{j=0}^{b-1}\binom cj
\]
states.
Although asymptotically efficient separating families are well studied~\cite{FominLokshtanovPanolanSaurabh2016}, here we need a small construction for fixed $c,a,b$.
We identify the small Witt design $S(4,5,11)$ as an $(11,3,3)$-separating family of size $66$, which yields a compact $200$-state construction for $\distlang{6}{11}$.


The framework takes the product of many copies of a local gadget and uses a hash to assign every global input symbol to one copy and one local color.
A hash succeeds on a repetition-free input when every copy receives at most the gadget capacity and the resulting local projection is accepted by that copy.
A covering hash family supplies a successful hash for every repetition-free input of at most $k$ symbols, and we take the nondeterministic union of the corresponding product NFAs.
For the complete Witt gadget, optimizing a fixed histogram of local loads without compressing the product state space already yields an NFA of size $O^*(3.967^k)$ for $\distlang{k}{n}$.


We then refine the same framework using compose-and-compress (CaC), which combines several partial gadgets into a larger one.
A partial gadget is a layered NFA that accepts only repetition-free words and comes with a downward-closed family of \emph{certified} color sets, each of size at most $r$.
A set is certified when the gadget accepts every ordering of its elements.
Every certified set $S$ also has a certified superset of each cardinality from $|S|$ through $r$.
CaC forms the product of several partial gadgets and deletes its large middle layers while preserving soundness.

\begin{table}[t]
\centering
\small
\setlength{\tabcolsep}{5pt}
\begin{tabularx}{\textwidth}{@{}>{\hsize=1.08\hsize\raggedright\arraybackslash}X>{\hsize=.92\hsize\raggedright\arraybackslash}Xl@{}}
\toprule
Stage & State accounting & Resulting size \\
\midrule
Full-product gadget amplification with a fixed finite load histogram (\S\ref{sec:fixed-profile}) & All states of the raw product & $O^*(3.967^k)$ \\
\midrule
Growing compose-and-compress amplification with the Witt automaton (\S\ref{sec:amplification}) & Witt automaton composition while compressing the product's middle layers & $O^*(3.925^k)$ \\
\midrule
Compose-and-compress of eleven Witt gadgets, followed by growing amplification (\S\ref{sec:eleven-witt}) & An improved finite partial gadget and the same two-tail~analysis & $O^*(3.9175^k)$ \\
\bottomrule
\end{tabularx}
\caption{The three proof stages. The last gives the main theorem.}
\label{tab:three-stages}
\end{table}


We use CaC at two scales within the gadget-amplification framework.
At the growing scale, we compose many copies of the automaton and delete their central product layers; this yields the intermediate-block NFA and an $O^*(3.925^k)$ bound.
At the finite scale, we compose eleven Witt gadgets and delete the middle layers of their product to obtain a better starting partial gadget.
Applying the growing-scale construction to that gadget gives the final $O^*(3.918^k)$ bound.
\Cref{tab:three-stages} summarizes the progression.



The main proof analyzes random hash functions because this gives the cleanest exponent calculation. \Cref{app:derandomization} applies known explicit splitter and restriction-family constructions to replace both random families by explicit splitter and restriction families. The resulting deterministic procedure lists all states and ordinary transitions in $2^{1.96992k}n^{O(1)}=O^*(3.918^k)$ time.

\section{Distinct-word gadgets and raw products}\label{sec:gadgets}

An NFA over an alphabet $\Sigma$ is a tuple $(Q,\Delta,q_0,F)$ with $\Delta\subseteq Q\times\Sigma\times Q$.  It is \emph{layered} if $Q$ is partitioned into $Q_0,\ldots,Q_r$, the initial state lies in $Q_0$, and every transition goes from $Q_i$ to $Q_{i+1}$.  A state in $Q_i$ has rank $i$.
Every layered NFA is acyclic.

\begin{definition}[Complete gadget]\label{def:gadget}
Let $C$ be a finite color set and let $0\le r\le |C|$.  A \emph{complete gadget of capacity $r$ over $C$} is a layered NFA with layers $Q_0,\ldots,Q_r$, a unique initial state in $Q_0$, all states accepting, and language exactly $\distlang{r}{C}$.
\end{definition}

That is, a complete gadget accepts precisely the repetition-free words of length at most $r$.
For a layered NFA $H$ with layers $Q_0,\ldots,Q_r$, we \mbox{define its \textit{state polynomial} by}
\[
  A_H(z)\triangleq\sum_{j=0}^r |Q_j|z^j\ .
\]
In particular, the number of states, denoted $s(H)$, satisfies
\[
s(H)= A_H(1)\ .
\]

We call $H$ \emph{symmetric} if $|Q_j|=|Q_{r-j}|$ for every $j$.

For a \textit{complete} gadget $G$ of capacity $r$ over a $c$-element color set, we also define its \emph{set polynomial} by
\[
  B_G(z)\triangleq\sum_{j=0}^r\binom cj z^j\ .
\]
The coefficient of $z^j$ is the number of $j$-element color sets, and every ordering of each such set is accepted by $G$. Thus, $A_G$ counts states by layer, whereas $B_G$ counts repetition-free color sets by cardinality.

For $1\le i\le m$, let $G_i$ be a complete gadget over $C_i$ with capacity $r_i$.  Assume that the palettes $C_i$ are disjoint.  Their \emph{raw product} has a state set $Q(G_1)\times\cdots\times Q(G_m)$, all accepting, with the initial state being the tuple of initial states of the automata.  Reading a color from $C_i$ moves only the $i$th coordinate.  A set $S\subseteq (\bigcup_{i=1}^m C_i)$ is \emph{product-feasible} if
\[
  |S\cap C_i|\le r_i
  \qquad(1\le i\le m)\ .
\]

\begin{lemma}[Raw product]\label{lem:raw-product}
The raw product accepts every ordering of every product-feasible set and accepts no word with a repeated color.  Its state polynomial is
\[
  \prod_{i=1}^m A_{G_i}(z)\ ,
\]
and the polynomial counting product-feasible sets by cardinality is
\[
  \prod_{i=1}^m B_{G_i}(z)\ .
\]
\end{lemma}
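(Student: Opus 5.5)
The plan is to establish the four assertions one at a time, each straight from the definitions. The key observation is that a run of the raw product on a word $w$ is the same thing as a tuple of runs: coordinate $i$ runs $G_i$ on the projection $w|_{C_i}$, the subsequence of letters of $w$ lying in $C_i$. This works because a letter from $C_i$ changes only coordinate $i$, and the palettes are disjoint, so each letter belongs to exactly one coordinate. It follows that $w$ is accepted (all states are accepting) exactly when $G_i$ admits a run on $w|_{C_i}$ for every $i$, that is, when $w|_{C_i}\in L(G_i)=\distlang{r_i}{C_i}$ for all $i$ by \Cref{def:gadget}. I will prove this first, by induction on $|w|$.

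\emph{Acceptance of feasible sets.} Take a product-feasible $S$ and any ordering $w$ of it. Each projection $w|_{C_i}$ is an ordering of $S\cap C_i$, so it has no repeated colors and has length at most $r_i$. Hence it lies in $\distlang{r_i}{C_i}$, and the characterization above gives acceptance.

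\emph{Rejection of repeats.} Suppose $w$ repeats a color $x$. Then $x$ lies in a unique $C_i$, and $w|_{C_i}$ contains $x$ twice. Every word in $\distlang{r_i}{C_i}$ is repetition-free, so $G_i$ has no run on $w|_{C_i}$, and the product rejects $w$.

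\emph{State polynomial.} First I define the layering. A state $(q_1,\dots,q_m)$ with $q_i$ of rank $j_i$ is given rank $\sum_i j_i$. Every transition raises exactly one coordinate's rank by one, so it raises the total rank by one, and the product is layered with layers $0,\dots,\sum_i r_i$. The number of tuples of total rank $j$ equals $\sum_{j_1+\dots+j_m=j}\prod_i |Q_{j_i}(G_i)|$, which is the coefficient of $z^j$ in $\prod_i A_{G_i}(z)$. \emph{Set polynomial.} A product-feasible set of size $j$ corresponds bijectively to a tuple $(S\cap C_i)_i$ of subsets with $|S\cap C_i|=j_i\le r_i$ and $\sum_i j_i=j$. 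The count is therefore $\sum\prod_i\binom{c_i}{j_i}$, which is the coefficient of $z^j$ in $\prod_i B_{G_i}(z)$.

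No step is a serious obstacle. The only place that needs care is the run-decomposition bijection, specifically the point that disjointness of the palettes makes the projection well defined and lets runs be interleaved in any order. A routine induction on the length of the word handles it.
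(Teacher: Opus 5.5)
Your proof is correct and follows the same route as the paper. Rank additivity gives the state polynomial, the bijection $S\mapsto(S\cap C_i)_i$ gives the set polynomial, interleaving local paths gives acceptance, and projecting onto the unique palette containing a repeated color gives rejection; your explicit run-decomposition characterization just writes out what the paper's short proof uses implicitly.
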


\begin{proof}
A product state has rank equal to the sum of its local ranks, which gives the state polynomial.  A product-feasible set is the disjoint union of one color set of size at most $r_i$ from each palette, which gives the second polynomial.
Any prescribed ordering can be read by interleaving the corresponding local paths.  If a color repeats, it repeats inside its unique local palette, and the corresponding complete gadget has no path with that repeated label.
\end{proof}

As a concrete example, let $G$ be the complete gadget of capacity two over $\{0,1\}$ shown in \Cref{fig:211-complete-gadget}.  It has four states, all accepting, and state polynomial
\[
  A_G(z)=1+2z+z^2\ .
\]

\begin{figure}[H]
\centering
\includegraphics[width=0.58\textwidth]{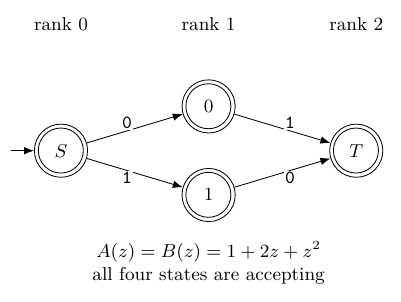}
\caption{A complete two-color gadget of capacity two. Every arrow reads one color.}
\label{fig:211-complete-gadget}
\end{figure}

Take two copies on the disjoint palettes $\{00,01\}$ and $\{10,11\}$.  A product state is an ordered pair, and each input symbol moves exactly one coordinate. Their raw product, which is complete over the four-color union, appears in \Cref{fig:211-product}.

\begin{figure}[H]
\centering
\includegraphics[width=\textwidth,trim=0 28bp 0 0,clip]{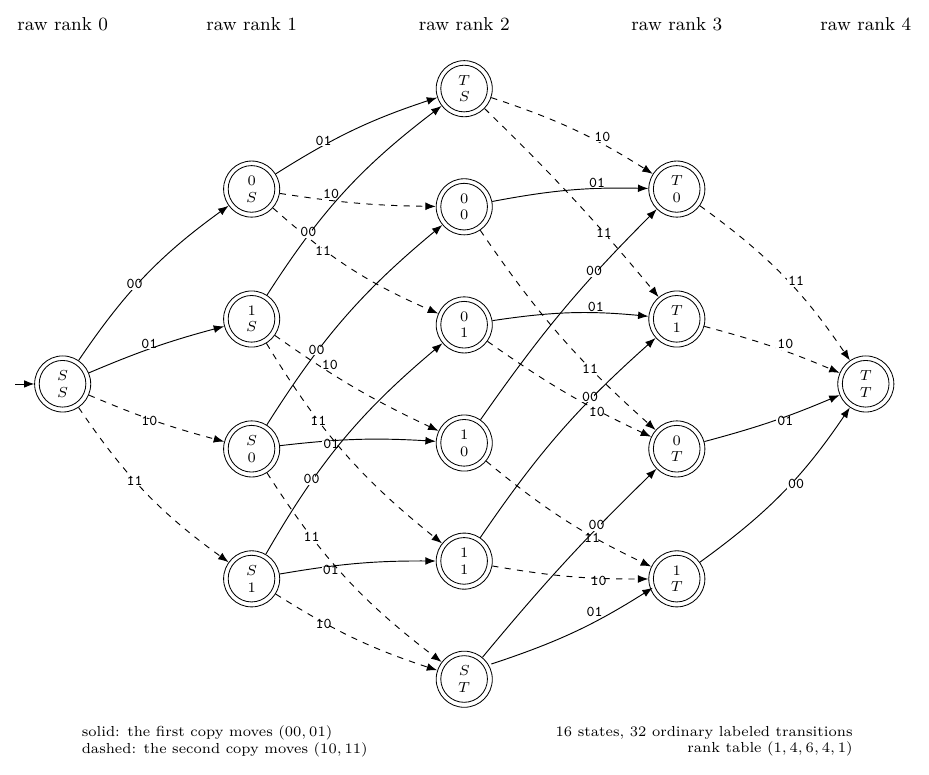}
\vspace{-0.2em}
{\small
\begin{tabular*}{0.94\textwidth}{@{\extracolsep{\fill}}ll@{}}
solid: the first copy moves $(00,01)$ &
$16$ states, $32$ transitions \\
dashed: the second copy moves $(10,11)$ &
rank table $(1,4,6,4,1)$
\end{tabular*}}
\caption{The raw product of two complete two-color gadgets.
Solid edges move the first copy and dashed edges move the second. The graph has $16$ states and $32$ \mbox{ordinary labeled transitions.}}
\label{fig:211-product}
\end{figure}

The product polynomial is
\[
  A_G(z)^2=(1+2z+z^2)^2=1+4z+6z^2+4z^3+z^4\ .
\]
Thus its coefficients $1,4,6,4,1$ are exactly the numbers of product states in ranks $0,1,2,3,4$, respectively, as visible in \Cref{fig:211-product}.

The raw product need not itself be a complete gadget over the union of the palettes: a repetition-free word can overload one palette.  The smallest symmetric example is shown in \Cref{fig:raw-product-partial}.  Each local gadget has two colors and capacity one.  The product accepts at most one color from each palette, so it accepts $00\,10$ but rejects the repetition-free word $00\,01$.  This distinction is the reason that, when we later introduce the compose-and-compress technique, we record which color sets are guaranteed to be readable.

\begin{figure}[H]
\centering
\begin{subfigure}[t]{0.29\textwidth}
  \centering
  \vspace{0pt}
  \includegraphics[width=\linewidth]{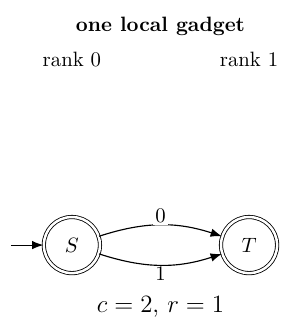}
  \caption{One complete gadget with $c=2$ and $r=1$.}
  \label{fig:raw-product-local}
\end{subfigure}
\hfill
\begin{subfigure}[t]{0.67\textwidth}
  \centering
  \vspace{0pt}
  \includegraphics[width=\linewidth]{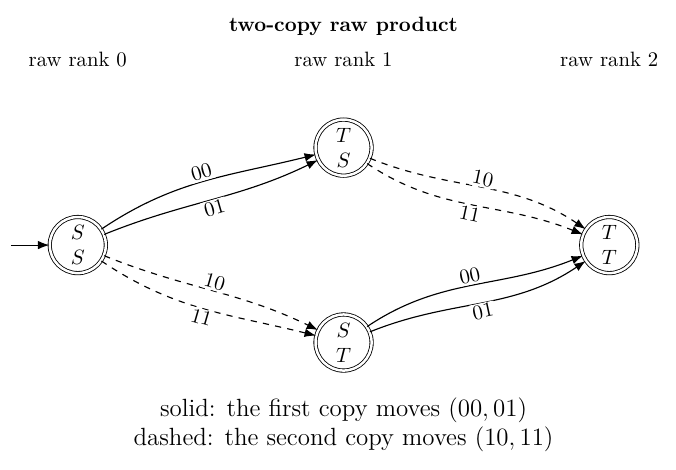}
  \caption{The two-copy raw product, viewed as a partial gadget. Solid edges move the first copy; dashed edges move the second.}
  \label{fig:raw-product-partial}
\end{subfigure}
\caption{A complete gadget and its two-copy raw product. The product has four states with rank table $(1,2,1)$. It accepts $00\,10$, but rejects $00\,01$: both colors of the latter word belong to the first palette and therefore overload its capacity-one gadget.}
\label{fig:raw-product-not-complete}

\end{figure}

\section{Separating families and the Witt gadget}\label{sec:separators}

\subsection{A general separator construction}

\begin{definition}[$(c,a,b)$-separating family]\label{def:separator}
Let $C$ be a color set of size $c$ and let $a,b\ge 1$. A family $\cF\subseteq2^C$ is a $(c,a,b)$-separating family if, for every two disjoint sets $P,S\subseteq C$ with $|P|=a$ and $|S|=b$, some $W\in\cF$ satisfies
\[
  W\cap P=\varnothing,
  \qquad
  S\subseteq W\ .
\]
Thus $W$ avoids the colors in $P$ and contains the colors in $S$.
\end{definition}

\begin{proposition}[Complete gadget from a separator]\label{prop:separator-gadget}
Suppose $c\ge a+b$ and $\cF$ is a $(c,a,b)$-separating family. Then there is a layered NFA $G$ over $C$ whose layer sizes are
\begin{equation}\label{eq:separator-profile}
  \binom c0,\binom c1,\ldots,\binom c{a-1},
  |\cF|,
  \binom c{b-1},\ldots,\binom c1,\binom c0\ .
\end{equation}
The automaton accepts exactly $\distlang{a+b}{C}$, so it is a complete gadget of capacity $a+b$. Its set polynomial is
\[
  B_G(z)\triangleq\sum_{j=0}^{a+b}\binom cj z^j\ .
\]
\end{proposition}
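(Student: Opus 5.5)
We build the automaton explicitly as a ``prefix half'' and a ``suffix half'' glued through a middle layer indexed by $\cF$. The layers are as follows. For $0\le j\le a-1$, layer $j$ consists of states $p_T$ for $T\subseteq C$ with $|T|=j$; such a state records the set of colors read so far. Layer $a$ consists of states $m_W$ for $W\in\cF$. For $1\le t\le b$, layer $a+t$ consists of states $s_U$ for $U\subseteq C$ with $|U|=b-t$; such a state records the set of colors still \emph{allowed} to be read in the remaining steps. In particular, layer $a+b$ is the single state $s_\varnothing$. The layer sizes are then exactly \eqref{eq:separator-profile}. The initial state is $p_\varnothing$, and all states are accepting.

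The transitions are defined layer by layer.
\begin{itemize}
\item For $j\le a-2$, there is a transition $p_T\xrightarrow{x}p_{T\cup\{x\}}$ whenever $x\notin T$.
\item From layer $a-1$ to layer $a$, there is a transition $p_T\xrightarrow{x}m_W$ whenever $x\notin T$ and $(T\cup\{x\})\cap W=\varnothing$.
\item From layer $a$ to layer $a+1$, there is a transition $m_W\xrightarrow{x}s_U$ whenever $x\in W$, $U\subseteq W\setminus\{x\}$ and $|U|=b-1$.
\item Later, there is a transition $s_U\xrightarrow{x}s_{U\setminus\{x\}}$ whenever $x\in U$.
\end{itemize}
The edge case $a=1$ only means that layer $a-1$ is $\{p_\varnothing\}$, and it is handled by the same rules.

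\emph{Soundness.} Consider an accepted word $x_1\cdots x_j$ with $j\le a+b$.
\begin{itemize}
\item The first $\min(j,a)$ letters are distinct because each step adds a new element to $T$.
\item If $j>a$, the letter $x_{a+1}$ lies in $W$, which is disjoint from $\{x_1,\dots,x_a\}$.
\item Each later letter lies in the current $U$. This set is contained in $W\setminus\{x_{a+1},\dots\}$ and shrinks by exactly the letter just read.
\end{itemize}
Hence every accepted word is repetition-free. Every path has length at most $a+b$ because the automaton is layered. So the language is contained in $\distlang{a+b}{C}$.

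\emph{Completeness.} Take a repetition-free word of length $j$. If $j<a$, it is read through the $p$-layers. Otherwise, put $P=\{x_1,\dots,x_a\}$ and $S=\{x_{a+1},\dots,x_j\}$, so that $|S|\le b$. Pad $S$ to a set $S'\supseteq S$ with $|S'|=b$ and $S'\cap P=\varnothing$; this uses $c\ge a+b$. Apply the separating property to $(P,S')$ to obtain $W\in\cF$ with $W\cap P=\varnothing$ and $S'\subseteq W$. Now follow the path $p_\varnothing,\dots,p_{\{x_1,\dots,x_{a-1}\}}$, then $m_W$. If $j\ge a+1$, continue with $s_{S'\setminus\{x_{a+1}\}}$ and then remove $x_{a+2},\dots,x_j$ in turn. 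Each transition exists because every subsequent letter lies in the current $U$. Prefix-closure then comes for free, since all states are accepting. Therefore the language equals $\distlang{a+b}{C}$, so $G$ is a complete gadget of capacity $a+b$. Its set polynomial is by definition $\sum_{j=0}^{a+b}\binom cj z^j$.

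The main subtlety is the padding step in the completeness argument. The separating property is only stated for $|S|=b$ exactly, so short suffixes must be extended inside $C\setminus P$, and this is precisely where the hypothesis $c\ge a+b$ is used. A second point to check is that the suffix states record the \emph{remaining allowed set} rather than the set already read. Only this choice makes the suffix layer sizes $\binom c{b-1},\dots,\binom c0$ while still forgetting $P$ safely, because $W$ already excludes $P$.
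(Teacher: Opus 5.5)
Your proof is correct and follows essentially the same construction as the paper. In both, the first $a-1$ layers store the set of colors read, the middle layer is indexed by $\cF$ and entered only through a $W$ avoiding all $a$ prefix colors, and the suffix states store the shrinking set of still-allowed colors inside $W$. The only cosmetic difference is that you pad the suffix set $S$ to a $b$-set disjoint from $P$, whereas the paper extends a shorter word to length $a+b$ and uses prefix closure; both use $c\ge a+b$ in the same way.
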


\begin{proof}
During the first $a-1$ positions, the state stores the set of colors already read. When the $a$th color is read, the NFA nondeterministically guesses a member $W\in\cF$ that avoids all $a$ prefix colors. From a middle-layer state $W$, upon reading $x \in W$, the NFA may move to any $(b - 1)$-set $R \subseteq W \setminus \{x\}$.
Thereafter, from a state $R$, reading $x \in R$ moves to $R \setminus \{x\}$. This gives the layer sizes in \eqref{eq:separator-profile}.

Every path is repetition-free: the prefix is stored as a set, the guessed member of $\cF$ avoids the whole prefix, and the suffix is consumed from a set contained in that member. Conversely, take a repetition-free word of length $a+b$. Apply the separating property with its prefix as $P$ and its suffix as $S$. The resulting $W$ gives a path for the word. A shorter repetition-free word can be extended to length $a+b$, so it appears as a prefix of such a path. Since every state is accepting, the automaton recognizes exactly $\distlang{a+b}{C}$.
\end{proof}



\Cref{prop:separator-gadget} gives a general gadget construction from a small $(c,a,b)$-separating family. Prior work provides asymptotically efficient families when the parameters grow~\cite{FominLokshtanovPanolanSaurabh2016}. Here we keep $c,a,b$ fixed and test several ways to reduce the family size, including SAT solving, puncturing existing families to reduce $a$ and $b$, and lifting them to increase $a$ and $b$.

Among the separating families we tried, the one in the following subsection gives the smallest NFA size bound when used in our global construction.
We note that it is possible that by designing a new small $(c,a,b)$-separating family for suitable $c,a,b$ values one can produce a smaller NFA for $\distlang{k}{n}$ using our techniques.

We release the other compact separating families found in our search, together with a standalone exact verifier, as an open-source artifact~\cite{separatingfamilies2026}, as these may be of independent interest for other applications.

The proofs in this paper do not require running the verifier.

\subsection{The small Witt design}
We use the classical small Witt design $S(4,5,11)$, a collection of five-element blocks on eleven points in which every four points lie in exactly one block; see, e.g.,~\cite[Chapter~IV]{BethJungnickelLenz1999}. For completeness, we give the explicit cyclic realization used in this paper. Let $X\triangleq\mathbb Z_{11}=\{0,1,\ldots,10\}$, and consider the six base blocks
\[
\begin{array}{lll}
B_1\triangleq\{0,1,2,3,5\},&
B_2\triangleq\{0,1,2,6,9\},&
B_3\triangleq\{0,1,2,7,8\}\ ,\\
B_4\triangleq\{0,1,3,4,7\},&
B_5\triangleq\{0,1,3,6,8\},&
B_6\triangleq\{0,1,5,7,9\}\ .
\end{array}
\]
For every $i\in\{1,\ldots,6\}$ and $a\in\mathbb Z_{11}$, define $B_i+a\triangleq\{x+a\pmod {11}:x\in B_i\}$, and let $\cB\triangleq\{B_i+a:1\le i\le6,\ a\in\mathbb Z_{11}\}$. The cited construction has the property that every four-subset of $X$ lies in exactly one block of $\cB$. It has $ |\cB|=\frac{\binom{11}{4}}{\binom54}=66 $ blocks.

\begin{lemma}[Three colors against three]\label{lem:three-three}
For every pair of disjoint triples $P,S\subseteq X$, some block $W\in\cB$ contains $S$ and avoids $P$.
\end{lemma}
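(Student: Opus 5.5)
Our plan is to use only the Steiner property of $\cB$: every four-subset of $X$ lies in exactly one block. We turn the count of blocks through $S$ into the required existence statement.

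\textbf{Step 1: blocks through a triple.} Fix the triple $S$. Each block containing $S$ has five points, so it contains exactly two points of $X\setminus S$. It therefore contains exactly two four-sets of the form $S\cup\{x\}$ with $x\notin S$.

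Conversely, each of the $8$ points $x\in X\setminus S$ determines the four-set $S\cup\{x\}$. By the Steiner property, this four-set lies in exactly one block, and that block contains $S$.

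Double counting the pairs $(x,W)$ with $x\notin S$, $S\subseteq W\in\cB$ and $x\in W$ gives
\[
  2\cdot\#\{W\in\cB:S\subseteq W\}=8 ,
\]
so exactly four blocks contain $S$. Moreover, these four blocks partition $X\setminus S$ into four pairs. Indeed, each $x\notin S$ lies in exactly one block through $S$, and each such block contributes exactly two points outside $S$.

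\textbf{Step 2: avoiding $P$.} Let $W_1,\dots,W_4$ be the blocks through $S$, and let $T_i\triangleq W_i\setminus S$. By Step 1 the pairs $T_i$ are disjoint and cover $X\setminus S$. Since $P\subseteq X\setminus S$, each point of $P$ lies in exactly one $T_i$, so $P$ meets at most three of the four pairs. Hence some $T_i$ is disjoint from $P$. The corresponding $W_i$ contains $S$, and it avoids $P$ because $W_i\cap P=T_i\cap P=\varnothing$.

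\textbf{Where the difficulty lies.} The only substantive ingredient is that the explicit cyclic family $\cB$ really is an $S(4,5,11)$. The paper states this as a property of the cited construction, and we take it as given. If a self-contained check were wanted, we would verify that the $66$ translates cover all $\binom{11}{4}=330=66\cdot5$ four-sets. Since $66$ blocks contribute exactly $330$ four-subsets, coverage alone implies uniqueness, so we would check coverage orbit-wise under $\mathbb Z_{11}$, which gives $30$ orbits of four-sets. The combinatorial argument itself is routine: it is the observation that the blocks through a triple partition the remaining $8$ points into $4$ pairs, and three points cannot hit all four pairs.
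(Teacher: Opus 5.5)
Your proposal is correct and follows essentially the same argument as the paper: use the Steiner property to count exactly $8/2=4$ blocks through $S$, note that each point outside $S$ lies in exactly one of them, and conclude that the three points of $P$ can exclude at most three of the four blocks. Your explicit framing of the four blocks as partitioning $X\setminus S$ into pairs, and your side remark on checking that $\cB$ really is an $S(4,5,11)$ via the $30$ translation orbits of four-sets, are both accurate.
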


\begin{proof}
For each $x\in X\setminus S$, the four-set $S\cup\{x\}$ lies in a unique block. Each block containing $S$ contains exactly two points of $X\setminus S$, so exactly $8/2=4$ blocks contain $S$. Each $p\in P$ belongs to only one of these four blocks; hence the three points of $P$ spoil at most three of them.
\end{proof}

Thus the $66$ blocks form an $(11,3,3)$ separating family. The resulting automaton has the seven layers
\[
\begin{array}{c|ccccccc}
\text{layer}&0&1&2&3&4&5&6\\
\midrule
\text{count}&1&11&55&66&55&11&1\ .
\end{array}
\]

\begin{figure}[H]
\centering
\includegraphics[width=0.98\textwidth]{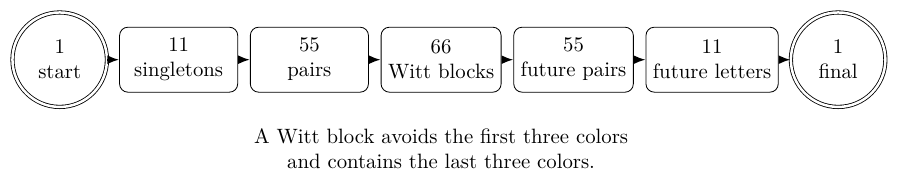}
\caption{The seven layers of the Witt gadget. A middle block avoids the first three colors and contains the final three colors.}
\label{fig:witt-layers}
\end{figure}

Applying \Cref{prop:separator-gadget} with $a=b=3$ and using \Cref{lem:three-three} yields the following.

\begin{proposition}[Witt gadget]\label{prop:witt}
The automaton obtained from the Witt design is a complete gadget of capacity six over $X$. Its polynomials are
\[
  A_W(z)=1+11z+55z^2+66z^3+55z^4+11z^5+z^6\ ,
  \]
\[
  B_W(z)=\sum_{j=0}^{6}\binom{11}{j}z^j
  =1+11z+55z^2+165z^3+330z^4+462z^5+462z^6\ .
  \]
In particular, the gadget has $A_W(1)=200$ states.
\end{proposition}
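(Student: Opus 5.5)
The statement follows by instantiating \Cref{prop:separator-gadget} with $C=X$, $c=11$ and $a=b=3$. The hypothesis $c\ge a+b$ holds because $11\ge 6$. The separating family is $\cF=\cB$, and \Cref{lem:three-three} shows that $\cB$ is an $(11,3,3)$-separating family. The proposition then gives a layered NFA over $X$ whose language is exactly $\distlang{6}{X}$. By \Cref{def:gadget}, this is a complete gadget of capacity six.

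Next I read off the layer sizes from \eqref{eq:separator-profile}. The first layers have sizes $\binom{11}{0},\binom{11}{1},\binom{11}{2}$, the middle layer has size $|\cB|$, and the last layers have sizes $\binom{11}{2},\binom{11}{1},\binom{11}{0}$. The count $|\cB|=66$ needs a short justification. I would double count pairs (four-set, block containing it): each of the $\binom{11}{4}=330$ four-sets lies in exactly one block, and each block contains $\binom54=5$ four-sets. Hence $|\cB|=330/5=66$. This identity presupposes that the $66$ translates $B_i+a$ are pairwise distinct, and the double count above also shows this. The profile is therefore $1,11,55,66,55,11,1$, which gives $A_W(z)$ as stated.

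The set polynomial $B_W$ is given directly by \Cref{prop:separator-gadget}, namely $\sum_{j=0}^{6}\binom{11}{j}z^j$. Its coefficients are the routine binomial values $1,11,55,165,330,462,462$. Finally, the number of states is
\[
A_W(1)=1+11+55+66+55+11+1=200.
\]

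The main obstacle is the correctness of the cited Witt property, namely that every four-set lies in exactly one block. The paper takes this from the literature. If it must be checked, I would verify it by computing the differences of the base blocks under the cyclic action, or simply by machine enumeration. Everything else in the argument is bookkeeping.
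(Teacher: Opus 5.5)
Your proposal is correct and follows the paper's argument. Like the paper, you instantiate \Cref{prop:separator-gadget} with $a=b=3$, use \Cref{lem:three-three} for the separating property, and read off the profile $1,11,55,66,55,11,1$ and the binomial set polynomial; your double-counting justification of $|\cB|=\binom{11}{4}/\binom54=66$, including that the translates are distinct, matches the count the paper states inline.
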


\section{The gadget-amplification framework}\label{sec:fixed-profile}

This section formally develops the gadget-amplification framework and proves the first bound below $4^k$.
Starting from a complete local gadget of capacity $r$, we take the product of many copies and use a family of hashes to assign every input symbol to one copy and one local color.
A hash succeeds on an input when each copy receives at most $r$ symbols and its resulting local projection is accepted by the gadget.
In this section, we use full-state accounting: for each hash function, we retain the first $k$ \mbox{layers of the raw product.}

The argument has six steps.
\begin{enumerate}[label=\arabic*.,leftmargin=2.4em]
  \item A hash sends each input symbol to one of $N$ copies of the gadget and to one local color.
  \item A successful hash gives each copy a repetition-free local color set whose size does not exceed the gadget capacity.
  \item We prescribe how many copies have each possible load.  This prescribed histogram describes one subset of all successful hashes.
  \item We count that subset using a multinomial coefficient and local-set counts.
  \item We use enough hashes to cover every set of at most $k$ input symbols.
  \item We multiply the number of hashes by the size of one product automaton.
\end{enumerate}

\subsection{One product automaton for one hash function}

Fix a complete gadget $G$ of capacity $r$ over a color set $C$ of size $c$.  Let $s(G)$ be its number of states.  Fix a positive integer $N$ and a function
\[
  h:[n]\longrightarrow[N]\times C\ .
\]
Write
\[
  h(x)=(b_h(x),\gamma_h(x))\ .
\]
The value $b_h(x)$ chooses one of the $N$ copies of $G$, and $\gamma_h(x)$ is the color read by that copy.  When the product automaton reads $x$, only copy $b_h(x)$ moves.  Every other copy remains in its current state.

\begin{lemma}[One fixed map]\label{lem:fixed-map}
Let $d_G$ be the maximum, over a local state and a local color, of the number of possible successor states.  For every $h:[n]\to[N]\times C$ and every cutoff $k$, there is a layered NFA $A_h$ over $[n]$ with at most $s(G)^N$ states and at most
\[
  d_G n s(G)^N
\]
ordinary transitions.  For every word of length at most $k$, $A_h$ accepts exactly when the symbol sequence received by every local copy corresponds to a path in that copy of $G$.
\end{lemma}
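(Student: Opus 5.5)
The construction of $A_h$ is the raw product of \Cref{lem:raw-product}, relabeled through $h$ and truncated at rank $k$. Let $G_1,\ldots,G_N$ be copies of $G$, and let copy $i$ have layers $Q_0^{(i)},\ldots,Q_r^{(i)}$. The states of $A_h$ are the tuples $(q_1,\ldots,q_N)\in Q(G)^N$ whose total rank $\sum_i \mathrm{rank}(q_i)$ is at most $k$. This gives at most $s(G)^N$ states. The initial state is the tuple of initial states, and every state is accepting.

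The transitions are defined symbol by symbol. For each global symbol $x\in[n]$, set $i=b_h(x)$ and $\gamma=\gamma_h(x)$. From $(q_1,\ldots,q_N)$ we add a transition labelled $x$ to $(q_1,\ldots,q_i',\ldots,q_N)$ for every local transition $(q_i,\gamma,q_i')$ of $G$, provided the target still has total rank at most $k$. Layering follows because each local transition raises its local rank by one, so the total rank rises by exactly one. We take the layers of $A_h$ to be the total-rank classes $0,\ldots,k$; these have the form required by the definition of a layered NFA, although some layers may be empty.

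For the transition count, fix a source state and a symbol $x$. Only coordinate $b_h(x)$ can move, and it has at most $d_G$ successors on color $\gamma_h(x)$. Summing over the at most $s(G)^N$ source states and the $n$ symbols gives at most $d_G n s(G)^N$ transitions.

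For the acceptance claim, fix a word $w=x_1\cdots x_\ell$ with $\ell\le k$. For each copy $i$, let $w^{(i)}$ be the subsequence of the colors $\gamma_h(x_j)$ over those positions $j$ with $b_h(x_j)=i$, read in order. Suppose every $w^{(i)}$ labels a path in $G$ from its initial state. Then the interleaving argument from \Cref{lem:raw-product} combines these into a run of $A_h$: at step $j$ we advance copy $b_h(x_j)$ along its own path. The total rank after $j$ steps is $j\le\ell\le k$, so no state on this run is removed by the truncation. Conversely, suppose $A_h$ has a run on $w$. Projecting the run onto coordinate $i$ keeps exactly the steps that read symbols hashed to copy $i$, and each of these steps is a local transition on the corresponding color. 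The projection is therefore a path in $G$ labelled $w^{(i)}$. Since all states are accepting, acceptance is the same as having a run.

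I do not expect any step to be a real obstacle. The only points needing care are the truncation at rank $k$ and the bookkeeping for the layers, and both are handled by noting that the total rank equals the number of symbols read. Nothing in this lemma requires $G$ to be complete; completeness is used only later, when the projections are required to be accepted.
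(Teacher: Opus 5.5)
Your proposal is correct and follows essentially the same route as the paper: you truncate the raw product to tuples of total rank at most $k$, move only coordinate $b_h(x)$ on color $\gamma_h(x)$, count at most $d_G$ successors per (tuple, symbol) pair, and get layering from the fact that the total rank rises by one per step. The only difference is that you write out both directions of the acceptance claim (interleaving and projection), which the paper dismisses as immediate from the coordinatewise construction.
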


\begin{proof}
A product state is an $N$-tuple of local states. Its rank is the sum of the local ranks, which tracks the number of symbols processed globally.  We retain only tuples of rank at most $k$, and all states are accepting.

Suppose the current tuple is $(q_1,\ldots,q_N)$ and the next input symbol is $x$. Let $h(x)=(i,a)$. The product may replace $q_i$ by any local successor of $q_i$ under color $a$; all other coordinates are unchanged. This is exactly the raw product described above.

There are at most $s(G)^N$ tuples.  For a fixed tuple and a fixed input symbol $x$, only one coordinate moves, and it has at most $d_G$ possible successors.
Thus there are at most $d_Gns(G)^N$ transitions.  Every transition increases the total rank by one, so the automaton is layered.  The acceptance statement is immediate from the coordinatewise construction.
\end{proof}

Two simple facts will be used repeatedly.  First, if an original symbol $x$ occurs twice, both occurrences are sent to the same pair $(b_h(x),\gamma_h(x))$.  The same local copy therefore sees the same color twice.  Since every path in a complete gadget is repetition-free, the product rejects.
Second, for a set $S$ of distinct original symbols, we say that $h$ \emph{succeeds on $S$} if every copy receives a repetition-free local color set of size at most $r$; in that case the product accepts every ordering of $S$.

Write
\[
  B(z)\triangleq B_G(z)=\sum_{j=0}^{r}B_jz^j\ ,
\]
so $B_j\triangleq [z^j]B(z)$ is the coefficient of $z^j$, namely the number of local color sets of size $j$.  Fix a $k$-element set $S\subseteq[n]$ and choose $h$ uniformly from all functions $[n]\to[N]\times C$.

\begin{lemma}[Exact success probability]\label{lem:success}
The probability \mbox{that $h$ succeeds on $S$ is}
\begin{equation}\label{eq:success}
  p_{k,N}
  =\frac{k!}{(cN)^k}[z^k]B(z)^N\ .
\end{equation}
\end{lemma}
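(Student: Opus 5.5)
The plan is to count successful assignments of the $k$ symbols of $S$ directly. Since $h$ is uniform over all functions $[n]\to[N]\times C$, the restriction $h|_S$ is uniform over the $(cN)^k$ functions $S\to[N]\times C$, and success depends only on $h|_S$. So
\[
  p_{k,N}=\frac{\#\{\text{successful }g:S\to[N]\times C\}}{(cN)^k}\ ,
\]
and it suffices to show that the numerator equals $k!\,[z^k]B(z)^N$.

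\textbf{Characterizing success.} A map $g$ succeeds exactly when it is injective and every copy receives at most $r$ symbols. If two symbols of $S$ share a copy and a color, that copy sees a repeated color. Conversely, if $g$ is injective with loads at most $r$, each copy receives a repetition-free local set of size at most $r$. By the complete-gadget property, every ordering of such a set is accepted, as used in the remark preceding the lemma.

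\textbf{Counting.} An injective $g$ is determined by its image, a $k$-subset $T\subseteq[N]\times C$, together with a bijection $S\to T$. There are $k!$ such bijections, and success depends only on $T$. A subset $T$ is admissible exactly when $|T\cap(\{i\}\times C)|\le r$ for every $i$. Such $T$ decompose uniquely as $T=\bigsqcup_i \{i\}\times T_i$ with $T_i\subseteq C$, $|T_i|\le r$ and $\sum_i|T_i|=k$. The number of choices of $T_i$ with $|T_i|=j$ is $B_j=\binom cj$ for $j\le r$ and $0$ otherwise. Hence the number of admissible $T$ is
\[
  \sum_{j_1+\cdots+j_N=k}\prod_i B_{j_i}=[z^k]B(z)^N\ .
\]
This is the same computation as the second polynomial in \Cref{lem:raw-product}, applied to $N$ copies on the disjoint palettes $\{i\}\times C$. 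Multiplying by $k!$ and dividing by $(cN)^k$ gives \eqref{eq:success}.

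\textbf{Main obstacle.} The only delicate point is making precise that success is equivalent to "injective with loads at most $r$". Everything else is a routine generating-function identity.
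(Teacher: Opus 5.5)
Your proposal is correct and follows essentially the same route as the paper. Both count successful restrictions $h|_S$ by choosing one local color set per copy (giving $[z^k]B(z)^N$ admissible images), multiply by the $k!$ bijections from $S$ onto those cells, and divide by $(cN)^k$; your remarks that success depends only on $h|_S$ and is equivalent to injectivity with loads at most $r$ just make explicit what the paper leaves implicit.
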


\begin{proof}
We count successful restrictions of $h$ to the labeled elements of $S$.

For each copy, choose one local color set.  If copy $i$ receives a set of size $j$, that choice contributes $B_jz^j$.  Therefore $[z^k]B(z)^N$ counts the choices of one set in each copy whose total size is exactly $k$.

Such a choice specifies exactly $k$ distinct copy-color cells.  The $k$ labeled elements of $S$ can be assigned bijectively to those cells in $k!$ ways.  Hence the number of successful hashes is $k![z^k]B(z)^N$.  The total number of hashes restricted to $S$ is $(cN)^k$, which gives \eqref{eq:success}.
\end{proof}

Next, we analyze the required number of hashes.

\begin{lemma}[A family covering all sets up to size $k$]\label{lem:cover}
For $p_{k,N}>0$, there is a family $\cH$ of
\[
  O\!\left(p_{k,N}^{-1}k\log(2n)\right)
\]
functions from $[n]$ to $[N]\times C$ such that every subset of $[n]$ of size at most $k$ succeeds for at least one function in $\cH$.
\end{lemma}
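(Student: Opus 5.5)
The plan is a standard probabilistic-method argument with a union bound, using \Cref{lem:success} for sets of size exactly $k$ and a monotonicity observation for smaller sets.

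\textbf{Reduction to $k$-sets.} I first reduce to sets of size exactly $k$. Suppose $h$ succeeds on a set $S'$ and $S\subseteq S'$. Then every copy receives a subset of the local color set it received under $S'$. That subset is still repetition-free and of size at most $r$. So $h$ also succeeds on $S$, because a complete gadget accepts every ordering of every color set of size at most $r$. Since $k\le n$, every subset of $[n]$ of size at most $k$ is contained in some $k$-subset. It therefore suffices to build $\cH$ so that every $k$-subset of $[n]$ has a successful hash in $\cH$.

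\textbf{Random family and union bound.} Next I choose $T\triangleq\lceil p_{k,N}^{-1}k\ln(2n)\rceil$ hashes $h_1,\ldots,h_T$ independently and uniformly from all functions $[n]\to[N]\times C$. By \Cref{lem:success}, each $h_t$ succeeds on a fixed $k$-set $S$ with probability exactly $p_{k,N}$; this probability depends only on $|S|$, since only the restriction of $h$ to $S$ matters. By independence, the probability that no $h_t$ succeeds on $S$ is
\[
(1-p_{k,N})^T\le \e^{-p_{k,N}T}\le (2n)^{-k}.
\]
A union bound over the $\binom nk\le n^k$ subsets of size $k$ bounds the probability that some $k$-set is uncovered by $n^k(2n)^{-k}=2^{-k}<1$. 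Hence some choice of the $T$ hashes covers every $k$-set. By the reduction above, this family covers every set of size at most $k$. Its size is $T=O(p_{k,N}^{-1}k\log(2n))$; the ceiling is absorbed because $p_{k,N}\le 1$ and $k\log(2n)\ge\log 2>0$.

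\textbf{Main obstacle.} There is no real obstacle here. The only point needing care is the monotonicity step: it relies on completeness of $G$ (\Cref{def:gadget}), because completeness ensures that smaller local loads are always accepted. This guarantee about constructing the family is existential; a deterministic construction is deferred to \Cref{app:derandomization}.
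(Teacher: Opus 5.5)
Your proposal is correct and matches the paper's proof: reduce to $k$-sets because success is preserved under deleting elements, then take $\Theta(p_{k,N}^{-1}k\log(2n))$ independent uniform hashes and union-bound over the at most $n^k$ sets of size $k$. Your $T=\lceil p_{k,N}^{-1}k\ln(2n)\rceil$, with failure probability at most $2^{-k}$, differs from the paper's $T>p_{k,N}^{-1}(k\log n+1)$ only in constants; note also that downward closure of success follows directly from its combinatorial definition (no copy--color collisions and load at most $r$ per copy), so completeness of $G$ is needed only to turn success into acceptance, not for the monotonicity step itself.
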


\begin{proof}
It suffices to cover every $k$-set: every smaller set can be extended to a $k$-set, and success is preserved when elements are deleted.
Choose $T$ independent random functions.  A fixed $k$-set is missed by all of them with probability at most
\[
  (1-p_{k,N})^T\le \exp(-p_{k,N}T)\ .
\]
There are $\binom nk\le n^k$ sets of size $k$.  Therefore the expected number of uncovered $k$-sets is at most
\[
  n^k\exp(-p_{k,N}T)\ .
\]
This quantity is smaller than one once
\[
  T>p_{k,N}^{-1}(k\log n+1)\ .
\]
Thus a family covering every $k$-set exists.
\end{proof}

Taking the nondeterministic union of the automata $A_h$, one for each $h\in\cH$, recognizes $\distlang{k}{n}$.  Concretely, let $q_h$ be the initial state of $A_h$.  Take disjoint copies of the $A_h$, delete each $q_h$, and introduce a new initial state $q_0$.  For every transition $q_h \xrightarrow{a} v$ in $A_h$, add $q_0 \xrightarrow{a} v$ with the same label $a$.  Retain every transition whose endpoints were not deleted.  All states are accepting, including $q_0$.  Thus every transition still reads exactly one input symbol, and no $\varepsilon$-transitions are introduced.
It remains to lower-bound the success probability using a chosen load histogram and combine the resulting hash-family bound with \Cref{lem:fixed-map}.

\subsection{Load histograms and finite templates}

A \emph{load histogram} records how many copies receive each possible number of colors. We use $K$ for the auxiliary length used in the analysis; later we choose $K=k+O(1)$.
If $N$ copies jointly receive $K$ colors, let $\eta_j$ denote the number of copies that receive exactly $j$ colors.  Necessarily
\begin{equation}\label{eq:histogram-constraints}
  \sum_{j=0}^{r}\eta_j=N,
  \qquad
  \sum_{j=0}^{r}j\eta_j=K\ .
\end{equation}
Every nonnegative integer vector satisfying \eqref{eq:histogram-constraints} is a possible load histogram.

Expanding $B(z)^N$ by histograms gives the exact identity
\begin{equation}\label{eq:all-histograms}
  [z^K]B(z)^N
  =
  \sum_{\substack{\eta_0,\ldots,\eta_r\ge0\\
                  \sum_j\eta_j=N\\
                  \sum_jj\eta_j=K}}
  \binom{N}{\eta_0,\ldots,\eta_r}
  \prod_{j=0}^{r}B_j^{\eta_j}\ .
\end{equation}
The multinomial coefficient chooses which copies have each load.  After those loads are fixed, the factor $B_j^{\eta_j}$ chooses one $j$-element local color set for every copy of load $j$.

Every summand in \eqref{eq:all-histograms} is nonnegative.  We may therefore obtain a lower bound by choosing any one feasible histogram and discarding all other successful assignments.

To specify such a histogram as the input length grows, we choose nonnegative constant integers
\[
  m_0,m_1,\ldots,m_r\ ,
\]
with $m_j>0$ only when $B_j>0$.

Think of this vector as one finite template.
Define
\[
  D\triangleq\sum_{j=0}^{r}m_j,
  \qquad
  M\triangleq\sum_{j=0}^{r}jm_j>0\ .
\]
Thus one copy of the template uses $D$ gadget copies and carries $M$ input symbols. For a positive integer $t$, repeat the template $t$ times. Then
\begin{equation}\label{eq:repeated-template}
  N=tD,
  \qquad
  K=tM,
  \qquad
  \eta_j=tm_j\ .
\end{equation}
This is why we distinguish $K$ from $k$: repeating the template yields the coefficient lower bound only for lengths $K$ divisible by $M$.  For an arbitrary target length $k$, in \Cref{subsec:coefficient-to-nfa} we let $K$ be the least multiple of $M$ with $K\ge k$, temporarily add $K-k$ fresh symbols, construct the automaton for length $K$, then delete every transition labeled by a fresh symbol, and retain only layers $0,\ldots,k$.  Since $M$ is fixed, $0\le K-k<M$, and hence $K=k+O(1)$.

It is useful to normalize the template by setting
\[
  p_j\triangleq\frac{m_j}{D},
  \qquad
  \lambda\triangleq\frac{D}{M}\ .
  \]
Write $\mathbf p\triangleq(p_0,\ldots,p_r)$.

Here $p_j$ is the fraction of copies with load $j$, while $\lambda$ is the number of gadget copies per input symbol.  The relations in \eqref{eq:repeated-template} imply
\begin{equation}\label{eq:profile-identities}
  \sum_jp_j=1,
  \qquad
  \sum_jjp_j=\frac{M}{D}=\frac1\lambda,
  \qquad
  tD=\lambda K\ .
\end{equation}
The last identity converts quantities measured per gadget copy into quantities measured per input symbol.

\begin{center}
\small
\begin{tabularx}{0.94\textwidth}{@{}lX@{}}
\toprule
symbol & meaning \\
\midrule
$m_j$ & copies of load $j$ in one finite template \\
$D=\sum_jm_j$ & total gadget copies in one template \\
$M=\sum_jjm_j$ & total input symbols carried by one template \\
$t$ & number of repetitions of the template \\
$N=tD$ & total number of gadget copies in the product \\
$K=tM$ & input length analyzed by the repeated template \\
$p_j=m_j/D$ & fraction of copies whose load is $j$ \\
$\lambda=D/M=N/K$ & number of gadget copies per input symbol \\
\bottomrule
\end{tabularx}
\end{center}

\subsection{Counting the chosen histogram, one factor at a time}\label{sec:single-hist-lb}

Substitute the histogram $\eta_j=tm_j$ into one summand of \eqref{eq:all-histograms}.  We obtain
\begin{equation}\label{eq:type-lower}
  [z^K]B(z)^N
  \ge
  \binom{tD}{tm_0,\ldots,tm_r}
  \prod_{j=0}^{r}B_j^{tm_j}\ .
\end{equation}
The multinomial coefficient assigns the loads: for every $j$, it chooses which $tm_j$ copies have load $j$.  Once the loads are fixed, each of those copies has $B_j$ choices for its local color set, giving the second factor $\prod_{j=0}^r B_j^{tm_j}$.  We now lower-bound the two factors on the right-hand side of \eqref{eq:type-lower} separately.

\paragraph{The multinomial factor}
Let $J\triangleq\{j:m_j>0\}$. For a positive integer $u$, Stirling's formula in base-two logarithmic form is
\begin{align*}
    \log_2u! &\ge u\log_2u-(\log_2\e)u\ .\\
    \log_2u! &\le u\log_2u-(\log_2\e)u+O(\log u)\ .
\end{align*}
Apply the lower bound to $(tD)!$ and the upper bound to every $(tm_j)!$ with $j\in J$. Then
\begin{equation}\label{eq:multinomial-after-stirling}
\begin{aligned}
\log_2\binom{tD}{tm_0,\ldots,tm_r}
&=\log_2(tD)!-\sum_{j\in J}\log_2(tm_j)!\\
&\ge tD\log_2(tD)-(\log_2\e)tD\\
&\quad-\sum_{j\in J}\left(
tm_j\log_2(tm_j)-(\log_2\e)tm_j+O(\log(tm_j))
\right)\\
&\ge tD\log_2(tD)-\sum_{j\in J}tm_j\log_2(tm_j)
-O(\log t)\ .
\end{aligned}
\end{equation}
The last inequality uses $\sum_{j\in J}m_j=D$ to cancel the linear terms.
Since $J$ and all positive $m_j$ are fixed, the sum of the denominator error terms is $O(\log t)$.
Now use $m_j=Dp_j$:
\begin{align*}
\sum_{j\in J}tm_j\log_2(tm_j)
&=tD\sum_{j\in J}p_j\log_2(tDp_j)\\
&=tD\log_2(tD)\sum_{j\in J}p_j
+tD\sum_{j\in J}p_j\log_2p_j\\
&=tD\log_2(tD)
+tD\sum_{j\in J}p_j\log_2p_j\ .
\end{align*}
Substituting the preceding identity into \eqref{eq:multinomial-after-stirling} gives
\begin{equation}\label{eq:multinomial-entropy}
\log_2\binom{tD}{tm_0,\ldots,tm_r}
\ge tD\left(-\sum_{j\in J}p_j\log_2p_j\right)
-O(\log t)\ .
\end{equation}

\paragraph{The local-set factor}
Taking logarithms of the second factor in \eqref{eq:type-lower},
\begin{equation}\label{eq:local-set-contribution}
\begin{aligned}
  \log_2\left(\prod_{j=0}^{r}B_j^{tm_j}\right)
  &=\sum_{j\in J}tm_j\log_2B_j\\
  &=tD\sum_{j\in J}p_j\log_2B_j\ .
\end{aligned}
\end{equation}

\paragraph{Combining the two factors}
Define
\begin{equation}\label{eq:Psi}
  \Psi_B(\mathbf p)
  \triangleq -\sum_{j:p_j>0}p_j\log_2p_j
   +\sum_{j:p_j>0}p_j\log_2B_j\ .
\end{equation}

The first sum is the per-copy logarithmic contribution from placing the loads among the copies. The second is the per-copy logarithmic contribution from choosing the local color sets after the loads are fixed.

Add \eqref{eq:multinomial-entropy} and \eqref{eq:local-set-contribution}. By the definition of $\Psi_B$,
\begin{equation}\label{eq:coefficient-before-rescale}
\log_2[z^K]B(z)^N
\ge tD\Psi_B(\mathbf p)-O(\log t)\ .
\end{equation}
Since $M$ is fixed and $K=tM$, we have $t=\Theta(K)$, and hence $O(\log t)=o(K)$. Also, \eqref{eq:profile-identities} gives $tD=\lambda K$. Therefore
\begin{equation}\label{eq:coefficient-lower-fixed}
\log_2[z^K]B(z)^N
\ge \lambda K\Psi_B(\mathbf p)-o(K)\ .
\end{equation}

\subsection{From the coefficient to the final NFA size}\label{subsec:coefficient-to-nfa}

We now translate the coefficient lower bound into the number of hashes and then combine it with the product-state cost.
Recall that
\[
  B(z)=\sum_{j=0}^{r}B_jz^j\qquad ,
  \qquad B_j=[z^j]B(z)\ .
\]
Thus $B$ is the generating polynomial for local color sets, while $B_j$ is the number of sets of the particular size $j$.

\begin{theorem}[Amplification from a fixed load histogram]\label{thm:profile}
For the choices above, there is an acyclic NFA recognizing $\distlang{k}{n}$ of size
\[
  2^{(E_{\mathrm{fix}}+o(1))k}n^{O(1)}\ ,
\]
where
\begin{equation}\label{eq:profile-exponent}
  E_{\mathrm{fix}}
  \triangleq\lambda\log_2s(G)
   +\max\left\{0,
      \log_2(c\lambda\e)-\lambda\Psi_B(\mathbf p)
    \right\}\ .
\end{equation}
\end{theorem}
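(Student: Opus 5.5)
The plan is to assemble the earlier lemmas in this order: padding, the coefficient bound, the success probability, the hash-family size, and finally the product size and union. Fix the target $k$ and let $K$ be the least multiple of $M$ with $K\ge k$; write $K=tM$, $N=tD=\lambda K$, and note $K-k<M=O(1)$.

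\textbf{Step 1 (padding).} Enlarge the alphabet to $[n']$ with $n'=n+(K-k)$ by adding $K-k$ fresh symbols. Fix a $K$-element target set consisting of an arbitrary $k$-set $S\subseteq[n]$ together with all fresh symbols.

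\textbf{Step 2 (success probability).} Combine \Cref{lem:success} with \eqref{eq:coefficient-lower-fixed}:
\[
  \log_2 p_{K,N}^{-1}\le K\log_2(cN)-\log_2K!-\lambda K\Psi_B(\mathbf p)+o(K).
\]
Use Stirling in the form $\log_2K!\ge K\log_2K-K\log_2\e$ and substitute $N=\lambda K$. The terms $K\log_2 K$ cancel and leave
\[
  \log_2p_{K,N}^{-1}\le K\bigl(\log_2(c\lambda\e)-\lambda\Psi_B(\mathbf p)\bigr)+o(K).
\]

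\textbf{Step 3 (hash family).} Apply \Cref{lem:cover} over $[n']$. This covers every subset of size at most $K$, and in particular every set $S$ together with the fresh symbols. The family has size $p_{K,N}^{-1}O(K\log(2n'))$. Since $\log_2|\cH|\ge0$ anyway, we may write $|\cH|\le 2^{\max\{0,\,\log_2(c\lambda\e)-\lambda\Psi_B\}K+o(K)}\,n^{O(1)}$. This is the source of the $\max\{0,\cdot\}$ term.

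\textbf{Step 4 (one product automaton).} For each $h\in\cH$, \Cref{lem:fixed-map} gives an automaton of size at most $(d_G n'+1)s(G)^N$. Here $s(G)^N=2^{\lambda K\log_2 s(G)}$ and $d_G$ is constant. So each $A_h$ has size $2^{\lambda\log_2 s(G)K}n^{O(1)}$.

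\textbf{Step 5 (union and cleanup).} Form the nondeterministic union with the fresh initial state, as described after \Cref{lem:cover}. Then delete every transition labeled by a fresh symbol and keep only layers $0,\ldots,k$.

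\textbf{Step 6 (correctness).} \emph{Soundness:} any accepted word over $[n]$ is accepted by some $A_h$. Because gadget paths are repetition-free, a repeated symbol would give a repeated local color in one copy, so the word is repetition-free. Its length is at most $k$ because of the layer cutoff.
\emph{Completeness:} a repetition-free word over $[n]$ of length at most $k$ has symbol set $S$. Some $h\in\cH$ succeeds on $S$ together with the fresh symbols, hence on $S$ alone, since success is preserved under deletion. Then $A_h$ accepts every ordering of $S$, read along a path that uses no fresh symbols and stays within layers at most $k$.

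\textbf{Step 7 (final bound).} Multiply the bounds from Steps 3 and 4, and use $K=k+O(1)$ to absorb constants into $o(k)$. This gives size $2^{(E_{\mathrm{fix}}+o(1))k}n^{O(1)}$.

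\textbf{Main obstacle.} The delicate point is the bookkeeping in Step 2: lower-bounding $K!$ correctly so that the $K\log K$ terms from $(cN)^K$ and $K!$ cancel exactly, leaving the constant $\log_2(c\lambda\e)$. A secondary care point is that \Cref{lem:cover} must be applied with length $K$ over the padded alphabet, not with $k$.
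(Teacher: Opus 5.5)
Your proposal is correct and follows essentially the same route as the paper. You bound one product automaton by $(1+d_Gn)s(G)^N$, combine \Cref{lem:success} with \eqref{eq:coefficient-lower-fixed} and Stirling to get $\log_2 p_{K,N}^{-1}\le K(\log_2(c\lambda\e)-\lambda\Psi_B(\mathbf p))+o(K)$, apply \Cref{lem:cover}, multiply, and remove divisibility by padding with $K-k$ fresh symbols and truncating to layers $0,\ldots,k$. The differences are cosmetic: you use a one-sided Stirling bound and spell out soundness and completeness after padding. The $\max\{0,\cdot\}$ needs no justification, since it is just the weaker bound $X\le\max\{0,X\}$.
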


\begin{proof}
We first prove the claim for lengths $K=tM$, using $N=tD=\lambda K$ copies.

\paragraph{Step 1: the cost of one product automaton}
By \Cref{lem:fixed-map}, one hash uses at most $s(G)^N$ states, and its transition count is at most $d_Gn$ times larger. Therefore the total size of one product automaton, counting both states and transitions, is at most
\begin{equation}\label{eq:one-branch-size}
  (1+d_Gn)s(G)^N
  =(1+d_Gn)2^{\lambda K\log_2s(G)}
  \le 2^{\lambda K\log_2s(G)}n^{O(1)}\ .
\end{equation}

\paragraph{Step 2: the number of hashes}
By \Cref{lem:success},
\begin{equation}\label{eq:inverse-success}
  p_{K,N}^{-1}
  =\frac{(cN)^K}{K!\,[z^K]B(z)^N}\ .
\end{equation}
We first estimate the factorial ratio $(cN)^K/K!$.  Since $N=\lambda K$, Stirling's formula gives
\begin{equation}\label{eq:factorial-ratio}
\begin{aligned}
  \log_2\frac{(cN)^K}{K!}
  &=K\log_2(c\lambda K)-\log_2K!\\
  &=K\log_2(c\lambda K)
    -K\log_2K+(\log_2\e)K+O(\log K)\\
  &=K\log_2(c\lambda\e)+O(\log K)\ .
\end{aligned}
\end{equation}
Subtracting the coefficient lower bound \eqref{eq:coefficient-lower-fixed} from \eqref{eq:factorial-ratio} yields
\[
  \log_2p_{K,N}^{-1}
  \le
  K\left(\log_2(c\lambda\e)
    -\lambda\Psi_B(\mathbf p)\right)+o(K)\ .
  \]

By \Cref{lem:cover} and the preceding inequality, there is a covering family $\cH$ with
\begin{equation}\label{eq:hash-rate}
\begin{aligned}
  |\cH|
  &=O\!\left(p_{K,N}^{-1}K\log(2n)\right)\\
  &\le 2^{\left(\max\left\{0,\log_2(c\lambda\e)-\lambda\Psi_B(\mathbf p)\right\}+o(1)\right)K}n^{O(1)}\ .
\end{aligned}
\end{equation}
The multiplicative factor $K\log(2n)$ is folded into the $n^{O(1)}$ term.

\paragraph{Step 3: combine the branches}
As explained, the final NFA is the nondeterministic union of one product automaton per hash.
Its size is therefore bounded by
\[
  (\text{number of hashes})
  \times
  (\text{size of one product automaton})\ ,
\]
up to polynomial factors.  Multiplication adds the two exponents in \eqref{eq:one-branch-size} and \eqref{eq:hash-rate}.  Their sum is exactly $E_{\mathrm{fix}}$ in \eqref{eq:profile-exponent}.  This proves the theorem for $K=tM$.

\paragraph{Step 4: remove the divisibility assumption}
For an arbitrary target length $k$, let $K$ be the least multiple of $M$ with $K\ge k$.  Add $K-k$ fresh symbols temporarily and write
\[
  \npad\triangleq n+K-k
\]
for the padded alphabet size.  Construct the automaton for $\distlang{K}{\npad}$.  Then delete every transition labeled by a fresh symbol and retain only layers $0,\ldots,k$.  The resulting automaton recognizes $\distlang{k}{n}$.  Since $M$ is fixed, $K=k+O(1)$, and since $n\ge k$, we also have $\npad=O(n)$.  The asymptotic bound is unchanged.
\end{proof}

\subsection{The Witt gadget under the fixed-histogram analysis}
To obtain a concrete NFA size bound, we apply \Cref{thm:profile} to the Witt gadget using the load template in \Cref{tab:witt-loads}.
\Cref{app:witt-histogram-selection} explains how this template was selected; that optimization is not needed to verify the bound.

\begin{table}[H]
\centering
\caption{The repeated load template for the Witt gadget.  One template uses $D=1000$ copies.  Repeating it $t$ times gives $tm_j$ copies of each displayed load.}
\label{tab:witt-loads}
\begin{tabular}{@{}rrrrrrrr@{}}
\toprule
load $j$ &0&1&2&3&4&5&6\\
\midrule
local sets $B_j=\binom{11}{j}$ &1&11&55&165&330&462&462\\
copies in one template $m_j$ &0&0&5&28&107&292&568\\
\bottomrule
\end{tabular}
\end{table}

The proof above uses the single-histogram bound from~\Cref{sec:single-hist-lb}. Because the Witt gadget has only seven possible loads, one can also evaluate the full sum in~\eqref{eq:all-histograms}; this gives the slightly better bound $O^*(3.9661^k)$. We omit the calculation because the later sections give stronger bounds.

\begin{corollary}[Fixed-histogram Witt bound]\label{cor:witt-fixed}
There is an acyclic NFA recognizing $\distlang{k}{n}$ of size
\[
  3.967^k n^{O(1)}\ .
\]
\end{corollary}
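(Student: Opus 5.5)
The plan is a direct instantiation of \Cref{thm:profile}. We take $G$ to be the Witt gadget of \Cref{prop:witt}, which has $c=11$, $r=6$, $s(G)=A_W(1)=200$ and $B_j=\binom{11}{j}$. We use the template of \Cref{tab:witt-loads}. The hypotheses of the theorem hold: every $m_j$ is a nonnegative constant, and $m_j>0$ only for $j\in\{2,\dots,6\}$, where $B_j>0$. The template gives $D=1000$ and
\[
  M=2\cdot5+3\cdot28+4\cdot107+5\cdot292+6\cdot568=5390,
\]
so $\lambda=D/M=100/539\approx0.18553$ and $\mathbf p=(0,0,0.005,0.028,0.107,0.292,0.568)$. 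The theorem then gives an NFA of size $2^{(E_{\mathrm{fix}}+o(1))k}n^{O(1)}$. It remains to show $E_{\mathrm{fix}}<\log_2 3.967\approx1.98804$. Any strict inequality suffices, because the $o(1)$ term is then eventually below the gap, and the finitely many small $k$ are absorbed into the constant.

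Next I evaluate the three ingredients of \eqref{eq:profile-exponent}.
\begin{itemize}
\item The state cost is $\lambda\log_2 200\approx0.18553\cdot7.6439\approx1.4182$.
\item The hashing baseline is $\log_2(11\lambda\e)\approx\log_2 5.5475\approx2.4718$.
\item For $\Psi_B(\mathbf p)$ from \eqref{eq:Psi}, the entropy part is $-\sum_j p_j\log_2p_j\approx0.0382+0.1444+0.3450+0.5186+0.4636\approx1.5098$. The local-set part is $\sum_j p_j\log_2\binom{11}{j}\approx0.0289+0.2063+0.8952+0.86\cdot\log_2462\approx8.743$. Hence $\Psi_B(\mathbf p)\approx10.253$ and $\lambda\Psi_B(\mathbf p)\approx1.9022$.
\end{itemize}

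The maximum in \eqref{eq:profile-exponent} is therefore attained by the positive term, about $2.4718-1.9022\approx0.5696$. This gives $E_{\mathrm{fix}}\approx1.9878$, and $2^{E_{\mathrm{fix}}}\approx3.9665<3.967$.

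The only real obstacle is numerical. The margin $\log_2 3.967-E_{\mathrm{fix}}$ is only about $2\cdot10^{-4}$, so the rough figures above are not a proof. I would redo the computation with rigorous interval bounds on each logarithm, to about six significant digits. That means upper bounds on $\log_2 200$, $\log_2(11\lambda\e)$ and each $-p_j\log_2 p_j$ contribution, together with lower bounds on each $\log_2\binom{11}{j}$. Since the quantities are exact rationals and logarithms of small integers, this check is routine. It can alternatively be done by exact comparison: raise both sides to the power $M=5390$ to get an integer inequality between products of powers of small integers and $\e$, then bound $\e$ rationally.
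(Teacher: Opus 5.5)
Your proposal is correct and matches the paper's proof. Both instantiate \Cref{thm:profile} with the Witt gadget and the template of \Cref{tab:witt-loads}, compute $D=1000$, $M=5390$ and $\lambda=100/539$, and then bound $\lambda\log_2 200$, $\log_2(11\lambda\e)$ and $\lambda\Psi_{B_W}(\mathbf p)$ numerically to obtain $E_{\mathrm{fix}}<1.98786<\log_2 3.967$. The paper simply asserts certified decimal bounds such as $\Psi_{B_W}(\mathbf p)>10.25261$ instead of spelling out the interval check you describe, and it leaves implicit your point that the strict margin absorbs the $o(1)$ term.
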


\begin{proof}
The total number of copies in one template is
\[
  D=5+28+107+292+568=1000\ .
\]
The number of input symbols carried by one template is
\begin{align*}
  M
  &=2\cdot5+3\cdot28+4\cdot107+5\cdot292+6\cdot568\\
  &=5390\ .
\end{align*}
Therefore
\[
  \lambda=\frac{D}{M}=\frac{100}{539}\ .
\]

Starting from \eqref{eq:Psi} and substituting $p_j=m_j/D$, we obtain
\begin{align*}
  \Psi_{B_W}(\mathbf p)
  &=-\sum_{j:m_j>0}\frac{m_j}{D}\log_2\frac{m_j}{D}
    +\sum_{j:m_j>0}\frac{m_j}{D}\log_2B_j\\
  &=\frac1D\sum_{j:m_j>0}m_j
    \left(\log_2D+\log_2\frac{B_j}{m_j}\right)\\
  &=\log_2D+\frac1D\sum_{j:m_j>0}m_j\log_2\frac{B_j}{m_j}\ ,
\end{align*}
where the last equality uses $\sum_jm_j=D$.
Substituting the five columns of \Cref{tab:witt-loads} gives
\[
  \Psi_{B_W}(\mathbf p)>10.25261\ .
\]
Define the exponent contributed by one product automaton by
\[
  E_{\mathrm{branch}}\triangleq\lambda\log_2 200\ .
\]
By \eqref{eq:one-branch-size}, one product automaton has size at most $2^{E_{\mathrm{branch}}K}n^{O(1)}$.  Using $\lambda=100/539$ gives
\[
  E_{\mathrm{branch}}<1.41816\ .
\]

Define the exponent contributed by the covering family by
\[
  E_{\mathrm{hash}}
  \triangleq
  \max\left\{0,
    \log_2(11\lambda\e)-\lambda\Psi_{B_W}(\mathbf p)
  \right\}\ .
\]
By \eqref{eq:hash-rate}, the number of hash branches is at most $2^{(E_{\mathrm{hash}}+o(1))K}n^{O(1)}$.  The numerical bounds
\[
  \log_2(11\lambda\e)<2.47185,
  \qquad
  \lambda\Psi_{B_W}(\mathbf p)>1.90215
\]
therefore imply
\[
  E_{\mathrm{hash}}
  <\max\{0,2.47185-1.90215\}
  =0.56970\ .
\]

Finally, \eqref{eq:profile-exponent} states that $E_{\mathrm{fix}}=E_{\mathrm{branch}}+E_{\mathrm{hash}}$.  Hence
\begin{equation}\label{eq:witt-fixed}
  E_{\mathrm{fix}}
  <1.41816+0.56970
  =1.98786,
  \qquad
  2^{E_{\mathrm{fix}}}<3.967\ .
\end{equation}
\end{proof}

This completes the fixed-histogram analysis.  It gives an $O^*(3.967^k)$-size NFA, strictly improving on the previous $O^*(4^{k+o(k)})$ bound.  Its weakness is equally clear: \eqref{eq:one-branch-size} pays for all $200^N$ product states, including the very large middle layers.  The next section introduces compose-and-compress, whose purpose is to remove those layers while preserving correctness.

\section{Compose-and-compress}\label{sec:compose}

The fixed-histogram proof counts every state of a raw product.  Its middle layers are the expensive ones.  Compose-and-compress (CaC) removes a band of product layers centered at the middle and replaces each path across that band by one ordinary one-symbol transition.  The colors read on the omitted part of the path are hidden witnesses: they justify the shortcut but are not read by the compressed automaton.

We use the operation twice.  First, the number of gadget copies grows with an intermediate block length; this creates the intermediate-block NFA used in the global construction.  Second, exactly eleven Witt complete gadgets are compressed to make a better finite partial gadget.  The growing construction is then applied again.

\subsection{The complete two-color example}\label{sec:211}
Before formalizing the procedure, let us work with an illustrative example.
The complete gadget in \Cref{fig:211-complete-gadget} is the $(2,1,1)$ separator gadget.  We now apply compose-and-compress to its two-copy raw product from \Cref{fig:211-product}.

Delete raw rank two and shift raw ranks three and four down by one.  The six middle states disappear, so the compressed state polynomial is
\[
  1+4z+4z^2+z^3\ .
\]

A new transition crosses the gap whenever the raw product has a two-step path across the deleted rank.  Either raw label may remain visible; the other is the hidden witness.  The complete ten-state compressed NFA appears in \Cref{fig:211-compressed}.  A braced label denotes two ordinary transitions, one for each displayed symbol.

\begin{figure}[H]
\centering
\includegraphics[width=\textwidth]{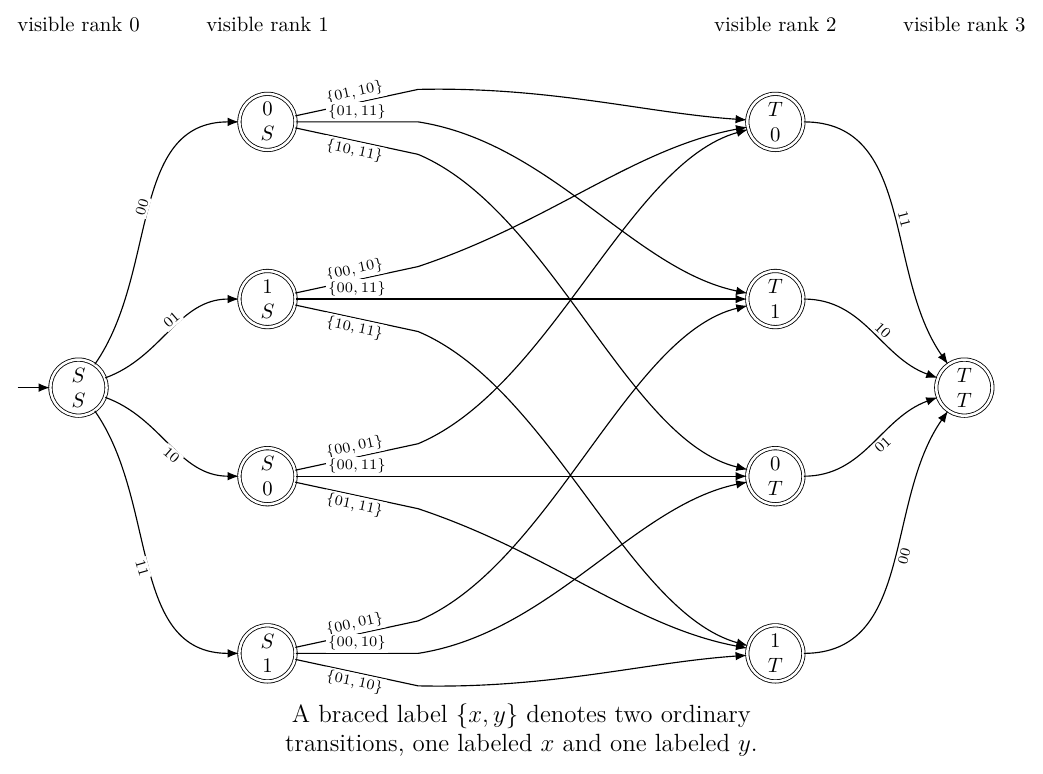}
\caption{The complete compressed NFA after composing two $(2,1,1)$ gadgets.
Every state appears once and every transition label is printed on its arrow.}
\label{fig:211-compressed}
\end{figure}

For example, the two transitions from $(0,S)$ to $(T,0)$ are justified by the raw paths
\[
 (0,S)\xrightarrow{01}(T,S)\xrightarrow{10}(T,0)
 \qquad\text{and}\qquad
 (0,S)\xrightarrow{10}(0,0)\xrightarrow{01}(T,0)\ .
\]
Thus one compressed transition is labeled $01$ and the other is labeled $10$.
There are twelve braced central arcs, hence $24$ central transitions, represented by twelve braced arcs, and eight inherited (raw) transitions.

The compressed NFA recognizes $\distlang{3}{\{00,01,10,11\}}$.  Soundness follows by expanding every shortcut to a raw path.  A repetition-free word of length two or three can be extended by one unused color; place that color in the hidden position of a raw path and compress the resulting two-step segment.  Words of length zero or one stay in the retained lower ranks.

\FloatBarrier

\subsection{Partial gadgets that can be composed again}
As \Cref{fig:raw-product-partial} illustrates, a raw product of complete gadgets need not be complete over the union of their palettes.
The preceding example remains a complete gadget because each two-color factor has capacity two: its capacity equals the size of its palette, so no distinct set can overload a factor.  If a factor has capacity $r<|C|$ and the composed gadget has target capacity $q>r$, then a set of $r+1$ distinct colors may lie entirely in one factor's palette, and that factor cannot read it.

To allow the operation to be iterated, we therefore record only the sets that are guaranteed to work.

\begin{definition}[Partial gadget]\label{def:partial-gadget}
A \emph{partial gadget of capacity $r$ over $C$} is a layered NFA $H$ with layers $Q_0,\ldots,Q_r$, a unique initial state, all states accepting, and a specified family $\cU_H\subseteq2^C$ of \emph{certified} sets. The NFA $H$ and the family $\cU_H$ must satisfy:
\begin{enumerate}[label=(P\arabic*),leftmargin=3em]
  \item $\cU_H$ is nonempty and downward closed, and every member has size at most $r$;
  \item every word accepted by $H$ is repetition-free;
  \item for every $S\in\cU_H$, every ordering of $S$ is accepted by $H$;
  \item if $S\in\cU_H$ and $0\le h\le r-|S|$, some $T\in\cU_H$ satisfies $S\subseteq T$ and $|T|=|S|+h$.
\end{enumerate}
Its certified-set polynomial is
\[
  B_H(z)\triangleq\sum_{j=0}^r
  |\{S\in\cU_H:|S|=j\}|z^j\ .
\]
The state polynomial and symmetry are exactly as for complete gadgets.
\end{definition}

Every complete gadget is a partial gadget: take all subsets of $C$ of size at most $r$ as the certified sets. The extra family is needed only after composition; it records what the next use of the construction may safely count. For example, in the partial gadget of \Cref{fig:raw-product-partial}, let $C_1=\{00,01\}$ and $C_2=\{10,11\}$ be the two palettes. Its family of certified sets is $\cU_H\triangleq\{S\subseteq C_1\cup C_2:|S\cap C_1|\le1\text{ and }|S\cap C_2|\le1\}$. Thus $\{00,10\}$ is certified, whereas $\{00,01\}$ is not: both colors in the latter set would have to be read by the capacity-one first factor.

\begin{lemma}[Product of partial gadgets]\label{lem:partial-product}
Take $m$ copies of a partial gadget $H$ on pairwise disjoint color sets.  Their raw product is a partial gadget of capacity $R=mr$.  Writing $H_i$ for the $i$th copy, its certified-set family is $\cU_{H^{\times m}}\triangleq\left\{\bigcup_{i=1}^m S_i:S_i\in\cU_{H_i}\text{ for every }i\in[m]\right\}$, and the two polynomials are
\[
  A_H(z)^m
  \qquad\text{and}\qquad
  B_H(z)^m\ .
\]
\end{lemma}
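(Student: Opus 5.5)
The plan is to verify the four axioms (P1)--(P4) of \Cref{def:partial-gadget} for the raw product with the stated family. The two polynomial identities then follow by the same counting as in \Cref{lem:raw-product}.

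\textbf{Layering and polynomials.} Each $H_i$ has layers $0,\ldots,r$. A product state has rank equal to the sum of its local ranks, so the ranks range over $0,\ldots,mr=R$ and every transition raises the rank by one. The initial state is the tuple of initial states and is unique, and all states are accepting. The number of rank-$j$ tuples is $[z^j]A_H(z)^m$. Since the palettes are disjoint, a member of $\cU_{H^{\times m}}$ determines its pieces uniquely via $S_i=S\cap C_i$. The certified sets of size $j$ are therefore in bijection with tuples $(S_1,\ldots,S_m)$, $S_i\in\cU_{H_i}$, with $\sum|S_i|=j$, which gives $B_H(z)^m$.

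\textbf{(P1)--(P3).} For (P1), the family is nonempty because each $\cU_{H_i}$ is nonempty and downward closed, so it contains $\varnothing$, and hence so does the product family. Every member has size at most $\sum r=R$. Downward closure follows from disjointness: if $T\subseteq\bigcup S_i$, then $T\cap C_i\subseteq S_i$ lies in $\cU_{H_i}$. For (P2), an accepted word projects, on each palette $C_i$, to a word read along a path of $H_i$, and that projection is repetition-free by (P2) for $H_i$. Since the palettes are disjoint, a repeated symbol would repeat inside a single projection, so the whole word is repetition-free. For (P3), take any ordering of $\bigcup S_i$. Its projection onto $C_i$ is an ordering of $S_i$, which is accepted by $H_i$ by (P3). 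Interleaving these local paths in the order the symbols appear gives a product path, as in \Cref{lem:raw-product}.

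\textbf{(P4).} This is the only step needing a small argument. Given $S=\bigcup S_i$ and $0\le h\le R-|S|$, write the total slack as $\sum_i(r-|S_i|)=R-|S|\ge h$. Choose integers $0\le h_i\le r-|S_i|$ with $\sum h_i=h$, for instance greedily. Apply (P4) in each copy to obtain $T_i\in\cU_{H_i}$ with $S_i\subseteq T_i$ and $|T_i|=|S_i|+h_i$. Then $T=\bigcup T_i$ is certified, contains $S$, and has size $|S|+h$ by disjointness. I expect no real obstacle. The points to state carefully are the use of disjointness (so that $|\bigcup T_i|=\sum|T_i|$ and so that the decomposition is unique for the $B$ count) and the distribution of $h$ within the per-copy slacks.
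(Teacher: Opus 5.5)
Your proposal is correct and follows essentially the same route as the paper. You use the unique decomposition $S_i=S\cap C_i$ over the disjoint palettes for (P1) and for the bijection behind $B_H(z)^m$, projection onto each palette for (P2), interleaving of local paths for (P3), a split of $h$ into per-copy amounts $0\le h_i\le r-|S_i|$ for (P4), and a count of rank tuples for $A_H(z)^m$.
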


\begin{proof}
Let $C_i$ be the palette of $H_i$. Because the palettes are disjoint, every $S\subseteq\bigcup_iC_i$ decomposes uniquely as $S=\bigcup_iS_i$, where $S_i\triangleq S\cap C_i$. Because each $\cU_{H_i}$ is nonempty, choosing one $S_i\in\cU_{H_i}$ for every $i$ shows that $\cU_{H^{\times m}}$ is nonempty. If $S$ is certified and $T\subseteq S$, then $T\cap C_i\in\cU_{H_i}$ by downward closure, so $T$ is certified; also $|S|=\sum_i|S_i|\le mr$. Thus (P1) holds. Any path in the product projects to a path in each $H_i$, so a repeated color would contradict (P2) in the corresponding copy. Conversely, for a certified $S$ and any ordering $w$ of $S$, each projection $w|_{C_i}$ labels a path in $H_i$ by (P3); interleaving these paths according to $w$ gives a product path labeled $w$. For (P4), given $0\le h\le mr-|S|$, choose integers $0\le h_i\le r-|S_i|$ with $\sum_i h_i=h$. Extend each $S_i$ by $h_i$ colors using (P4) in $H_i$; the union is a certified extension of $S$ of size $|S|+h$. 

A product state is an $m$-tuple of local states. If their ranks are $j_1,\ldots,j_m$, then the product state has rank $j_1+\cdots+j_m$ and contributes $z^{j_1+\cdots+j_m}$ to the state polynomial. Multiplying the $m$ copies of $A_H(z)$ counts exactly these tuples, so the state polynomial is $A_H(z)^m$. Likewise, multiplying $B_H(z)$ chooses one certified set $S_i$ in each copy and records their total size in the exponent. Because the palettes are disjoint, $(S_1,\ldots,S_m)\mapsto\bigcup_iS_i$ is a bijection, so the certified-set polynomial is $B_H(z)^m$.
\end{proof}

\subsection{The compression operation}

Let $G$ be a symmetric partial gadget of capacity $r$ with polynomials $A(z),B(z)$.  Take $m$ copies on disjoint color sets and write
\[
  A(z)^m=\sum_{j=0}^{R}a_jz^j,
  \qquad
  B(z)^m=\sum_{j=0}^{R}b_jz^j,
  \qquad
  R\triangleq mr\ .
\]
The raw product has capacity $R$. We construct from it a gadget of a smaller target capacity $q$, where $1\le q<R$. To reduce the capacity by $R-q$, we delete that many consecutive raw layers and replace every path crossing the resulting gap by a one-symbol shortcut. Define
\[
  s\triangleq R-q,
  \qquad
  \ell\triangleq\left\lfloor\frac{q-1}{2}\right\rfloor\ .
\]
Here $s$ is the number of raw layers removed, and $\ell$ is the index of the last retained layer below the gap. 

Specifically, we delete the states of raw layers
\[
  \ell+1,\ell+2,\ldots,\ell+s\ .
\]
Raw layers $0,\ldots,\ell$ keep their ranks. Each raw layer $j\in\{\ell+s+1,\ldots,R\}$ is retained at compressed rank $j-s$. The resulting ranks are therefore $0,\ldots,q$. 

In the example of \Cref{fig:211-compressed}, the starting gadget has $c=2$ colors and capacity $r=2$, and we take $m=2$ copies. Hence the raw product has capacity $R=mr=4$. We choose target capacity $q=3$, so $s=R-q=1$ and $\ell=\lfloor(q-1)/2\rfloor=1$. Thus raw layer $2$ is deleted: raw layers $0,1$ remain at compressed ranks $0,1$, while raw layers $3,4$ shift down to compressed ranks $2,3$. Each new shortcut goes from raw layer $1$ to raw layer $3$ and replaces a two-step raw path labeled $h_1a$; the color $h_1$ is hidden and $a$ labels the shortcut. For example, the path $(0,S)\xrightarrow{01}(T,S)\xrightarrow{10}(T,0)$ has $h_1=01$ and $a=10$, and therefore yields the compressed transition $(0,S)\xrightarrow{10}(T,0)$. The other raw path, $(0,S)\xrightarrow{10}(0,0)\xrightarrow{01}(T,0)$, has $h_1=10$ and $a=01$, and therefore yields the parallel compressed transition labeled $01$.

Generally, we keep every raw transition whose endpoints both lie in the retained lower layers or both lie in the retained upper layers, with the same label; only the layer numbers of upper states change under the shift by $s$. These retained raw transitions do not cross the deleted band of layers. To replace paths that do cross it, add a transition $u\xrightarrow{a}v$ from raw layer $\ell$ to raw layer $\ell+s+1$ whenever the raw product has a path from $u$ to $v$ labeled
\[
  h_1h_2\cdots h_s a
\]
for distinct colors $h_1,\ldots,h_s,a$.  Only $a$ is read by the compressed automaton.

The proof of the following is given in \Cref{app:compress-proof}.
\begin{lemma}[Compose-and-compress]\label{lem:compress}
The compressed automaton is a symmetric partial gadget of capacity $q$.  Its state polynomial is
\begin{equation}\label{eq:compressed-state}
  A_{\mathrm{comp}}(z)
  =\sum_{j=0}^{\ell}a_jz^j
   +\sum_{j=\ell+s+1}^{R}a_jz^{j-s}\ ,
\end{equation}
and its certified-set polynomial is
\begin{equation}\label{eq:compressed-good}
  B_{\mathrm{comp}}(z)=\sum_{j=0}^{q}b_jz^j\ .
\end{equation}
\end{lemma}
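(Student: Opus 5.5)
The plan is to take as certified family of the compressed automaton
\[
\cU_{\mathrm{comp}}\triangleq\{S\in\cU_{G^{\times m}}:|S|\le q\}\ ,
\]
where $\cU_{G^{\times m}}$ is the certified family of the raw product given by \Cref{lem:partial-product}. We then verify the layered structure, the two polynomials, symmetry, and (P1)--(P4) in turn.

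\textbf{Layers, polynomials and symmetry.} By construction, raw layers $0,\ldots,\ell$ keep their ranks and raw layers $\ell+s+1,\ldots,R$ move to ranks $\ell+1,\ldots,q$. Retained raw transitions raise rank by one. A shortcut goes from raw layer $\ell$ to raw layer $\ell+s+1$, that is, from compressed rank $\ell$ to compressed rank $\ell+1$. So the automaton is layered with layers $0,\ldots,q$, a unique initial state, and all states accepting, and \eqref{eq:compressed-state} is immediate. Formula \eqref{eq:compressed-good} follows from the definition of $\cU_{\mathrm{comp}}$ and \Cref{lem:partial-product}.

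For symmetry, $A(z)$ is palindromic of degree $r$, so $A(z)^m$ is palindromic and $a_j=a_{R-j}$.
\begin{itemize}[leftmargin=2em]
\item If compressed rank $i\le\ell$, its count is $a_i$.
\item If $i\ge\ell+1$, its count is $a_{i+s}=a_{R-i-s}=a_{q-i}$.
\end{itemize}
Since $\ell=\lfloor (q-1)/2\rfloor$, whenever $i\ge \ell+1$ either $q-i\le\ell$, in which case the count $a_{q-i}$ is the lower-layer count at rank $q-i$, or $i=q/2$ with $q$ even, which is its own mirror. Hence $|Q_i|=|Q_{q-i}|$ for all $i$.

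\textbf{Soundness (P2) and (P1), (P4).} Every compressed path expands to a raw path: replace each shortcut $u\xrightarrow{a}v$ by its witnessing raw path labeled $h_1\cdots h_sa$. The expanded word contains the compressed word as a subsequence. The raw product satisfies (P2) by \Cref{lem:partial-product}, so the expanded word is repetition-free, and hence so is the compressed word. (P1) holds because truncating a downward-closed nonempty family at size $q$ keeps it downward closed and nonempty. (P4) for $\cU_{\mathrm{comp}}$ is inherited from (P4) of the raw product, because the required extension sizes $|S|+h\le q$ are at most $R$.

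\textbf{Completeness (P3), the main step.} Let $S\in\cU_{\mathrm{comp}}$ and let $w=w_1\cdots w_p$ be an ordering of $S$.
\begin{itemize}[leftmargin=2em]
\item If $p\le\ell$, the raw path for $w$, which exists by (P3) of the raw product, stays in the retained lower layers, so it is a compressed path.
\item If $p\ge\ell+1$, apply (P4) of the raw product to get a certified $T\supseteq S$ with $|T|=|S|+s\le q+s=R$. List $T\setminus S=\{h_1,\ldots,h_s\}$ and consider the ordering
\[
w'=w_1\cdots w_\ell\,h_1\cdots h_s\,w_{\ell+1}\cdots w_p
\]
of $T$. By (P3) the raw product has a path labeled $w'$. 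After $w_1\cdots w_\ell$ it is at raw layer $\ell$. After $h_1\cdots h_sw_{\ell+1}$ it is at raw layer $\ell+s+1$, and the colors $h_1,\ldots,h_s,w_{\ell+1}$ are distinct. So that segment yields a shortcut labeled $w_{\ell+1}$. The remainder of the path lies in the retained upper layers. Hence the compressed automaton accepts $w$.
\end{itemize}
This hidden-witness insertion is the only nontrivial point. It is exactly where (P4) of the product is needed: it guarantees that the $s$ hidden colors can be chosen so that the padded set is still readable.
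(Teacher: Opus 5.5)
Your proof is correct and follows the paper's argument essentially step for step. You certify the truncation of the raw product's family, prove soundness by expanding each shortcut into its witnessing raw path, and get completeness across the cut by using (P4) to insert $s$ hidden colors after the first $\ell$ visible ones; your explicit checks of layering and of the self-mirror middle layer when $q$ is even are only slightly more detailed than the paper's treatment.
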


The next section applies this lemma with a growing number of Witt gadgets.  The section after that applies it once to eleven Witt complete gadgets and then feeds the resulting partial gadget back into the growing construction.

\section{First use: a growing compressed product}\label{sec:amplification}

To obtain an improved NFA construction from CaC, we first build an automaton for an intermediate length $q$.
The reason is the transition count: applying CaC directly at length $k$ gives the desired exponential constant for the number of states, but leaves a transition bound with a factor $2^{O(k)}$ whose hidden constant is too large.

Instead, we choose $q\to\infty$ with $q=o(k)$, construct an inner NFA for length $q$, and combine about $k/q$ inner copies in an outer product.
Each outer transition updates only one inner copy, so the local factor $2^{O(q)}$ occurs only once and is $2^{o(k)}$.
The overhead from the outer hash family is also subexponential in $k$.

The construction therefore has two scales.
At the inner scale, we take a growing number of copies of one fixed partial gadget and apply compose-and-compress to their raw product, obtaining an NFA for repetition-free words of length at most $q$.
At the outer scale, the original $k$ letters are divided among blocks of size $q$, and the inner NFAs are combined by an outer hashing step to recognize $\distlang{k}{n}$.
\Cref{fig:growing-construction-hierarchy} illustrates these two scales: each outer branch contains $T$ intermediate blocks, and each intermediate block is built from $N$ copies of the fixed finite gadget.

\begin{figure}[t]
  \centering
  \includegraphics[width=\textwidth]{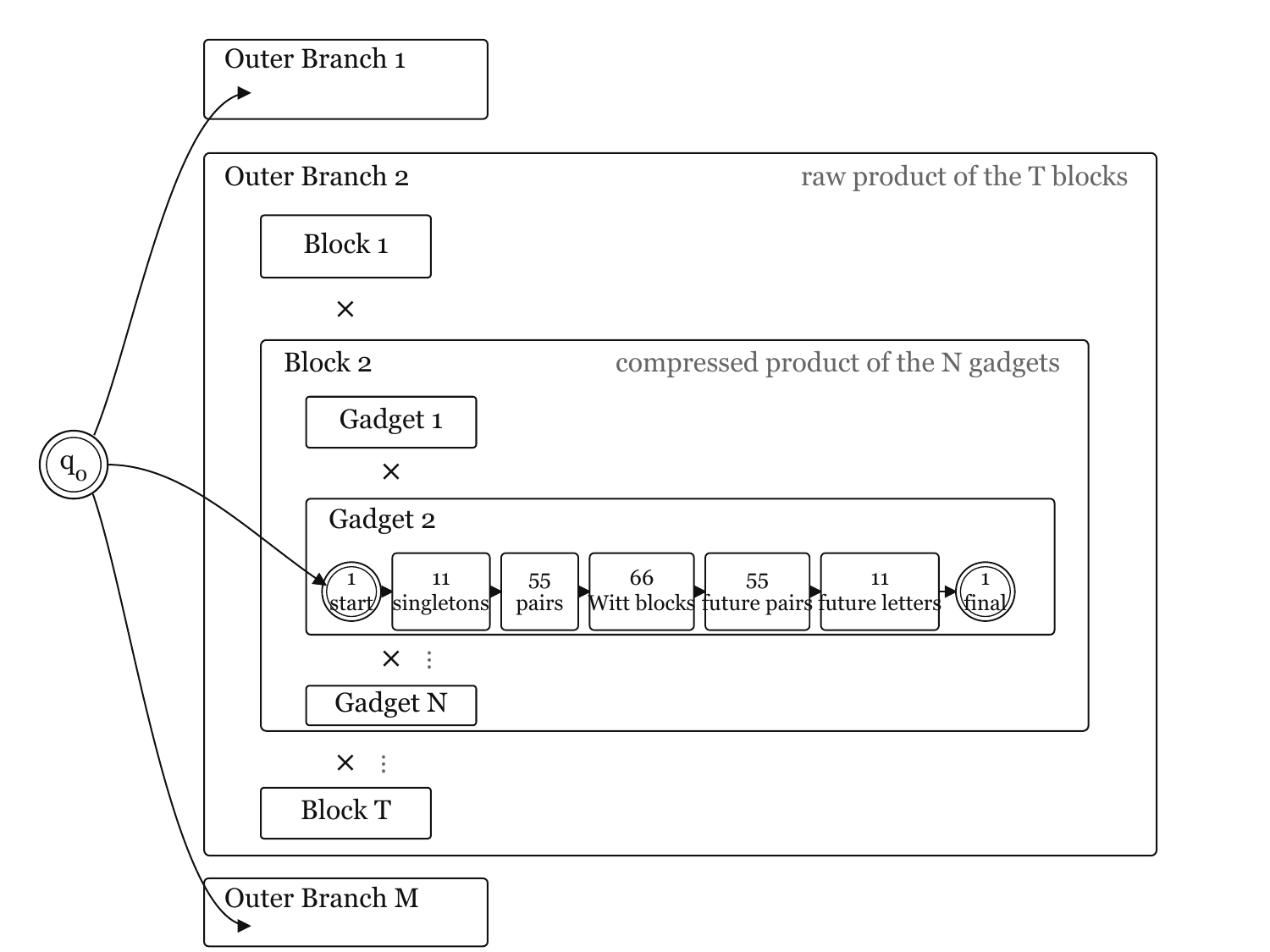}
  \caption{The hierarchy of the growing construction. The common accepting initial state has branches corresponding to the outer hash functions. Each outer branch contains $T$ intermediate blocks, and each intermediate block contains $N$ copies of \mbox{the fixed finite gadget.}}
  \label{fig:growing-construction-hierarchy}
\end{figure}

Throughout this section, let $G$ be a symmetric partial gadget of capacity $r\ge1$ over a $c$-element color set, and write
\[
  A(z)=A_G(z),
  \qquad
  B(z)=B_G(z)=\sum_{j=0}^r B_jz^j\ .
\]

Here is the optimization problem before we introduce its parameters. A branch formed from $N$ copies has one state for each retained product state, so the two retained tails of $A(z)^N$ determine its cost. For a fixed repetition-free $q$-element input set, the probability that the branch certifies the set is proportional to $[z^q]B(z)^N$, the number of certified product sets of size $q$. A larger coefficient reduces the number of required hash branches. Thus we must balance the states per branch against the number of branches. We use the auxiliary variables $x$ and $y$ to control these two costs.

First, we choose a rational number $x\in(0,1)$, and put $a_j\triangleq[z^j]A(z)^N$ and $Y\triangleq\lfloor q/2\rfloor$. The raw product has capacity $Nr$. Central compression to capacity $q$ retains the raw ranks $0,\ldots,\lceil q/2\rceil-1$ and $Nr-q+\lceil q/2\rceil,\ldots,Nr$. Because $G$ is symmetric, $A(z)^N$ is symmetric, and therefore
\[
\sum_{j=Nr-q+\lceil q/2\rceil}^{Nr}a_j=\sum_{j=0}^{q-\lceil q/2\rceil}a_j=\sum_{j=0}^{Y}a_j\ ,
\]
while
\[
\sum_{j=0}^{\lceil q/2\rceil-1}a_j\le\sum_{j=0}^{Y}a_j\ .
\]
Thus each retained tail contains at most $\sum_{j=0}^{Y}a_j$ states. When $q$ is odd the two tails have the same number of layers; when $q$ is even the upper tail has one more layer and contains the central layer.

The coefficients $a_j$ are nonnegative, and $x^j\ge x^Y$ for every $0\le j\le Y$. Hence
\[
x^Y\sum_{j=0}^{Y} a_j\le\sum_{j=0}^{Y} a_jx^j\le A(x)^N\ ,
\]
and therefore
\[
\sum_{j=0}^{Y}[z^j]A(z)^N\le x^{-Y}A(x)^N\ .
\]

Second, a positive rational number $y$ selects the load histogram used to lower-bound $[z^q]B(z)^N$. The next subsection defines that histogram and proves the bound.

The final construction multiplies the number of states per branch by the number of branches, so their base-two logarithms add. We later choose $x$ and $y$ jointly to minimize that sum.

\subsection{A load distribution chosen by one parameter}

The fixed-histogram analysis chose an integral histogram by hand.  We now need the same counting idea for a number of copies that grows with an intermediate length $q$.  It is convenient to describe the histogram by a probability vector rather than by a fixed denominator.

Choose a positive rational number $y$.  Define
\begin{equation}\label{eq:tilt}
  p_j\triangleq\frac{B_jy^j}{B(y)},
  \qquad
  \mu\triangleq\sum_{j=0}^r jp_j
      =\frac{yB'(y)}{B(y)},
  \qquad
  \lambda\triangleq\frac1\mu\ .
\end{equation}
For load $j$, the factor $B_j$ counts the certified local sets of size $j$, while $y^j$ controls how strongly that load is favored. Dividing the weights $B_jy^j$ by their sum $B(y)$ produces the probability vector $(p_0,\ldots,p_r)$. These definitions have the following concrete meanings.
\begin{itemize}[leftmargin=2.2em]
  \item Since $B(y)=\sum_jB_jy^j$, the values $p_j$ are nonnegative and sum to one.
  \item We intend a fraction $p_j$ of the partial-gadget copies to have load $j$.
  \item The average number of colors per copy is $\mu$.
  \item The reciprocal $\lambda=1/\mu$ is the intended number of copies per visible input color.
\end{itemize}

Suppose we take $N$ copies.  We want exactly $Np_j$ copies to have load $j$.
The total number of visible colors is then
\[
  \sum_j jNp_j=N\mu\ .
\]
We therefore write
\[
  q=N\mu,
  \qquad
  N=\lambda q\ .
\]
All numbers in \eqref{eq:tilt} are rational. Hence there is a fixed positive integer $M$ such that, whenever $N$ is a multiple of $M$, every $Np_j$ and $q=N\mu$ is an integer. Writing $N=tM$ gives $q=t(M\mu)$, and $M\mu$ is a positive integer by the choice of $M$. We call these values of $q$ \emph{allowed}. This is only a divisibility condition. Because the allowed values form an arithmetic progression with fixed step $M\mu$, the least allowed $q\ge\sqrt{k}$ satisfies $q=\sqrt{k}+O(1)$; in particular, $q=o(k)$, which is the property used in the final bounds.

The choice $p_j\propto B_jy^j$ is useful because it makes the entropy calculation collapse to a two-term expression in the following lemma.

\begin{lemma}[One explicit coefficient type]\label{lem:coefficient}
For allowed values of $N$ and $q=N\mu$,
\[
  [z^q]B(z)^N
  \ge
  2^{N(\log_2B(y)-\mu\log_2y)-O(\log N)}\ .
  \]
The constant in the error term depends only on the fixed polynomial $B$.
\end{lemma}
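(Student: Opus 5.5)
We follow the single-histogram argument of \Cref{sec:single-hist-lb}, now with the tilted histogram $\eta_j=Np_j$. Since $N$ is allowed, every $Np_j$ is a nonnegative integer. These integers satisfy $\sum_j Np_j=N$ and $\sum_j jNp_j=N\mu=q$, so they meet the constraints \eqref{eq:histogram-constraints} with $K=q$. The expansion \eqref{eq:all-histograms} holds verbatim for the certified-set polynomial $B$, because it only uses that $[z^q]B(z)^N$ counts tuples of local certified sets with total size $q$. Discarding all other histograms therefore gives
\[
  [z^q]B(z)^N\ge\binom{N}{Np_0,\ldots,Np_r}\prod_{j=0}^rB_j^{Np_j}\ .
\]

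Next we bound the multinomial exactly as in \eqref{eq:multinomial-entropy}. Let $J=\{j:p_j>0\}=\{j:B_j>0\}$. This set depends only on $B$, and for $j\in J$ we have $Np_j\ge1$ whenever $N$ is allowed. The Stirling estimates then give
\[
  \log_2\binom{N}{Np_0,\ldots,Np_r}\ge -N\sum_{j\in J}p_j\log_2p_j-O(\log N)\ ,
\]
where the error consists of at most $r+1$ terms of size $O(\log N)$, with constants depending only on $B$. Combined with the local-set factor, this yields
\[
  \log_2[z^q]B(z)^N\ge N\Psi_B(\mathbf p)-O(\log N)\ .
\]

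The key step is the collapse of $\Psi_B$ under the tilt. For $j\in J$,
\[
  \log_2\frac{B_j}{p_j}=\log_2\frac{B(y)}{y^j}=\log_2B(y)-j\log_2y\ .
\]
Hence
\[
  \Psi_B(\mathbf p)=\sum_{j\in J}p_j\log_2\frac{B_j}{p_j}=\log_2B(y)\sum_jp_j-\log_2y\sum_j jp_j=\log_2B(y)-\mu\log_2y\ ,
\]
using $\sum_jp_j=1$ and $\sum_j jp_j=\mu$. Substituting this into the preceding bound gives the claim.

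The only point needing care is uniformity of the $O(\log N)$ term. Unlike the fixed template, here $N$ grows along the allowed progression. Still, $J$ and the ratios $p_j$ are fixed by $B$ and $y$, so each $Np_j$ is a fixed positive multiple of $N$. The Stirling upper-bound error $O(\log(Np_j))$ is therefore $O(\log N)$ with a constant depending only on $B$ (and the chosen $y$). The $\log_2 p_j$ terms cancel in the collapse above, so no additional error enters.
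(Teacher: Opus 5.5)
Your proposal is correct and follows essentially the same route as the paper's proof: keep the single tilted histogram with $Np_j$ copies of load $j$, bound the multinomial by Stirling exactly as in the fixed-histogram analysis, and collapse $\Psi_B$ using $\log_2(B_j/p_j)=\log_2B(y)-j\log_2y$. Your remarks that $J=\{j:B_j>0\}$ and that $Np_j\ge1$ for allowed $N$ make the uniformity of the $O(\log N)$ term more explicit than the paper does. Since $Np_j\le N$, that constant in fact depends only on $r$, not on $y$.
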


\begin{proof}
We repeat the fixed-histogram calculation from \Cref{sec:fixed-profile}, now with $Np_j$ copies of load $j$.

\paragraph{Step 1: keep one histogram}
Write $B(z)^N=\prod_{i=1}^N B(z)$, with the $i$th occurrence of $B(z)$ representing gadget copy $i$. Choosing $B_jz^j$ from that occurrence means that copy $i$ has load $j$. We retain the terms in which exactly $Np_j$ copies have load $j$, for every $j$.  Since
\[
  \sum_jNp_j=N,
  \qquad
  \sum_jjNp_j=N\mu=q\ ,
\]
this is a feasible histogram for the coefficient of $z^q$.  It contributes
\begin{equation}\label{eq:one-type}
  [z^q]B(z)^N
  \ge
  \binom{N}{Np_0,Np_1,\ldots,Np_r}
  \prod_{j=0}^r B_j^{Np_j}\ .
\end{equation}

\paragraph{Step 2: count where the loads occur}
The multinomial coefficient chooses which copies have load $0$, which have load $1$, and so on.  Applying Stirling's formula to its constantly many factorials gives
\[
  \log_2\binom{N}{Np_0,\ldots,Np_r}
  =N\left(-\sum_{j=0}^rp_j\log_2p_j\right)+O(\log N)\ .
\]
The derivation is identical to \eqref{eq:multinomial-entropy}, with $tD$ replaced by $N$.

\paragraph{Step 3: count the local certified sets}
Each of the $Np_j$ copies of load $j$ has $B_j$ choices for its certified local set.  Therefore
\[
  \log_2\left(\prod_{j=0}^rB_j^{Np_j}\right)
  =N\sum_{j=0}^rp_j\log_2B_j\ .
\]
Adding Steps 2 and 3, the logarithm of the right-hand side of \eqref{eq:one-type} is
\[
  N\left(
    -\sum_jp_j\log_2p_j
    +\sum_jp_j\log_2B_j
  \right)+O(\log N)\ .
\]

\paragraph{Step 4: use the special form of $p_j$}
By \eqref{eq:tilt},
\[
  \log_2p_j
  =\log_2B_j+j\log_2y-\log_2B(y)\ .
\]
Substitute this identity into the preceding entropy expression:
\begin{align*}
  &-\sum_jp_j\log_2p_j+\sum_jp_j\log_2B_j\\
  &\quad=
  -\sum_jp_j\bigl(\log_2B_j+j\log_2y-\log_2B(y)\bigr)
  +\sum_jp_j\log_2B_j\\
  &\quad=
  \log_2B(y)\sum_jp_j
  -\log_2y\sum_jjp_j\\
  &\quad=
  \log_2B(y)-\mu\log_2y\ .
\end{align*}
Here we used $\sum_jp_j=1$ and $\sum_jjp_j=\mu$.  Substituting this value into \eqref{eq:one-type} proves the lemma.
\end{proof}

\subsection{Building an NFA for one intermediate block}

We next use $N=\lambda q$ copies of the partial gadget to recognize all repetition-free words of length at most $q$.  Two costs must be controlled:
\begin{enumerate}[label=(\alph*),leftmargin=2.5em]
  \item the number of states retained after compose-and-compress; and
  \item the number of hash functions needed to cover all $q$-element input sets.
\end{enumerate}
The rational parameter $x\in(0,1)$ controls the first cost, while $y$ controls the load distribution and hence the second cost.
Define
\begin{equation}\label{eq:Rxy}
  R(x,y)\triangleq\lambda\log_2A(x)-\frac12\log_2x=\frac{B(y)}{yB'(y)}\log_2A(x)-\frac12\log_2x\ ,
\end{equation}

\begin{equation}\label{eq:Hy}
  H(y)\triangleq\max\left\{0,
  \log_2(c\lambda\e)+\log_2y-\lambda\log_2B(y)
  \right\}\ ,
\end{equation}
and
\begin{equation}\label{eq:Exy}
  E(x,y)\triangleq R(x,y)+H(y)\ .
\end{equation}
The meaning of these quantities will emerge from the proof: one compressed branch has at most $2^{(R(x,y)+o(1))q}$ states, and at most $2^{(H(y)+o(1))q}$ hash branches are needed. Equivalently, $R(x,y)$ and $H(y)$ are the two base-two exponents per visible input color.

\begin{proposition}[Inner NFA]\label{prop:inner}
For every sufficiently large allowed value $q=N\mu$ and every alphabet size $L\ge q$, there is an acyclic NFA $C_{q,L}$ recognizing $\distlang{q}{L}$ with
\begin{equation}\label{eq:inner-states}
  |Q(C_{q,L})|
  \le 2^{(E(x,y)+o(1))q}L^{O(1)}
\end{equation}
and
\begin{equation}\label{eq:inner-transitions}
  |\Delta(C_{q,L})|
  \le 2^{O(q)}L^{O(1)}\ .
\end{equation}
\end{proposition}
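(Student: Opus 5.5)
The plan is to mirror the proof of \Cref{thm:profile}, with two changes: the raw product is replaced by its compose-and-compress version, and the fixed histogram is replaced by the tilted histogram of \Cref{lem:coefficient}. Fix an allowed $q=N\mu$, so $N=\lambda q$. Take $N$ copies of the symmetric partial gadget $G$ on disjoint palettes $[N]\times C$. By \Cref{lem:partial-product} their raw product is a partial gadget of capacity $R=Nr$ with polynomials $A(z)^N$ and $B(z)^N$. We may assume $R\ge q$; otherwise $\mu>r$, which is impossible. If $R>q$, apply \Cref{lem:compress} with target capacity $q$; if $R=q$, keep the raw product. Call the result $P$. It is a partial gadget of capacity $q$ whose certified-set polynomial is $\sum_{j\le q}b_jz^j$, with $b_q=[z^q]B(z)^N$. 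The tail estimate before \Cref{lem:coefficient} shows that $P$ has at most
\[
2x^{-Y}A(x)^N=2^{(\lambda\log_2A(x)-\frac12\log_2x)q+O(1)}=2^{R(x,y)q+O(1)}
\]
states.

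For each hash $h:[L]\to[N]\times C$, define $P_h$ by relabeling: a transition of $P$ reading color $\gamma$ becomes transitions reading every $z$ with $h(z)=\gamma$. The paths of $P_h$ are the paths of $P$ pulled back through $h$. A word with a repeated symbol maps to a word with a repeated color, which $P$ rejects by (P2); the same holds for a word whose image collides, i.e., one in which two distinct symbols receive the same color. If $h$ maps a $q$-set $S$ injectively onto a certified set, then (P3) together with downward closure shows that every ordering of every subset of $S$ is accepted. A uniformly random $h$ sends a fixed $q$-set injectively onto a certified $q$-set with probability
\[
p=\frac{q!\,b_q}{(cN)^q}.
\]
Combining \Cref{lem:coefficient} with the Stirling computation \eqref{eq:factorial-ratio} gives
\[
\log_2 p^{-1}\le q\bigl(\log_2(c\lambda\e)-\lambda\log_2B(y)+\log_2 y\bigr)+O(\log q),
\]
since $N\mu\log_2 y=q\log_2y$. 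The argument of \Cref{lem:cover} then yields a covering family of size $2^{(H(y)+o(1))q}L^{O(1)}$, with the $\max\{0,\cdot\}$ coming from the trivial bound $p\le1$. Taking the nondeterministic union with a shared initial state, exactly as after \Cref{lem:cover}, introduces no $\varepsilon$-transitions. This gives $C_{q,L}$ recognizing $\distlang{q}{L}$ with state count $2^{(E(x,y)+o(1))q}L^{O(1)}$, which is \eqref{eq:inner-states}.

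For \eqref{eq:inner-transitions}, I would only bound crudely. The number of states is already $2^{O(q)}L^{O(1)}$, since $N=O(q)$ and the number of hashes is $2^{O(q)}L^{O(1)}$. Each state of $P_h$ has at most $L$ labels. For each label, the number of successors is at most the number of states of $P$, which is $2^{O(q)}$; this covers the shortcut transitions across the deleted band, whose multiplicity need not be small. Multiplying gives $2^{O(q)}L^{O(1)}$.

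The main obstacle is the soundness bookkeeping when relabeling compressed shortcuts. A shortcut in $P$ hides colors $h_1,\dots,h_s$, and we must check that pulling back through $h$ cannot create acceptance of a non-repetition-free word. I expect this to follow cleanly because acceptance in $P_h$ implies acceptance of the image word in $P$, and \Cref{lem:compress} already guarantees (P2) for $P$. The remaining care is in the $o(1)$ terms: the error $O(\log N)$ from \Cref{lem:coefficient}, the constant $2$ from the two tails, and the factor $q\log L$ from covering are all $o(q)$ or polynomial in $L$, since $x$ and $y$ are fixed and $N=\Theta(q)$.
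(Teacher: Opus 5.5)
Your proposal is correct and follows essentially the same route as the paper. You compress the $N=\lambda q$-fold product to capacity $q$, bound the two retained tails by $2x^{-\lfloor q/2\rfloor}A(x)^N$, bound $\rho^{-1}$ via the tilted histogram and Stirling, relabel through a covering hash family with a merged initial state, and bound transitions crudely by $2^{O(q)}L^{O(1)}$. Your separate case $R=q$ is harmless but never arises: $\varnothing$ is certified, so $B_0>0$, which forces $\mu<r$ strictly, as the paper notes.
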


\begin{proof}
The construction has five steps.

\paragraph{Step 1: compose $N$ partial-gadget copies and remove the middle}
Take $N=\lambda q$ copies of $G$ on disjoint color sets.  Their raw capacity is
\[
  R_{\mathrm{raw}}\triangleq rN\ .
\]
By \Cref{lem:partial-product}, the product certified-set polynomial is $B(z)^N$.
Properties (P1) and (P4) imply $B_0,B_r>0$, so the tilted distribution in \eqref{eq:tilt} has $0<\mu<r$.
Hence $q=N\mu<Nr=R_{\mathrm{raw}}$, and \Cref{lem:compress} applies with target capacity $q$.
Thus the deleted width and the last retained lower rank are
\[
  s\triangleq R_{\mathrm{raw}}-q,
  \qquad
  \ell\triangleq\left\lfloor\frac{q-1}{2}\right\rfloor\ .
\]
By \Cref{lem:compress}, the compressed product is a partial gadget of capacity $q$.
Its certified sets of size $q$ are exactly those counted by $[z^q]B(z)^N$.

\paragraph{Step 2: count the retained states}
Write
\[
  A(z)^N=\sum_{j=0}^{R_{\mathrm{raw}}}a_jz^j\ .
\]
The coefficient $a_j$ is the number of raw product states at rank $j$.
Compose-and-compress retains the lower ranks $0,\ldots,\ell$ and the upper ranks $\ell+s+1,\ldots,R_{\mathrm{raw}}$.  Hence the number of retained states is
\[
  \sum_{j=0}^{\ell}a_j
  +\sum_{j=\ell+s+1}^{R_{\mathrm{raw}}}a_j\ .
\]
Because the partial gadget $G$ is symmetric, $A(z)^N$ is symmetric, so $a_j=a_{R_{\mathrm{raw}}-j}$.  Reflecting the upper tail around rank $R_{\mathrm{raw}}/2$ changes its index range to $0,\ldots,q-\ell-1$.  Therefore
\[
  \sum_{j=0}^{\ell}a_j
  +\sum_{j=\ell+s+1}^{R_{\mathrm{raw}}}a_j
  =\sum_{j=0}^{\ell}a_j
   +\sum_{j=0}^{q-\ell-1}a_j
  \le2\sum_{j=0}^{\lfloor q/2\rfloor}a_j\ .
\]

We now bound this lower tail by evaluating the polynomial at $x<1$.  For every $j\le\lfloor q/2\rfloor$,
\[
  1\le x^{j-\lfloor q/2\rfloor}\ ,
\]
so
\begin{align*}
  \sum_{j=0}^{\lfloor q/2\rfloor}a_j
  &\le x^{-\lfloor q/2\rfloor}
       \sum_{j=0}^{\lfloor q/2\rfloor}a_jx^j\\
  &\le x^{-\lfloor q/2\rfloor}A(x)^N\ .
\end{align*}
Let $Q_{\mathrm{comp}}$ denote the state set of the compressed product. Its number of states is at most
\[
  |Q_{\mathrm{comp}}|\le 2x^{-\lfloor q/2\rfloor}A(x)^N\ .
\]
  Taking base-two logarithms and using $N=\lambda q$ gives
\[
  \log_2|Q_{\mathrm{comp}}|
  \le
  q\left(\lambda\log_2A(x)-\frac12\log_2x\right)+o(q)
  =(R(x,y)+o(1))q\ .
\]
Thus, the number of states in one compressed product branch is at most
\[
  2^{(R(x,y)+o(1))q}\ .
  \]

\paragraph{Step 3: use one hash's success probability to bound the number of inner hash branches}
Hash the input alphabet $[L]$ into the $cN$ product colors.  Fix a $q$-element input set $S\subseteq[L]$ and choose
\[
  h:[L]\longrightarrow[N]\times C
\]
uniformly at random.
The hash succeeds when its image is one of the certified $q$-element product sets.
Let $\rho$ denote the probability, over the random choice of $h$, that $h$ succeeds on $S$.
This probability determines how many hash branches the final nondeterministic union needs.
The covering argument of \Cref{lem:cover} uses
\[
  O\!\left(\rho^{-1}q\log(2L)\right)
\]
hashes to cover every $q$-element subset of $[L]$.
Because $q\le L$, the factor $q\log(2L)$ is absorbed into $L^{O(1)}$.
After separating this allowed polynomial dependence on $L$, it remains to bound $\rho^{-1}$ as a function of $q$.
Our goal in this step is therefore to prove
\[
  \rho^{-1}\le 2^{(H(y)+o(1))q}\ ,
\]
or equivalently,
\[
  \frac1q\log_2\rho^{-1}\le H(y)+o(1)\ .
\]

There are $[z^q]B(z)^N$ choices for such a certified set.  Once it is chosen, the $q$ labeled elements of $S$ may be bijected to its cells in $q!$ ways.
There are $(cN)^q$ possible restrictions of $h$ to $S$.  Therefore
\begin{equation}\label{eq:inner-success}
  \rho
  =\frac{q!}{(cN)^q}[z^q]B(z)^N\ .
\end{equation}

To prove this target bound, we estimate $\rho^{-1}$.  \Cref{lem:coefficient} gives
\[
  \log_2[z^q]B(z)^N
  \ge N\bigl(\log_2B(y)-\mu\log_2y\bigr)-O(\log N)\ .
\]
Since $N=\lambda q$ and $N\mu=q$, this becomes
\begin{align}\label{eq:zqBzN_lb}
  \log_2[z^q]B(z)^N
  \ge
  q\bigl(\lambda\log_2B(y)-\log_2y\bigr)-o(q)\ .
\end{align}
Also, Stirling's formula and $cN=c\lambda q$ give
\[
  \log_2\frac{(cN)^q}{q!}
  =q\log_2(c\lambda\e)+o(q)\ .
  \]
By \eqref{eq:inner-success},
\[
  \log_2\rho^{-1}
  =\log_2\frac{(cN)^q}{q!}
   -\log_2[z^q]B(z)^N\ .
\]
Applying \eqref{eq:zqBzN_lb} gives
\begin{align*}
  \log_2\rho^{-1}
  &\le q\log_2(c\lambda\e)+o(q)\\
  &\quad-\left[q\bigl(\lambda\log_2B(y)-\log_2y\bigr)-o(q)\right]\\
  &=q\left(\log_2(c\lambda\e)+\log_2y-\lambda\log_2B(y)\right)+o(q)\ .
\end{align*}

Dividing by $q$ and using $o(q)/q=o(1)$ gives
\[
  \frac1q\log_2\rho^{-1}
  \le
  \log_2(c\lambda\e)+\log_2y-\lambda\log_2B(y)+o(1)\ .
\]
Finally, \eqref{eq:Hy} defines $H(y)$ as the maximum of zero and the three-term expression on the right.
That expression is therefore at most $H(y)$.
Hence
\[
  \frac1q\log_2\rho^{-1}\le H(y)+o(1)\ .
\]
Equivalently, $\rho^{-1}\le 2^{(H(y)+o(1))q}$.
By \Cref{lem:cover}, the number of required hash branches is therefore at most $2^{(H(y)+o(1))q}L^{O(1)}$, where the factor $q\log(2L)$ is absorbed into $L^{O(1)}$.

\paragraph{Step 4: cover every input set and add the exponents}
Choose
\[
  O\!\left(\rho^{-1}q\log(2L)\right)
\]
independent hashes.  The same union-bound argument as in \Cref{lem:cover} shows that some such family covers every $q$-set.  Since the certified family is downward closed, the same family covers every smaller set: extend it to a $q$-set, use a successful hash on the extension, and then restrict.

For each chosen hash $h:[L]\to[N]\times C$, take a separate copy of the compressed product. If the compressed product contains a transition $u\xrightarrow{(i,c)}v$, then in the copy associated with $h$ add the transition $u\xrightarrow{a}v$ for every input symbol $a\in[L]$ satisfying $h(a)=(i,c)$. Thus reading $a$ in this copy has exactly the same effect as reading its product color $h(a)$ in the compressed product.  The nondeterministic union of these copies, one for each chosen hash, recognizes $\distlang{q}{L}$.  One branch has state exponent $R(x,y)$ by Step 2, and the number of branches has exponent $H(y)$ by Step 3.  Therefore the union has exponent
\[
  R(x,y)+H(y)=E(x,y)\ ,
\]
which proves \eqref{eq:inner-states}.

Correctness is immediate. If an input symbol repeats, both occurrences receive the same product color; because every compressed path is repetition-free, every branch rejects. If the input word is repetition-free, choose a hash that certifies its underlying set. Property (P3) ensures that the corresponding branch accepts every ordering of that certified set, including the input word.

\paragraph{Step 5: ordinary transitions}
For the final theorem we also need a bound on the number of transitions.
One compressed product has no more than the $A(1)^N=2^{O(q)}$ states of its raw product.  Its color alphabet has $cN=O(q)$ elements.  Even the crude bound that allows every state-color pair to connect to every state gives only $2^{O(q)}$ transitions.  For one fixed hash, the construction in Step~4 replaces each compressed-product transition with at most $L$ transitions over $[L]$, so it increases the transition count by at most a factor $L$. The number of hashes is $2^{O(q)}L^{O(1)}$.  Hence
\[
  |\Delta(C_{q,L})|\le2^{O(q)}L^{O(1)}\ ,
\]
which is \eqref{eq:inner-transitions}.
\end{proof}

\subsection{From intermediate blocks to the full input}

The inner NFA recognizes $\distlang{q}{L}$.  Its state bound has the desired constant $E(x,y)$, but its transition bound is only $2^{O(q)}L^{O(1)}$, with an unspecified constant in the exponent.  We prevent that local transition constant from being multiplied by $k/q$ by using an outer product in which one transition updates only one component.

Choose an allowed value $q=q(k)$ such that
\[
  q\longrightarrow\infty,
  \qquad
  q=o(k)\ .
\]
Because the allowed values of $q$ described after \eqref{eq:tilt} are multiples of the fixed integer $M\mu$, we may take the least allowed value at least $\sqrt{k}$.  Let $K$ be the least multiple of $q$ with $K\ge k$, and define
\[
  T\triangleq K/q,
  \qquad
  L\triangleq q^3\ .
\]
These parameters have simple roles:
\begin{itemize}[leftmargin=2.2em]
  \item $K$ is a padded target length divisible by $q$;
  \item $T$ is the number of outer blocks;
  \item every outer block will contain exactly $q$ symbols; and
  \item $L=q^3$ is large enough to give distinct local names to the symbols inside every block at subexponential cost.
\end{itemize}
Since $0\le K-k<q=o(k)$, we have $K=k+o(k)$.

The first-coordinate requirement below is exactly a splitter, while the second-coordinate requirement assigns distinct local names to the elements mapped to each first-coordinate value. The precise deterministic composition used here is proved in \Cref{app:derandomization} from the splitter and perfect-hash constructions of Naor, Schulman, and Srinivasan~\cite[Sections~2.2 and~4.4, pp.~183, 186--187]{NaorSchulmanSrinivasan1995}.

\begin{lemma}[Balanced outer hashes]\label{lem:outer-hash}
For every domain size $u\ge K$, there is a family of $2^{o(K)}u^{O(1)}$ functions
\[
  g:[u]\longrightarrow[T]\times[L]
\]
such that, for every $K$-element subset of $[u]$, some $g$ sends exactly $q$ elements to each outer block and is injective on the second coordinates inside each block.
\end{lemma}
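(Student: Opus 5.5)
We build each $g$ as a composition $g=(\sigma,\pi)$, where $\sigma:[u]\to[T]$ handles the first coordinate and $\pi:[u]\to[L]$ gives local names. The family of such pairs is chosen so that the two requirements can be met independently.

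\emph{Reduction to domain size $\mathrm{poly}(K)$.} We first compose with a perfect hash family $f:[u]\to[K^2]$ that is injective on every $K$-set. Naor, Schulman, and Srinivasan give such families of size $K^{O(1)}\log u\le u^{O(1)}$. It therefore suffices to work over the domain $[K^2]$: a function valid for the image $f(S)$ pulls back to a function valid for $S$.

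\emph{First coordinate: an exact splitter.} Over $[K^2]$ we need a family of maps $\sigma:[K^2]\to[T]$ such that, for every $K$-set $S'$, some $\sigma$ has $|\sigma^{-1}(b)\cap S'|=q$ for every $b$. Take all $\sigma$ that are determined by at most $T-1$ cut points, i.e.\ $[K^2]$ is split into $T$ consecutive intervals. A sorted $K$-set is split into consecutive runs of exactly $q$ elements by placing cuts between the $(bq)$th and $(bq+1)$st elements. The number of such maps is at most $(K^2+1)^{T-1}=2^{O((K/q)\log K)}$. With $q\ge\sqrt k$ this is $2^{O(\sqrt K\log K)}=2^{o(K)}$. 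This is the classical $(K^2,K,T)$ splitter, so we can cite \cite{NaorSchulmanSrinivasan1995} or simply give this direct interval construction.

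\emph{Second coordinate: local injectivity.} Each block receives $q$ elements, and these must be injective into $[L]=[q^3]$. We take one $(K^2,q)$-perfect hash family into $[q^2]\subseteq[q^3]$ of size $q^{O(1)}\log K$ and choose a member $\pi_b$ for each block $b$ independently. This gives $(q^{O(1)}\log K)^{T}=2^{O((K/q)\log K)}=2^{o(K)}$ combinations. The map $\pi(x)\triangleq\pi_{\sigma(x)}(x)$ is then injective inside each block whenever each $\pi_b$ is injective on that block's $q$-set. Multiplying the three family sizes gives $2^{o(K)}u^{O(1)}$.

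The main obstacle is the $T$-fold product of choices in the second coordinate, and likewise the $T-1$ cut points. Both are affordable only because $T\log K=o(K)$, which requires $q=\omega(\log K)$. Our choice $q\approx\sqrt k$ comfortably ensures this. The remaining care is in checking that the perfect hash family sizes from \cite{NaorSchulmanSrinivasan1995} are $q^{O(1)}\log(\text{domain})$, so that raising them to the power $T$ stays subexponential.
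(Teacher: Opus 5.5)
Your proposal is correct and is essentially the paper's explicit construction (\Cref{lem:explicit-outer}): a perfect-hash reduction $[u]\to[K^2]$, an exact $(K^2,K,T)$-splitter for the block coordinate, and an independently chosen perfect hash into $[L]$ for each of the $T$ blocks, composed as $g(x)=\bigl(b(a(x)),\phi_{b(a(x))}(a(x))\bigr)$, with the same $T\log K=o(K)$ accounting. The only differences are cosmetic: you build the exact splitter directly from interval cut points rather than citing Naor--Schulman--Srinivasan, and the paper also gives a probabilistic existence argument before the explicit construction.
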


\begin{proof}
\Cref{app:outer-hashes} gives the probability calculation and then the stronger globally explicit construction used by the deterministic algorithm.
\end{proof}

\begin{theorem}[Amplification theorem]\label{thm:amplification}
Let $G$ be a fixed symmetric partial gadget of capacity $r\ge1$ over a $c$-element color set, with polynomials $A$ and $B$.
Then, for every rational $x\in(0,1)$, positive rational $y$, and $1\le k\le n$, there exists an acyclic NFA recognizing $\distlang{k}{n}$ with
\[
  |Q|+|\Delta|
  \le2^{(E(x,y)+o(1))k}n^{O(1)}\ .
\]
\end{theorem}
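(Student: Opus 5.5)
We build the NFA as a two-level product. The outer level is a nondeterministic union over the balanced outer hashes of \Cref{lem:outer-hash}. The inner level uses $T$ copies of the inner NFA $C_{q,L}$ from \Cref{prop:inner}.

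\emph{Parameters and padding.} We choose $q$ as the least allowed value with $q\ge\sqrt{k}$, and set $K$, $T=K/q$ and $L=q^3$ as in the text. As in Step~4 of \Cref{thm:profile}, we add $K-k$ fresh symbols to obtain an alphabet of size $\npad=n+K-k=O(n)$. We build an NFA for $\distlang{K}{\npad}$, delete the transitions on fresh symbols, and keep layers $0,\ldots,k$. We may assume $L\ge q$, so \Cref{prop:inner} applies to $C\triangleq C_{q,L}$.

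\emph{Construction for one outer hash.} Fix $g\in\mathcal G$, where $\mathcal G$ is the family of \Cref{lem:outer-hash} with $u=\npad$, and write $g(x)=(\beta(x),\sigma(x))$. The branch $P_g$ is the product of $T$ copies of $C$. Its state set is $Q(C)^T$, restricted to tuples whose total rank is at most $K$. On reading $x$, only coordinate $\beta(x)$ moves, and it follows a transition of $C$ labeled $\sigma(x)$.

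\emph{Soundness.} A repeated symbol $x$ sends the same label $\sigma(x)$ twice to the same coordinate. Since $C$ accepts only repetition-free words, $P_g$ rejects.

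\emph{Completeness.} Take a repetition-free word $w$ of length at most $K$, and extend its set to a $K$-set. Choose $g$ from \Cref{lem:outer-hash} that sends exactly $q$ elements to each block and is injective on second coordinates within each block. Then the projection of $w$ to block $i$ has length at most $q$ and is repetition-free over $[L]$, so $C$ accepts it. Interleaving these accepting runs accepts $w$.

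The union over $g$ is formed exactly as after \Cref{lem:cover}, with a fresh initial state, no $\varepsilon$-transitions, and all states accepting.

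\emph{Counting states.} By \eqref{eq:inner-states}, $|Q(C)|^T\le\bigl(2^{(E+o(1))q}L^{O(1)}\bigr)^T=2^{(E+o(1))K}\,L^{O(T)}$. Since $L=q^3$ and $T=K/q$, we have $L^{O(T)}=2^{O((K/q)\log q)}=2^{o(K)}$. The key point is that $L$ depends only on $q$ and not on $n$, which is why we use the second coordinate of the outer hash instead of the raw alphabet. The number of branches is $2^{o(K)}n^{O(1)}$.

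\emph{Counting transitions, the main obstacle.} We must not multiply the transition bound \eqref{eq:inner-transitions} across blocks. A transition of $P_g$ is determined by three things: a source tuple, a symbol $x$, and a transition of $C$ in coordinate $\beta(x)$ starting at that coordinate's state. The remaining $T-1$ coordinates range over at most $|Q(C)|^{T-1}$ values. So the transition count is at most
\[
  \npad\cdot|\Delta(C)|\cdot|Q(C)|^{T-1}\le n\cdot2^{O(q)}L^{O(1)}\cdot2^{(E+o(1))K}2^{o(K)}.
\]
Since $2^{O(q)}=2^{o(K)}$, this is $2^{(E+o(1))K}n^{O(1)}$.

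\emph{Conclusion.} Multiplying by the number of branches and using $K=k+o(k)$ gives $|Q|+|\Delta|\le2^{(E(x,y)+o(1))k}n^{O(1)}$. The step needing the most care is this transition accounting, together with checking that the $L^{O(1)}$ factor in \Cref{prop:inner} grows only polynomially in $q$, so that raising it to the power $T$ is subexponential. If the rank truncation at $K$ causes trouble, we can instead keep the full product, whose rank never exceeds $Tq=K$.
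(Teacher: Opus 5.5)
Your proposal is correct and follows essentially the same route as the paper's proof. You pad to a multiple $K$ of $q$, take for each balanced outer hash the product of $T=K/q$ copies of $C_{q,L}$ (first coordinate choosing the copy, second the local symbol), bound states by $|Q(C_{q,L})|^T$ with $L^{O(T)}=2^{o(K)}$, and bound transitions by $\npad\,|\Delta(C_{q,L})|\,|Q(C_{q,L})|^{T-1}$ so that the $2^{O(q)}$ inner transition factor enters only once; your optional rank truncation at $K$ is vacuous since ranks never exceed $Tq=K$, and you leave implicit the easy fact $\npad\ge K$ needed to extend to a $K$-set.
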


\begin{proof}
The proof has five steps.

\paragraph{Step 1: pad the target length}
Temporarily add $K-k$ fresh symbols to $[n]$.  The padded alphabet has size
\[
  \npad\triangleq n+K-k\ .
\]
We construct an automaton for $\distlang{K}{\npad}$ and later remove the fresh symbols.

\paragraph{Step 2: build one branch for one outer hash}
Construct the inner NFA $C_{q,L}$ from \Cref{prop:inner}.  Fix one function
\[
  g:[\npad]\longrightarrow[T]\times[L]
\]
from \Cref{lem:outer-hash}.  Take the raw product of $T$ copies of $C_{q,L}$. Write $g(a)=(g_1(a),g_2(a))$.
When the product automata reads a symbol $a\in[\npad]$, the value $g_1(a)\in[T]$ selects one inner copy, and $g_2(a)\in[L]$ is the {local symbol read by that copy.}

\paragraph{Step 3: proving the correctness of the union of branches}
Suppose an input symbol $a$ occurs twice.  Both occurrences have the same value $g(a)$.  The same inner copy therefore sees the same local symbol twice, and that copy rejects.  Thus every branch rejects every word with a repeated symbol.

Now suppose the input word is repetition-free and has length at most $K$. Let $S$ be its set of symbols.
Because $\npad=n+K-k\ge K$, extend $S$ to a $K$-element set $S'\subseteq[\npad]$, and choose an outer hash $g$ that is successful on $S'$.
By the definition of success in \Cref{lem:outer-hash}, for every $i\in[T]$ there are exactly $q$ elements $a\in S'$ with $g_1(a)=i$, and their values $g_2(a)$ are pairwise distinct.
Passing from $S'$ to $S$ only deletes elements.
Hence each inner copy receives at most $q$ symbols, and their local names remain pairwise distinct.
Each copy of $C_{q,L}$ therefore accepts its local word, so the branch for $g$ accepts the original word.
Taking the nondeterministic union over the outer hashes recognizes $\distlang{K}{\npad}$.

\paragraph{Step 4: count states}
Let
\[
  S_q=|Q(C_{q,L})|\ .
\]
One outer product branch has at most $S_q^T$ states.  By \eqref{eq:inner-states},
\[
  S_q
  \le2^{(E(x,y)+o(1))q}L^{O(1)}\ .
\]
Raise this bound to the power $T=K/q$:
\[
  S_q^T
  \le
  2^{(E(x,y)+o(1))K}L^{O(K/q)}\ .
\]
Since $L=q^3$,
\[
  L^{O(K/q)}
  =q^{O(K/q)}
  =2^{O((K/q)\log q)}
  =2^{o(K)}\ .
\]
Thus, the number of states in one branch is
\[
  2^{(E(x,y)+o(1))K}\ .
\]
By \Cref{lem:outer-hash}, the outer family has at most $2^{o(K)}\npad^{O(1)}$ functions.  Multiplying this by the number of states in one branch gives the explicit bound
\[
  |Q|\le 2^{(E(x,y)+o(1))K}\npad^{O(1)}\ .
\]

\paragraph{Step 5: count ordinary transitions and remove the padding}
Let
\[
  D_q=|\Delta(C_{q,L})|\ .
\]
A transition of the outer product updates one of the $T$ components.  Choose the other $T-1$ local states, choose one local transition in the updated component, and choose the original input symbol that triggers it.  This gives the bound
\[
  \npad D_qS_q^{T-1}
\]
for one branch.  By \eqref{eq:inner-transitions}, $D_q=2^{O(q)}L^{O(1)}$.  Since $S_q\ge1$, the number of transitions in one branch is therefore at most
\begin{align*}
  \npad D_qS_q^{T-1}
  &\le \npad D_qS_q^T\\
  &\le 2^{(E(x,y)+o(1))K}\npad^{O(1)}\ ,
\end{align*}
where we used $q=o(K)$ and $L=q^3$ to absorb $D_q$ into the $o(K)$ and polynomial terms.  Multiplying by the at most $2^{o(K)}\npad^{O(1)}$ outer hashes from \Cref{lem:outer-hash} gives
\[
  |\Delta|\le 2^{(E(x,y)+o(1))K}\npad^{O(1)}\ .
\]

Finally, delete every transition labeled by one of the $K-k$ fresh symbols and retain only layers $0,\ldots,k$. Every component automaton is layered, and every ordinary transition advances by one layer. Raw products, nondeterministic unions formed by merging initial states, and the deletion of transitions or terminal layers therefore preserve acyclicity. The resulting automaton is acyclic and recognizes $\distlang{k}{n}$. We have $K=k+o(k)$ and $\npad=O(n)$, so the claimed bound follows.
\end{proof}

\subsection{The sharper Witt bound}

Regard the Witt complete gadget as a partial gadget by certifying every color set of size at most six. Choose
\[
  x_0=\frac{173}{250}=0.692,
  \qquad
  y_0=\frac{1547}{1000}=1.547\ .
\]
Substitution into \eqref{eq:tilt} and \eqref{eq:Rxy}--\eqref{eq:Exy} gives
\[
\begin{array}{c|c}
\text{quantity}&\text{value}\\
\midrule
\mu(y_0)&5.2105167\ldots\\
\lambda&0.1919195\ldots\\
H(y_0)&0.5257292\ldots\\
R(x_0,y_0)&1.4469540\ldots\\
E(x_0,y_0)&1.9726832\ldots
\end{array}
\]
and therefore
\begin{equation}\label{eq:witt-alone}
  2^{E(x_0,y_0)}=3.9249744\ldots<3.925\ .
\end{equation}

By folding the $o(1)$ factor into the numerical slack of~\Cref{eq:witt-alone}, we conclude the following.
\begin{corollary}[Sharper Witt bound]\label{cor:witt-sharp}
There is an acyclic NFA recognizing $\distlang{k}{n}$ of size
\[
  3.925^k n^{O(1)}\ .
\]
\end{corollary}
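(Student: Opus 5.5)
The corollary follows from \Cref{thm:amplification} applied to the Witt complete gadget of \Cref{prop:witt}, viewed as a partial gadget. We certify every subset of $X$ of size at most six and use the parameters $x_0=173/250$ and $y_0=1547/1000$.

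First we check the hypotheses of \Cref{thm:amplification}. The gadget is symmetric: its layer counts $1,11,55,66,55,11,1$ are palindromic. It has capacity $r=6\ge1$ and $c=11$ colors. (P1)--(P4) hold for the family of all subsets of size at most six. (P2) and (P3) are exactly the completeness statement of \Cref{prop:witt}. (P1) and (P4) are immediate, since any set of size less than six can be enlarged inside $X$, which has $11>6$ points. Both $x_0\in(0,1)$ and $y_0>0$ are rational. Hence \Cref{thm:amplification} yields an acyclic NFA for $\distlang{k}{n}$ of size $2^{(E(x_0,y_0)+o(1))k}n^{O(1)}$.

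Next we must certify that $E(x_0,y_0)<\log_2 3.925$, with some positive margin $\delta$. This is the only step with content, and the main obstacle is making it rigorous rather than merely numerical. We would evaluate the polynomials $A_W(x_0)$, $B_W(y_0)$ and $B_W'(y_0)$ exactly as rationals, since both polynomials have integer coefficients. From these we obtain $\mu=y_0B_W'(y_0)/B_W(y_0)$ and $\lambda=1/\mu$ exactly. Then we bound the logarithms in \eqref{eq:Rxy} and \eqref{eq:Hy} by interval arithmetic with rigorous enclosures of $\log_2$ and $\log_2\e$. This should confirm the tabulated values $R(x_0,y_0)\approx1.44695$ and $H(y_0)\approx0.52573$. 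It should also confirm that the expression inside the maximum in \eqref{eq:Hy} is positive, so that $H$ equals it. Their sum is $E\approx1.9726832$, whereas $\log_2 3.925\approx1.97269$. The margin is therefore tiny but positive, of order $10^{-5}$, so the enclosures must be accurate to roughly six decimal places.

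Finally, we absorb the $o(1)$ term. With $\delta=\log_2 3.925-E(x_0,y_0)>0$, for all sufficiently large $k$ we have $2^{(E+o(1))k}\le 2^{(E+\delta)k}=3.925^k$. The finitely many small $k$ contribute only a constant factor, for example by using any fixed NFA, or by noting that the construction at bounded $k$ has size $n^{O(1)}$. Hence the size is $3.925^kn^{O(1)}$, as claimed.
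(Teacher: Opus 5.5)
Your proposal is correct and follows the paper's proof. Both apply \Cref{thm:amplification} to the Witt complete gadget, regarded as a partial gadget certifying all sets of size at most six, with $x_0=173/250$ and $y_0=1547/1000$, obtain $2^{E(x_0,y_0)}\approx 3.9249744<3.925$, and fold the $o(1)$ term into that slack. Your explicit check of symmetry and (P1)--(P4), and your interval-arithmetic plan for the roughly $10^{-5}$ margin, spell out details the paper leaves implicit.
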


The finite automaton in the fixed-histogram and sharper Witt bounds is identical. The improvement from \eqref{eq:witt-fixed} to \eqref{eq:witt-alone} is the first use of compose-and-compress: the number of Witt-gadget copies grows with $q$, the middle band is removed, and only the two tails are counted. The next section uses the same operation at finite scale to change the partial gadget itself.

\section{Second use: an eleven-Witt partial gadget}\label{sec:eleven-witt}

The first use compressed a growing product while leaving the finite Witt complete gadget unchanged.  We now use the same operation once at finite scale to construct a better partial gadget, and then feed that partial gadget back into the growing construction of \Cref{sec:amplification}.
The eleven-copy construction below is the finite composition used in the main theorem.

Take eleven Witt complete gadgets on disjoint eleven-color sets.  The raw product has $121$ colors, capacity $R=66$, and state polynomial $A_W(z)^{11}$.
Set the target capacity to
\[
  q_{11}=61\ .
\]
Then $s= R-q_{11}=5$ and
\[
  \ell=\left\lfloor\frac{q_{11}-1}{2}\right\rfloor=30\ ,
\]
so CaC deletes raw ranks $31,\ldots,35$.

\begin{figure}[H]
\centering
\includegraphics[width=0.94\textwidth]{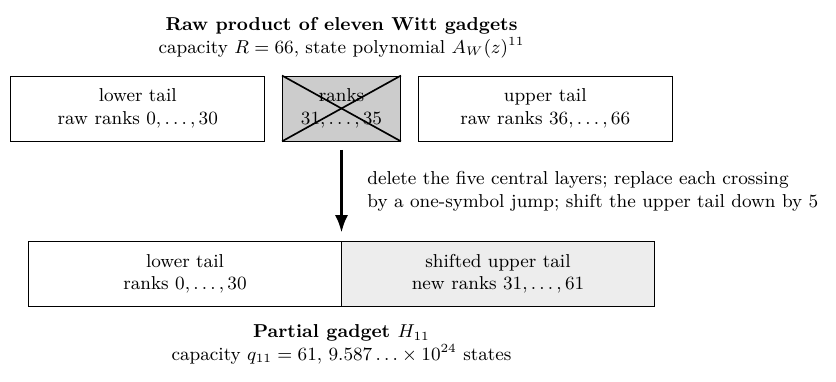}
\caption{The one finite composition used in the main theorem. Eleven Witt complete gadgets are composed, five central layers of their raw product are deleted, and the result is a capacity-$61$ partial gadget.}
\label{fig:eleven-witt}
\end{figure}

Call the resulting partial gadget $H_{11}$.  If
\[
  A_W(z)^{11}=\sum_{j=0}^{66}a_jz^j,
  \qquad
  B_W(z)^{11}=\sum_{j=0}^{66}b_jz^j\ ,
\]
then \Cref{lem:compress} gives
\begin{equation}\label{eq:Aeleven}
  A_{11}(z)
  =\sum_{j=0}^{30}a_jz^j
   +\sum_{j=36}^{66}a_jz^{j-5}\ ,
\end{equation}
\begin{equation}\label{eq:Beleven}
  B_{11}(z)=\sum_{j=0}^{61}b_jz^j\ .
\end{equation}
The polynomial $A_{11}$ is symmetric.  The raw product has
\[
  200^{11}=20{,}480{,}000{,}000{,}000{,}000{,}000{,}000{,}000
\]
states.  The five deleted layers contain
\[
  \sum_{j=31}^{35}[z^j]A_W(z)^{11}
  =10{,}892{,}645{,}599{,}457{,}631{,}372{,}405{,}834
\]
states, so
\[
  A_{11}(1)=9{,}587{,}354{,}400{,}542{,}368{,}627{,}594{,}166\ .
  \]
The exact coefficients are listed in \Cref{app:eleven-coefficients}.

\subsection{The final numerical bound}

Choose the rational values
\[
  x=\frac{153}{200}=0.765,
  \qquad
  y=\frac{81}{50}=1.62\ .
\]
Direct evaluation gives (see \Cref{app:final-bound-verification} for verification)
\[
  A_{11}(x)=3.481206546463651806859\ldots\times10^{21}\ ,
\]
\[
  B_{11}(y)=2.439760214853457547554\ldots\times10^{46}\ ,
\]
and
\[
  \mu(y)=\frac{yB_{11}'(y)}{B_{11}(y)}
  =57.2020375407\ldots,
  \qquad
  \lambda=\frac 1\mu=0.01748189475\ldots\ .
\]
Here $c=121$, because $H_{11}$ uses eleven disjoint eleven-color palettes.
Substituting the displayed values into the two terms of \eqref{eq:Rxy}--\eqref{eq:Hy} gives
\[
\begin{aligned}
  H(y)
    &=\log_2(121\lambda\e)+\log_2y-\lambda\log_2B_{11}(y)
      <0.525678\ ,\\
  R(x,y)
    &=\lambda \log_2A_{11}(x)-\frac12\log_2x
      <1.444240\ .
\end{aligned}
\]
Consequently,
\[
  E(x,y)<1.969918
\]
and
\begin{equation}\label{eq:final-number}
  2^{E(x,y)}<3.917459<3.918\ .
\end{equation}
Finally, the slack in \eqref{eq:final-number} absorbs the $o(1)$ term. Applying \Cref{thm:amplification}, together with the deterministic construction in \Cref{app:derandomization}, proves \Cref{thm:main}.

\section{Discussion}

The paper develops a gadget-amplification framework and improves it at three successive levels. Full-product amplification with the Witt gadget gives an $O^*(3.967^k)$-size NFA. Applying compose-and-compress at the growing scale improves the bound to $O^*(3.925^k)$. Applying finite compression first to obtain $H_{11}$ and then using growing amplification gives the proved $O^*(3.918^k)$ bound.

The final proved constant is not intrinsic to the amplification argument. One route to a better bound is a better starting complete gadget, obtained for example from a smaller balanced separating family or from a different finite design. A broader possibility is to start directly from a partial gadget whose certified sets have been chosen with later composition in mind, rather than obtaining partiality only by compressing a complete gadget. Such a gadget may certify fewer local color sets while having a state profile small enough to improve the combined exponent.

Stopping after one finite composition is a choice made for verifiability, not a limitation of the compose-and-compress method. In exploratory computations, starting from the Witt gadget and applying the sequence\[  \langle5,1\rangle,\quad  \langle3,4\rangle,\quad  \langle3,8\rangle,\quad  \langle2,5\rangle,\quad  \langle2,7\rangle,\quad  \langle2,9\rangle\ ,\] where $\langle m,s\rangle$ denotes composing $m$ copies of the current gadget and compressing a central band of width $s$, gave a size of $O^*(3.91322^k)$.
The exact state and certified-set polynomials grow rapidly along this sequence, requiring a computer-assisted computation and verification. We have preferred the slightly weaker bound from one finite composition because every finite input to that calculation can be inspected directly in the manuscript.

The most basic open problem is to close the gap between the known $2^k$ lower bound and the $3.918^k$ upper bound. A separate question is whether our construction can be leveraged to break the $4^k$ barrier for deterministic approximate counting of $k$-paths. For every fixed $\varepsilon>0$, Lokshtanov, Bj{\"o}rklund, Saurabh, and Zehavi give a deterministic $(1\pm\varepsilon)$-approximation algorithm running in $O^*(4^{k+o(k)})$ time~\cite{LokshtanovBjorklundSaurabhZehavi2021}. A direct application of our NFA does not give such an algorithm, because different repetition-free words may have different numbers of accepting runs. It would therefore be interesting to determine whether the construction can be made approximately balanced, so that every repetition-free word has approximately the same known number of accepting runs, while retaining \mbox{fewer than $4^k$ states.}

\clearpage
\appendix
\numberwithin{equation}{section}
\crefalias{section}{appsection}
\crefalias{subsection}{appsection}
\makeatletter
\@addtoreset{equation}{section}
\@addtoreset{figure}{section}
\@addtoreset{table}{section}
\makeatother
\renewcommand{\theequation}{\Alph{section}.\arabic{equation}}
\renewcommand{\thefigure}{\Alph{section}.\arabic{figure}}
\renewcommand{\thetable}{\Alph{section}.\arabic{table}}

\section{How the Witt histogram was selected}\label{app:witt-histogram-selection}

The theorem is valid for every feasible integral template.  We now explain the particular numbers used for the Witt gadget.

Before doing any calculus, let us state exactly what is being optimized.  For fixed $K$ and $N$, \eqref{eq:success} reads
\[
  p_{K,N}
  =\frac{K!}{(cN)^K}[z^K]B(z)^N\ .
\]
The factor $K!/(cN)^K$ is then fixed.  Thus a larger lower bound on $[z^K]B(z)^N$ gives a larger success probability $p_{K,N}$, and \Cref{lem:cover} then needs fewer hash functions.  This is why the coefficient is the quantity we want to make large.

A load distribution $\mathbf p$ used by $N$ copies must have mean
\[
  \sum_j j p_j=\frac{K}{N}=: \mu\ .
\]
For a rational distribution that is repeated as one histogram, \eqref{eq:coefficient-before-rescale} gives the lower bound
\[
  [z^K]B(z)^N
  \ge 2^{N\Psi_B(\mathbf p)-o(N)}\ .
\]
Consequently, when $K/N=\mu$ is fixed, the correct inner optimization problem is to maximize $\Psi_B(\mathbf p)$ over distributions of mean $\mu$.

There is also an outer tradeoff.  The mean $\mu$ is not fixed by the problem: changing it changes
\[
  \lambda=\frac{N}{K}=\frac1\mu\ .
\]
This changes both the hash exponent and the product-state cost $s(G)^N=s(G)^{\lambda K}$ in \eqref{eq:profile-exponent}.  We therefore make the choice in two stages:
\begin{enumerate}[label=(\roman*),leftmargin=2.4em]
  \item for each fixed mean $\mu$, find the distribution that gives the largest coefficient contribution; and
  \item vary $\mu$, or equivalently $\lambda$, to minimize the total NFA exponent.
\end{enumerate}
We now solve the fixed-mean optimization problem.  Let $J=\{j:B_j>0\}$, and consider real distributions $\mathbf p=(p_j)_{j\in J}$ satisfying
\[
  \sum_{j\in J}p_j=1,
  \qquad
  \sum_{j\in J}jp_j=\mu\ .
\]
Multiplying the objective by $\ln 2$ does not change its maximizer, so it is equivalent to maximizing
\[
  \widehat\Psi_B(\mathbf p)
  \triangleq(\ln 2)\Psi_B(\mathbf p)
  =\sum_{j\in J}p_j\ln\frac{B_j}{p_j}\ .
\]
Write $j_{\min}=\min J$ and $j_{\max}=\max J$. The possible mean values are exactly $j_{\min}\le\mu\le j_{\max}$. Fix a target average load with $j_{\min}<\mu<j_{\max}$. We first solve the stationarity equations subject to the two displayed constraints and the explicit conditions $p_j>0$ for every $j\in J$. The relative-entropy argument below will show that the resulting positive distribution is the unique global maximizer. Add multipliers $\alpha$ and $\theta$ for the two displayed constraints. The Lagrangian is
\[
  \mathcal L(\mathbf p,\alpha,\theta)
  =\sum_{j\in J}p_j\ln\frac{B_j}{p_j}
   +\alpha\left(\sum_{j\in J}p_j-1\right)
   +\theta\left(\sum_{j\in J}jp_j-\mu\right)\ .
\]
Differentiating with respect to $p_j$ gives one equation for each load $j$:
\[
  0=\frac{\partial\mathcal L}{\partial p_j}
   =\ln B_j-\ln p_j-1+\alpha+\theta j\ .
\]
Solving this equation for $p_j$ yields
\[
  p_j=B_j\e^{\alpha-1}\e^{\theta j}\ .
\]
The multiplier $\theta$ appears only through the powers $\e^{\theta j}$.
Define
\[
  y\triangleq\e^\theta>0\ .
\]
This single positive parameter indexes the stationary distributions; it is not an additional load, probability, or constraint.  The normalization constraint now determines the remaining constant, because
\[
  1=\sum_{j\in J}p_j
   =\e^{\alpha-1}\sum_{j\in J}B_jy^j
   =\e^{\alpha-1}B(y)\ .
\]
Consequently every stationary distribution has the tilted form
\begin{equation}\label{eq:fixed-tilt}
  p_j(y)=\frac{B_jy^j}{B(y)}
  \qquad (y>0)\ .
\end{equation}

The mean constraint selects the value of $y$.  Direct substitution also gives both identities used below:
\begin{equation}\label{eq:fixed-tilt-identities}
\begin{aligned}
  \mu(y)
  &=\sum_{j\in J}j\,p_j(y)\\
  &=\frac{1}{B(y)}\sum_{j\in J}jB_jy^j
    =\frac{yB'(y)}{B(y)}\\[2pt]
  \Psi_B(\mathbf p(y))
  &=\sum_{j\in J}p_j(y)
      \bigl(\log_2B_j-\log_2p_j(y)\bigr)\\
  &=\sum_{j\in J}p_j(y)
      \bigl(\log_2B(y)-j\log_2y\bigr)\\
  &=\log_2B(y)-\mu(y)\log_2y\ .
\end{aligned}
\end{equation}
To see how the mean changes with $y$, fix a load $j$ and differentiate $p_j(y)=B_jy^j/B(y)$.  Here $j$ and $B_j$ are constants; only $y^j$ and $B(y)$ depend on $y$.  The quotient rule gives
\begin{align*}
  \frac{d p_j(y)}{dy}
  &=B_j\frac{jy^{j-1}B(y)-y^jB'(y)}{B(y)^2}\\
  &=\frac{B_jy^j}{B(y)}
    \left(\frac{j}{y}-\frac{B'(y)}{B(y)}\right)\\
  &=p_j(y)\left(\frac{j}{y}-\frac{B'(y)}{B(y)}\right)\ .
\end{align*}
The identity $\mu(y)=yB'(y)/B(y)$ says that $B'(y)/B(y)=\mu(y)/y$.  Substituting it into the last line yields
\begin{equation}\label{eq:tilted-probability-derivative}
  \frac{d p_j(y)}{dy}
  =\frac{p_j(y)}{y}\bigl(j-\mu(y)\bigr)\ .
\end{equation}
Thus increasing $y$ increases the probability of every load above the current mean and decreases the probability of every load below it.
\par\pagebreak[3]
Since $\mu(y)=\sum_{j\in J}j\,p_j(y)$, differentiating term by term and using \eqref{eq:tilted-probability-derivative} gives
\begin{align*}
  \mu'(y)
  &=\sum_{j\in J}j\,\frac{d p_j(y)}{dy}\\
  &=\frac1y\sum_{j\in J}j p_j(y)\bigl(j-\mu(y)\bigr)\\
  &=\frac1y\left(\sum_{j\in J}j^2p_j(y)
       -\mu(y)\sum_{j\in J}j p_j(y)\right)\\
  &=\frac1y\left(\sum_{j\in J}j^2p_j(y)-\mu(y)^2\right)\\
  &=\frac{\operatorname{Var}_{\mathbf p(y)}(j)}{y}\ .
\end{align*}
Because $y>0$, this derivative is positive unless only one load is available.
Thus in the Witt case, where all loads $0,\ldots,6$ are available, $\mu(y)$ increases continuously from $0$ to $6$.  Therefore, given any target average load $\mu$ with $0<\mu<6$, exactly one $y>0$ solves $\mu(y)=\mu$.  The cases $\mu=0$ and $\mu=6$ arise as the limits $y\to0$ and $y\to\infty$.

The Lagrange equations identify a stationary candidate.  To prove that this candidate is the global maximizer, we compare its objective value with that of every other feasible distribution.  Fix $y>0$, write $\mathbf q=\mathbf p(y)$, and let $\mathbf p$ be any other distribution with the same mean $\mu(y)$.  From \eqref{eq:fixed-tilt},
\[
  \log_2q_j
  =\log_2B_j+j\log_2y-\log_2B(y)\ ,
\]
so the base-two relative entropy can be expanded one term at a time:
\begin{align*}
  D_2(\mathbf p\Vert\mathbf q)
  &\triangleq\sum_{j\in J}p_j\log_2\frac{p_j}{q_j}\\
  &=\sum_{j\in J}p_j\log_2p_j
    -\sum_{j\in J}p_j\log_2B_j\\
  &\quad-\left(\sum_{j\in J}jp_j\right)\log_2y
    +\left(\sum_{j\in J}p_j\right)\log_2B(y)\\
  &=-\Psi_B(\mathbf p)-\mu(y)\log_2y+\log_2B(y)\ .
\end{align*}
The last equality uses both feasibility constraints: $\sum_jp_j=1$ and $\sum_jjp_j=\mu(y)$.  Applying the second identity in \eqref{eq:fixed-tilt-identities} to $\mathbf q$ gives
\[
  \Psi_B(\mathbf q)=\log_2B(y)-\mu(y)\log_2y\ .
\]
Comparing the last two displays yields the exact gap identity
\[
  D_2(\mathbf p\Vert\mathbf q)
  =\Psi_B(\mathbf q)-\Psi_B(\mathbf p)\ .
\]
It remains only to note why the left-hand side is nonnegative.  For each $j$ with $p_j>0$, apply $-\ln x\ge 1-x$ to $x=q_j/p_j$.  Then
\[
  (\ln 2)D_2(\mathbf p\Vert\mathbf q)
  =-\sum_{j:p_j>0}p_j\ln\frac{q_j}{p_j}
  \ge\sum_{j:p_j>0}(p_j-q_j)
  =1-\sum_{j:p_j>0}q_j
  \ge0\ .
\]
The final inequality holds because $\mathbf q$ is a probability distribution, so the sum of its entries over any subset of $J$ is at most one.  Equality in all the preceding inequalities is possible only when $\mathbf p=\mathbf q$.
Therefore
\[
  \Psi_B(\mathbf p)\le\Psi_B(\mathbf q)\ ,
\]
with equality only for $\mathbf p=\mathbf q$.  This proves that the tilted distribution is the unique global maximizer for its mean.

We can now see exactly what varying $y$ does.  Define
\[
  \lambda(y)\triangleq\frac1{\mu(y)}\ .
\]
For this value of $\lambda$, the tilt $\mathbf p(y)$ is the best continuous histogram with mean $1/\lambda(y)$.  Substituting \eqref{eq:fixed-tilt-identities} into \eqref{eq:profile-exponent} turns the continuous search into the one-variable minimization of
\begin{align*}
  H_{\mathrm{raw}}(y)
  &=\log_2(c\lambda(y)\e)\\
  &\quad-
    \lambda(y)\bigl(\log_2B(y)-\mu(y)\log_2y\bigr)\ .
\end{align*}
Here $H_{\mathrm{raw}}(y)$ is the hash rate before truncating it at zero.
The total exponent is therefore
\[
  E_{\mathrm{tilt}}(y)
  =\lambda(y)\log_2s(G)
   +\max\{0,H_{\mathrm{raw}}(y)\}\ .
\]
Increasing $y$ gives the larger loads more weight through the factor $y^j$.
This raises the mean $\mu(y)$ and lowers the number $\lambda(y)=N/K$ of gadget copies per input symbol, but it also changes the coefficient and hash contributions.  The displayed function records this tradeoff completely.

Thus varying $y$ allows us to optimize the continuous histogram for each possible mean in terms of the final NFA exponent.  At the Witt minimum, $H_{\mathrm{raw}}(y)=0.5691\ldots>0$, so the maximum selects $H_{\mathrm{raw}}(y)$.  Substituting $\lambda(y)=1/\mu(y)$ therefore gives
\[
  E_{\mathrm{tilt}}(y)
  =\log_2y+\log_2\frac{c\e}{\mu(y)}
   +\frac{\log_2s(G)-\log_2B(y)}{\mu(y)}\ .
\]
This is an ordinary one-variable function.  A numerical minimization for the Witt values $c=11$, $s(G)=200$, and $B=B_W$ gives
\[
  y=1.942858778\ldots\ .
\]
Equivalently, differentiating the last display with respect to $t=\ln y$ shows that an interior stationary point satisfies
\[
  \frac{dE_{\mathrm{tilt}}}{dt}
  =-\frac{\operatorname{Var}_{\mathbf p(y)}(j)}
          {\mu(y)^2\ln2}
    \left(\mu(y)+\ln\frac{s(G)}{B(y)}\right)=0\ ,
\]
or
\[
  B(y)=s(G)\e^{\mu(y)}\ .
\]
Solving this equation gives the same numerical value.  We round it to $y\approx1.94286$, for which
\[
\begin{array}{c|rrrrrrr}
  j&0&1&2&3&4&5&6\\
\hline
  1000p_j(y)&0.023&0.488&4.742&27.640&107.402&292.133&567.572\ .
\end{array}
\]
Rounding these seven values gives the simple integral template in \Cref{tab:witt-loads}.  The first two entries round to zero, so only five loads need to be displayed.  The proof in the main text does not assume that this rounded template is optimal; it recomputes the final exponent directly from the five displayed integers.

There is also no asymptotic reason to sum every histogram in \eqref{eq:all-histograms}.  For fixed capacity $r$, the number of feasible histograms is at most $(N+1)^{r+1}$, which is polynomial in $N$.  Hence the largest single summand and the full coefficient differ by at most a polynomial factor.  The denominator $1000$ is used only to make the chosen approximation short and transparent.

\section{Derandomization and uniform construction}\label{app:derandomization}

The main proof uses random hash functions only to estimate how many hashes are needed.  This appendix replaces those hashes by deterministic families and shows that the complete NFA can be listed within the asymptotic bound of \Cref{thm:main}.  There are two families to construct:
\begin{enumerate}[label=(\roman*),leftmargin=3em]
  \item the \emph{inner hashes} in \Cref{prop:inner}, which map the alphabet $[L]$ into the colors of a product partial gadget; and
  \item the \emph{outer hashes} in \Cref{lem:outer-hash}, which divide the input symbols evenly among $T$ outer blocks and are injective inside each block.
\end{enumerate}
We use the globally explicit splitter and $k$-restriction constructions of Naor, Schulman, and Srinivasan~\cite[Sections~3 and~4.4, pp.~184--187]{NaorSchulmanSrinivasan1995}.

For reference, the parameters have the same meanings as in the main proof:
\begin{center}
\begin{tabular}{@{}cl@{}}
$k$ & requested maximum word length,\\
$q$ & intermediate block capacity, chosen as $\Theta(\sqrt{k})$,\\
$K$ & the least multiple of $q$ with $K\ge k$,\\
$T=K/q$ & number of outer blocks,\\
$L=q^3$ & alphabet size of one inner NFA.
\end{tabular}
\end{center}
The divisibility adjustment adds $K-k$ fresh symbols to the input alphabet.
We call the resulting alphabet the \emph{padded alphabet} and denote its size by
\begin{equation}\label{eq:npad}
  \npad\triangleq n+K-k\ .
\end{equation}
The symbol $\npad$ is therefore notation, not a second input parameter.  At the end we delete all transitions labeled by the fresh symbols and keep only ranks $0,\ldots,k$.  Because $K-k<q=O(\sqrt{k})$ and $n\ge k$, we have $\npad=O(n)$.

The two replacements used below are worth keeping separate.  The inner hash family is obtained by deterministic search on the small domain $[q^3]$ and costs $2^{O(q\log q)}=2^{o(k)}$ time.  The outer hash family is assembled from three explicit splitter families and has $2^{o(K)}\log(2\npad)$ members.

\subsection{One deterministic hash-family lemma}

We first state the precise part of the $k$-restriction framework that we use.
In both applications below, $\Gamma$ is the set of assignments of hash values to one fixed $t$-element input set that make a product-automaton branch succeed.
The lemma turns the fraction of assignments that belong to $\Gamma$ into a deterministic family of hash functions such that every $t$-element input set receives an assignment in $\Gamma$.

\begin{lemma}[Deterministic realization of allowed hash patterns]
\label{lem:explicit-restriction}
Let $u,t,b$ be positive integers with $1\le t\le u$ and $b\le u$.  Let $\varnothing\ne\Gamma\subseteq[b]^t$ be closed under permutations of its coordinates: if $(x_1,\ldots,x_t)\in\Gamma$, then $(x_{\pi(1)},\ldots,x_{\pi(t)})\in\Gamma$ for every permutation $\pi$ of $[t]$. Put
\[
  \delta=\frac{|\Gamma|}{b^t}\ .
\]
Assume that membership in $\Gamma$ can be tested in time $\tau$.  Then one can deterministically construct a family $\mathcal R$ of functions $h:[u]\to[b]$ such that, for every $S=\{s_1<\cdots<s_t\}\subseteq[u]$, some $h\in\mathcal R$ satisfies
\[
  (h(s_1),\ldots,h(s_t))\in\Gamma\ .
\]
If $\delta=1$, one function suffices.  If $0<\delta<1$, the family has size at most
\begin{equation}\label{eq:restriction-size}
  \left\lceil
    \frac{t\ln u+1}{-\ln(1-\delta)}
  \right\rceil
  =O\!\left(\delta^{-1}t\log(2u)\right)\ .
\end{equation}
In either case it can be constructed in time
\begin{equation}\label{eq:restriction-time}
  O\!\left(
    \delta^{-1}\left(\frac{\e u^2}{t}\right)^t\tau
  \right)
  \le \delta^{-1}u^{2t+O(1)}\tau\ .
\end{equation}
\end{lemma}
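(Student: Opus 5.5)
The plan is a greedy set-cover construction, the method of conditional probabilities in its most elementary form. The universe is the family of all $t$-subsets $S\subseteq[u]$, at most $\binom ut\le(\e u/t)^t$ of them. A function $h$ covers $S$ when $(h(s_1),\ldots,h(s_t))\in\Gamma$. If $\delta=1$, every tuple lies in $\Gamma$ and any single function works, so assume $0<\delta<1$.

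First I establish the averaging fact. For a fixed $S$ and a uniformly random $h:[u]\to[b]$, the tuple $(h(s_1),\ldots,h(s_t))$ is uniform on $[b]^t$. Hence $S$ is covered with probability exactly $\delta$. Permutation closure of $\Gamma$ makes coverage independent of the chosen ordering of $S$. It follows that for any set $\mathcal U$ of uncovered subsets, some $h$ covers at least a $\delta$ fraction of $\mathcal U$.

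Next I run the greedy process: repeatedly choose a function covering at least $\delta|\mathcal U|$ of the still-uncovered sets. After $m$ rounds at most $\binom ut(1-\delta)^m\le u^t(1-\delta)^m$ sets remain uncovered. This quantity drops below $1$ once $m>t\ln u/(-\ln(1-\delta))$, which gives \eqref{eq:restriction-size}. Since $-\ln(1-\delta)\ge\delta$, the family size is $O(\delta^{-1}t\log(2u))$.

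The main obstacle is making each greedy step deterministic and fast. Searching over all $b^u$ functions is far too expensive, so the search must be restricted to a small candidate space. I would fix the values of $h$ one coordinate at a time, choosing each value to maximize the conditional expected number of newly covered sets. Each conditional expectation is a sum over uncovered $S$ of the probability that a partially fixed tuple completes into $\Gamma$. That probability is a ratio of counts of completions, which can be computed by enumerating $[b]^t$ restricted to consistent tuples with $\tau$-time membership tests.

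For the time bound, I would instead choose a candidate space whose size matches \eqref{eq:restriction-time}. One option is to charge each round at most $u\cdot b\cdot\binom ut$ evaluations, with each evaluation costing at most $\delta^{-1}$-amortized work. Summing over the $O(\delta^{-1}t\log u)$ rounds and using $b\le u$ gives at most $\delta^{-1}u^{2t+O(1)}\tau$, which is the form claimed. I expect the bookkeeping that produces exactly the $(\e u^2/t)^t$ factor to be the delicate part. If the conditional-expectation counting is too costly, I would fall back on the NSS $k$-restriction lemma, which gives exactly this search bound.
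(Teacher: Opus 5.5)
\textbf{The size bound is fine; the running time is not derived.} The averaging fact plus greedy covering of the at most $\binom ut\le u^t$ sets gives \eqref{eq:restriction-size}. This is exactly the argument behind the Naor--Schulman--Srinivasan $k$-restriction theorem, which the paper simply invokes.

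Your last paragraph does not establish \eqref{eq:restriction-time}. The phrase ``each evaluation costing at most $\delta^{-1}$-amortized work'' has no basis: an exact completion count costs up to $b^{t}\tau$. Read literally, multiplying it by $O(\delta^{-1}t\log u)$ rounds gives $\delta^{-2}$ with no factor $\tau$, and the claimed $u^{2t}$ is asserted rather than counted. The ``candidate space whose size matches'' is also never specified. Such a space needs every fixed $t$-set to still receive a uniform tuple in $[b]^t$, so that exactly a $\delta$ fraction of candidates covers it. That property is $t$-wise independence, and it is the ingredient of the paper's route: NSS run the greedy over a $t$-wise independent space $\mathcal H_{u,t,b}$ with $|\mathcal H_{u,t,b}|\le u^t$, which is where $b\le u$ is used.

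\textbf{Where $\delta^{-1}$ must come from.} In either route, the factor $\delta^{-1}$ should not be (number of rounds)$\times\binom ut$, which loses a factor of about $t\log u$. It should come from the geometric decay $|\mathcal U_m|\le\binom ut(1-\delta)^m$, which gives only $\sum_m|\mathcal U_m|\le\binom ut/\delta$ (set, round) pairs in total.

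With the sample space, each pair is checked against every $h\in\mathcal H_{u,t,b}$ at cost $\tau$. This gives $O(\delta^{-1}\binom ut u^t\tau)\le O(\delta^{-1}(\e u^2/t)^t\tau)$ via $\binom ut\le(\e u/t)^t$, which is the paper's proof.

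Your conditional-expectation route can also be completed, and then it needs no independent space at all:
\begin{itemize}
\item At step $i$, only the uncovered sets containing $i$ distinguish the $b$ candidate values.
\item For $S$ with $s_j=i$, the completion count costs $b^{t-j}\tau$.
\item Over all $t$ elements and $b$ values, one set therefore costs $\sum_{j=1}^t b^{t-j+1}\tau\le2b^t\tau$ per round, for $b\ge2$ (and $b=1$ forces $\delta=1$).
\item Summing with the geometric decay gives $O(\delta^{-1}\binom ut b^t\tau)$, which is within the stated bound because $b\le u$.
\end{itemize}

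Without these two observations, an honest crude count of your procedure (every set, every step, every value, full enumeration, every round) yields only the weaker right-hand side $\delta^{-1}u^{2t+O(1)}\tau$. That is the form the later applications use, but it is not the middle expression of \eqref{eq:restriction-time}. Your suggested fallback to the NSS theorem is precisely the paper's one-line proof.
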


\begin{proof}
Apply the one-constraint case of the $k$-restriction theorem of Naor, Schulman, and Srinivasan~\cite[Theorem~1, p.~185]{NaorSchulmanSrinivasan1995} with domain size $u$, restriction size $t$, alphabet size $b$, one constraint, and allowed set $\Gamma$.
Exactly $|\Gamma|$ of the $b^t$ possible assignments belong to $\Gamma$, so the fraction of allowed assignments is $\delta=|\Gamma|/b^t$. The theorem therefore gives a family of size at most the bound in \eqref{eq:restriction-size}.
Its construction time is
\[
  O\!\left(\delta^{-1}\binom ut\tau\,|\mathcal H_{u,t,b}|\right)\ ,
\]
where the explicit $t$-wise independent space used in that theorem satisfies $|\mathcal H_{u,t,b}|\le u^t$ because $b\le u$.
Using $\binom ut\le(\e u/t)^t$ gives \eqref{eq:restriction-time}.
\end{proof}

The four quantities in this lemma have direct meanings in our applications: $u$ is the domain of a hash, $t$ is the size of the input set that must be covered, $b$ is the number of hash values, and $\delta$ is exactly the success probability of one uniformly random hash on a fixed $t$-set.

\subsection{The hash family in Lemma~\ref{lem:cover}}

We first derandomize the family used in the fixed-histogram analysis.  Fix the gadget $G$, the number $N$ of copies, and the target length $k$ as in \Cref{sec:fixed-profile}.  Each hash maps an input symbol to a pair $(i,a)\in[N]\times C$, where $i$ selects one of the $N$ gadget copies and $a$ is a color of that copy.  Thus the hash has $cN$ possible values, so the parameter $b$ in \Cref{lem:explicit-restriction} is $b=cN$.  Let $\Gamma_{k,N}\subseteq([N]\times C)^k$ contain the ordered assignments with the following two properties:
\begin{enumerate}[label=(\alph*),leftmargin=3em]
  \item no copy-color cell occurs twice; and
  \item each copy receives at most $r$ cells, where $r$ is the capacity of $G$.
\end{enumerate}
The condition is unchanged when the $k$ coordinates are permuted.  Moreover,
\[
  |\Gamma_{k,N}|=k![z^k]B_G(z)^N\ ,
\]
so its density is the success probability in \eqref{eq:success}:
\[
  \frac{|\Gamma_{k,N}|}{(cN)^k}=p_{k,N}\ .
  \]
Membership is tested in $k^{O(1)}$ time by recording the used cells and the load of every copy.

Applying \Cref{lem:explicit-restriction} directly with domain $[n]$ would cost $n^{O(k)}$ time.  We therefore perform the standard size reduction first.
An $(u,t,\ell)$-splitter is a family of maps from $[u]$ to $[\ell]$ such that, for every $t$-element set $S\subseteq[u]$, some map assigns either $\lfloor t/\ell\rfloor$ or $\lceil t/\ell\rceil$ elements of $S$ to each value in $[\ell]$.  In particular, an $(u,t,t^2)$-splitter contains a map that is injective on every prescribed $t$-set.  Lemma~2 of Naor, Schulman, and Srinivasan~\cite[Lemma~2, p.~186]{NaorSchulmanSrinivasan1995} gives a globally explicit $(n,k,k^2)$-splitter $\mathcal A$ of size
\[
  |\mathcal A|=O(k^6\log k\log(2n))\ .
  \]

Now apply \Cref{lem:explicit-restriction} on the reduced domain $[k^2]$ with allowed patterns $\Gamma_{k,N}$.  In the fixed-histogram application $N=\Theta(k)$ and $c$ is fixed, so $cN\le k^2$ for all sufficiently large $k$; the finitely many remaining values can be handled directly.  This gives maps $r:[k^2]\to[N]\times C$ forming a family $\mathcal R$ of size
\[
  |\mathcal R|
  =O\!\left(p_{k,N}^{-1}k\log(2k)\right)\ .
\]
Define
\[
  \mathcal H_{\mathrm{fix}}
  \triangleq\{r\circ a:a\in\mathcal A,\ r\in\mathcal R\}\ .
  \]
For a $k$-set $S\subseteq[n]$, choose $a\in\mathcal A$ injective on $S$ and then choose $r\in\mathcal R$ that realizes an allowed pattern on the set $a(S)$.  Thus some member of $\mathcal H_{\mathrm{fix}}$ succeeds on $S$.
A smaller set is covered by extending it to a $k$-set, exactly as in \Cref{lem:cover}.  We have
\[
  |\mathcal H_{\mathrm{fix}}|
  \le p_{k,N}^{-1}k^{O(1)}\log(2n)\ ,
  \]
and the family can be listed in time complexity of
\[
  p_{k,N}^{-1}k^{O(k)}n^{O(1)}\ .
  \]
This already makes the fixed-histogram NFA a uniform fixed-parameter construction.  The factor $k^{O(k)}$ does not preserve the base displayed in \Cref{thm:profile}; preserving the final base requires the two-scale construction below.

\subsection{Explicit inner hashes}

We now derandomize the hashes in \Cref{prop:inner}.  Using the notation of that proposition: $G$ is the fixed partial gadget, $q$ is an allowed intermediate length, $N=\lambda q$ copies are composed, and the input alphabet for the resulting inner NFA is $[L]$, where later $L=q^3$.

Let $D=[N]\times C$ be the color set of the raw product.  Let $\Gamma_q\subseteq D^q$ contain the ordered $q$-tuples for which
\begin{enumerate}[label=(\alph*),leftmargin=3em]
  \item all $q$ copy-color cells are distinct; and
  \item in each copy, the set of local colors is certified by $G$.
\end{enumerate}
This is precisely the event counted in \eqref{eq:inner-success}.  Therefore
\[
  |\Gamma_q|=q![z^q]B_G(z)^N
  \qquad\text{and}\qquad
  \frac{|\Gamma_q|}{(cN)^q}=\rho\ .
  \]
The partial gadget is fixed, so its certified family is part of its constant-size description.  Membership in $\Gamma_q$ is consequently testable in $q^{O(1)}$ time.  For the Witt complete gadget the local test is simply $|S|\le6$.  For the eleven-Witt partial gadget $H_{11}$, a set is certified exactly when it uses at most six colors from each Witt palette and at most $61$ colors in total.

Apply \Cref{lem:explicit-restriction} with
\[
  u=L=q^3,
  \qquad t=q,
  \qquad b=cN,
  \qquad \delta=\rho\ .
\]
For large $q$, $b=O(q)\le L$.  We obtain a deterministic family of
\begin{equation}\label{eq:explicit-inner-family-size}
  O\!\left(\rho^{-1}q\log(2L)\right)
\end{equation}
inner hashes $h:[L]\to[N]\times C$.  Its construction time is
\[
  \rho^{-1}L^{2q+O(1)}=2^{O(q\log q)}\ .
  \]
To justify the last equality, property (P4) gives at least one certified product set of size $q$, hence $[z^q]B_G(z)^N\ge1$.  Since $q!\ge(q/\e)^q$ and $cN=c\lambda q$, equation \eqref{eq:inner-success} yields
\[
  \rho\ge(c\lambda\e)^{-q}\ .
\]
Thus $\rho^{-1}=2^{O(q)}$, while $L^{2q+O(1)}=2^{O(q\log q)}$.

We must also construct the compressed product, rather than merely count its states.  The raw product of $N=\Theta(q)$ copies of a fixed partial gadget has $2^{O(q)}$ states and ordinary transitions, so it can be generated and its unreachable states removed in $2^{O(q)}$ time.  Let the deleted band have width $s$, as in \Cref{lem:compress}.  For every reachable state $u$ at the lower boundary, compute all states $w$ reachable from $u$ after exactly $s$ raw transitions.  For every raw transition $w\xrightarrow{a}v$ into the retained upper boundary, add the compressed transition $u\xrightarrow{a}v$.

No search over hidden color sets is required.  Choose any path from the initial state to $u$.  If a continuation from $u$ repeated a color, either inside the continuation or from the chosen prefix, their concatenation would be a path of the raw product labeled by a repeated color.  This contradicts partial-gadget soundness, because every raw state is accepting.  Hence every path found by the layered reachability computation is a valid witness for the shortcut.
Even a cubic-time reachability computation in the raw product takes $2^{O(q)}$ time.

Finally, relabel the compressed product by every hash in \eqref{eq:explicit-inner-family-size} and take their nondeterministic union.
This deterministically lists the inner NFA $C_{q,L}$ from \Cref{prop:inner}.  Its preprocessing cost is $2^{O(q\log q)}$, in addition to time polynomial in the number of states and transitions that are output.

\subsection{Explicit outer hashes}\label{app:outer-hashes}

We first give the short probability calculation behind \Cref{lem:outer-hash} and then replace it by the standard explicit splitter construction.

\paragraph{Nonuniform existence proof}
Choose one function $g:[u]\to[T]\times[L]$ uniformly at random.  Its two coordinates play different roles.  The first coordinate maps the $K=qT$ labeled elements of a fixed $K$-set to $T$ blocks.  The fraction of assignments placing exactly $q$ elements in every block is
\[
  \rho_{\mathrm{bal}}
  \triangleq\frac{K!}{(q!)^T T^K}\ .
\]
Stirling's formula gives
\[
  -\log_2\rho_{\mathrm{bal}}=O(T\log q)=o(K)\ ,
\]
because $T=K/q$ and $q\to\infty$.

Condition on balanced first coordinates.  Inside one block, the probability that the $q$ second coordinates are distinct is
\[
  \frac{\fall{L}{q}}{L^q}
  =\prod_{i=0}^{q-1}\left(1-\frac{i}{L}\right)\ .
\]
The $T$ blocks are independent.  For $L=q^3$ and sufficiently large $q$, $i/L\le1/2$; hence $\log(1-z)\ge-2z$ gives
\[
  -\log\rho_{\mathrm{inj}}
  \le\frac{2T}{L}\sum_{i=0}^{q-1}i
  =O\!\left(\frac{K}{q^2}\right)
  =o(K)\ .
\]
Thus one random function succeeds on the fixed set with probability $\rho_{\mathrm{bal}}\rho_{\mathrm{inj}}=2^{-o(K)}$.  Taking $2^{o(K)}K\log(2u)$ independent functions and applying a union bound over the at most $u^K$ possible $K$-sets proves the existence claim.

\paragraph{Globally explicit construction}
We next replace the random family by published splitter and perfect-hash families.  Recall that
\[
  K=qT,
  \qquad L=q^3,
  \qquad q=\Theta(\sqrt K)\ ,
\]
and that the outer hashes are defined on the temporary alphabet $[\npad]$ from \eqref{eq:npad}.  We use three explicit splitter families from Naor, Schulman, and Srinivasan
\cite[Lemma~2 and Theorem~3(i),(iv), pp.~186--187]{NaorSchulmanSrinivasan1995}:
\begin{enumerate}[label=(\roman*),leftmargin=3em]
  \item an $(\npad,K,K^2)$-splitter $\mathcal A$, with
        \begin{equation}\label{eq:A-outer}
          |\mathcal A|=K^{O(1)}\log(2\npad)\ ,
        \end{equation}
        that contains a map injective on every prescribed $K$-set;
  \item a $(K^2,K,T)$-splitter $\mathcal B$, with
        \[
          |\mathcal B|
          =O\!\left(\frac{K^{2T+O(1)}\log K}{T!}\right)
          =2^{O(T\log K)}=2^{o(K)}\ ,
          \]
        that maps exactly $q$ elements of every prescribed $K$-set to each value in $[T]$; and
  \item a $(K^2,q,L)$-splitter $\mathcal P$, with
        \[
          |\mathcal P|=q^{O(1)}\log K\ ,
          \]
        that contains a map injective on every prescribed $q$-set.  Here $L=q^3\ge q^2$, so Theorem~3(iv) applies.
\end{enumerate}
These are globally explicit families: each family can be listed in time polynomial in its domain size and its own output size.

For $a\in\mathcal A$, $b\in\mathcal B$, and a tuple $\boldsymbol\phi=(\phi_1,\ldots,\phi_T)\in\mathcal P^T$, define
\begin{equation}\label{eq:explicit-outer-map}
  g_{a,b,\boldsymbol\phi}(x)
  \triangleq\bigl(b(a(x)),\ \phi_{b(a(x))}(a(x))\bigr)
  \in[T]\times[L]\ .
\end{equation}
Let $\mathcal G$ contain all maps in \eqref{eq:explicit-outer-map}.

\begin{lemma}[Explicit outer family]\label{lem:explicit-outer}
For every $K$-set $S\subseteq[\npad]$, some $g\in\mathcal G$ sends exactly $q$ elements of $S$ to each outer bucket and is injective on the second coordinates inside every bucket.  Moreover,
\begin{equation}\label{eq:explicit-outer-size}
  |\mathcal G|\le2^{o(K)}\log(2\npad)\ ,
\end{equation}
and the complete family can be listed in $2^{o(K)}\npad^{O(1)}$ time.
\end{lemma}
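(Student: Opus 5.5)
The proof has three parts: correctness by composing the three splitter guarantees in sequence, a size count by multiplication, and a listing-time bound.

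\textbf{Correctness.} Fix a $K$-set $S\subseteq[\npad]$.

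First, property (i) gives some $a\in\mathcal A$ that is injective on $S$. Hence $S'\triangleq a(S)$ is a $K$-element subset of $[K^2]$.

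Second, apply (ii) to $S'$. It gives $b\in\mathcal B$ such that every fibre $S'_i\triangleq S'\cap b^{-1}(i)$, for $i\in[T]$, has exactly $q$ elements. Since $a$ is injective on $S$, the elements of $S$ with $b(a(x))=i$ are in bijection with $S'_i$. So every outer bucket receives exactly $q$ elements of $S$.

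Third, for each $i$ separately, apply (iii) to the $q$-set $S'_i\subseteq[K^2]$. It gives $\phi_i\in\mathcal P$ injective on $S'_i$. The tuple $\boldsymbol\phi=(\phi_1,\ldots,\phi_T)$ may be chosen coordinatewise, because $\mathcal G$ contains every tuple in $\mathcal P^T$.

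Now take two distinct $x,x'\in S$ in the same bucket $i$. Then $a(x)\ne a(x')$ both lie in $S'_i$, so $\phi_i(a(x))\ne\phi_i(a(x'))$. Thus $g_{a,b,\boldsymbol\phi}$ is injective on second coordinates inside every bucket.

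\textbf{Size.} We have $|\mathcal G|\le|\mathcal A|\,|\mathcal B|\,|\mathcal P|^T$, and the three factors are bounded as follows.
\begin{itemize}
\item By \eqref{eq:A-outer}, $|\mathcal A|=K^{O(1)}\log(2\npad)$, and $K^{O(1)}=2^{o(K)}$.
\item $|\mathcal B|=2^{O(T\log K)}$. With $T=K/q$ and $q=\Theta(\sqrt K)$ this is $2^{O(\sqrt K\log K)}=2^{o(K)}$.
\item $|\mathcal P|^T=(q^{O(1)}\log K)^T=2^{O(T\log K)}=2^{o(K)}$.
\end{itemize}
Multiplying gives \eqref{eq:explicit-outer-size}. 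This step is the crux. The tuple $\boldsymbol\phi$ contributes an exponent $T$, and this is harmless only because $|\mathcal P|$ is polynomial in $K$ and $T=o(K/\log K)$. Choosing $q=\Theta(\sqrt K)$ provides exactly that margin; by contrast, a constant $q$ would make $|\mathcal P|^T$ exponential in $K$.

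\textbf{Listing time.} The families are globally explicit. So $\mathcal A$ is listed in time polynomial in $\npad$ and its output size. $\mathcal B$ and $\mathcal P$ have domain $[K^2]$, so they are listed in time polynomial in $K$ and their sizes, which is $2^{o(K)}$. Enumerating the tuples in $\mathcal P^T$ takes $|\mathcal P|^T\cdot T$ steps. Writing one map $g$ as a table on $[\npad]$ costs $O(\npad)$ evaluations of the compositions. The total is $|\mathcal G|\cdot\npad^{O(1)}+2^{o(K)}=2^{o(K)}\npad^{O(1)}$.
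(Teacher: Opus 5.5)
Your proposal is correct and follows the paper's proof essentially step for step. You pick $a$ injective on $S$, a balanced $b$ on $a(S)$, and coordinatewise $\phi_i$ injective on each fibre; you bound $|\mathcal A|\,|\mathcal B|\,|\mathcal P|^T$ via $T=\Theta(\sqrt K)$; and you note that enumerating the choices and evaluating each map on $[\npad]$ costs only a polynomial factor. Your explicit bijection between bucket $i$ and the fibre $S'_i$, and your use of $\le$ rather than $=$ for $|\mathcal G|$, are minor clarifications of points the paper leaves implicit.
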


\begin{proof}
Fix $S\subseteq[\npad]$ with $|S|=K$. Choose $a\in\mathcal A$ injective on $S$. Choose $b\in\mathcal B$ such that
\[
  |a(S)\cap b^{-1}(i)|=q
  \qquad\text{for every }i\in[T].
\]
For each $i\in[T]$, choose $\phi_i\in\mathcal P$ injective on the $q$-element set $a(S)\cap b^{-1}(i)$. The resulting map $g_{a,b,\boldsymbol\phi}$ has the required two properties.

The number of maps is
\[
  |\mathcal G|=|\mathcal A|\,|\mathcal B|\,|\mathcal P|^T\ .
\]
Since $T=K/q=\Theta(\sqrt K)$,
\[
  \log_2|\mathcal B|=O(T\log K)=o(K)
\]
and
\[
  T\log_2|\mathcal P|
  =O\!\left(T(\log q+\log\log K)\right)=o(K)\ .
\]
Together with \eqref{eq:A-outer}, this proves \eqref{eq:explicit-outer-size}.  Enumerating the choices and evaluating each composed map on $[\npad]$ adds only a polynomial factor in $\npad$.
\end{proof}

\subsection{Uniform form of the main construction}

\begin{theorem}[Uniform amplification]\label{thm:uniform-amplification}
Under the hypotheses of \Cref{thm:amplification}, the NFA in that theorem can be constructed deterministically in
\begin{equation}\label{eq:uniform-amplification-time}
  2^{(E(x,y)+o(1))k}n^{O(1)}
\end{equation}
time.  The output consists of the complete state list and the complete list of ordinary one-symbol transitions.
\end{theorem}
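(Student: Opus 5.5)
The plan is to follow the construction in the proof of \Cref{thm:amplification} step by step, replacing each random object by its deterministic counterpart from this appendix. At each stage we bound the time spent by the size of the produced object times a factor $2^{o(k)}n^{O(1)}$.

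First, fix $x,y$ and the allowed $q$, chosen as the least allowed value at least $\sqrt{k}$. Compute $K$, $T=K/q$, $L=q^3$, and add the $K-k$ fresh symbols. Since $G$ is a fixed constant-size object, all quantities derived from $A$ and $B$ take $2^{O(q)}$ time. The tilted histogram data are fixed rationals.

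Second, build the inner NFA $C_{q,L}$ deterministically.
\begin{itemize}
\item Generate the raw product of $N=\lambda q$ copies.
\item Compress it using the layered reachability procedure described in the explicit-inner-hashes subsection.
\item Obtain the inner hash family from \Cref{lem:explicit-restriction} with $u=L$, $t=q$, $b=cN$, $\delta=\rho$, at cost $2^{O(q\log q)}$.
\item Relabel the compressed product once per hash and take the union.
\end{itemize}
The result satisfies the bounds \eqref{eq:inner-states} and \eqref{eq:inner-transitions}. It is produced in time $2^{O(q\log q)}$ plus time polynomial in its size, i.e.\ $2^{O(q\log q)}L^{O(1)}=2^{o(k)}$, because $q=\Theta(\sqrt{k})$.

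Third, list the outer family $\mathcal G$ via \Cref{lem:explicit-outer} in $2^{o(K)}\npad^{O(1)}$ time. For each $g\in\mathcal G$, enumerate the states of the $T$-fold product of $C_{q,L}$, i.e.\ all $T$-tuples of inner states, in at most $S_q^T\cdot T^{O(1)}$ time. Then enumerate the transitions by looping over:
\begin{itemize}
\item each tuple,
\item each input symbol $a\in[\npad]$,
\item each local transition of component $g_1(a)$ labeled $g_2(a)$.
\end{itemize}
Before this loop, precompute for each inner state and local symbol its successor list. This makes the enumeration time linear in the output plus $S_q^T\npad$ overhead, which Step~5 of \Cref{thm:amplification} already bounds by $2^{(E(x,y)+o(1))K}\npad^{O(1)}$.

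Next, merge the initial states as in the nondeterministic-union construction. Delete the transitions labeled by fresh symbols and all layers above $k$; both are linear-time filters. Because $K=k+o(k)$ and $\npad=O(n)$, the total time is $2^{(E(x,y)+o(1))k}n^{O(1)}$.

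The main obstacle is the enumeration in the third step. Listing states of each branch naively could include unreachable tuples or tuples of rank above $K$. We must also ensure that per-transition work stays $2^{o(K)}$ rather than incurring an extra $2^{O(q)}$ factor per component for all $T$ components. I would handle this by the precomputed successor tables, so each output transition costs $O(T+\log)$ time, and by noting that the state and transition counts in the proof of \Cref{thm:amplification} already dominate the enumeration domain.
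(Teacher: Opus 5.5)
Your proposal is correct and follows the paper's proof essentially step for step. You build the inner NFA deterministically via the restriction lemma and layered reachability, list the explicit outer family $\mathcal G$, and list each branch's $T$-fold product with precomputed local transition tables. Your loop over tuples and input symbols adds an $S_q^T\npad$ overhead term, compared with the paper's output-sensitive $O(S_q^T+\npad D_qS_q^{T-1})$, but both fit the budget. You then merge initial states and delete fresh symbols and high ranks, as the paper does. The one detail you omit is the bounded-$k$ case. There $q$ may be too small for \Cref{prop:inner} or for the condition $cN\le L$ required by the restriction lemma, so one uses a direct finite construction, as the paper does.
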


\begin{proof}
For bounded $k$ the claim can be satisfied by a direct finite construction.
For larger $k$, choose the allowed value $q$, the padded value $K$, and $T=K/q$ exactly as in \Cref{sec:amplification}.  Thus $q=\Theta(\sqrt k)$ and $K=k+o(k)$.

First construct the inner NFA $C_{q,L}$, with $L=q^3$, by the deterministic procedure above.  Its extra preprocessing time is $2^{O(q\log q)}=2^{o(k)}$.  Next enlarge the alphabet temporarily to $[\npad]$, construct the family $\mathcal G$ from \Cref{lem:explicit-outer}, and for every $g\in\mathcal G$ list the raw product of $T$ copies of $C_{q,L}$.

Let $S_q=|Q(C_{q,L})|$ and $D_q=|\Delta(C_{q,L})|$. A branch for one outer hash has at most $S_q^T$ states. Store the outgoing local transitions of $C_{q,L}$ by local input symbol, and precompute $g(x)$ for every $x\in[\npad]$. Enumerating the product states takes $O(S_q^T)$ time. For each transition, choose the updated component, one of its local transitions, the states of the other $T-1$ components, and an original input symbol that induces the local transition. Hence all states and transitions of one branch can be listed in
\[
  O\!\left(S_q^T+\npad D_qS_q^{T-1}\right)
\]
time. The second term is also the transition bound used in the proof of \Cref{thm:amplification}. The nondeterministic union is formed without epsilon transitions by merging the initial states and taking the union of their outgoing transitions.

The state and transition estimates from \Cref{thm:amplification} therefore remain unchanged.  By \Cref{lem:explicit-outer}, the number of outer branches is only $2^{o(K)}\npad^{O(1)}$.  The total listing time is consequently
\[
  2^{(E(x,y)+o(1))K}\npad^{O(1)}\ .
\]
The construction of the inner family and all three outer splitter families is subexponential in $K$ and is absorbed by this bound.  Finally delete every transition labeled by one of the $K-k$ fresh symbols and retain only ranks $0,\ldots,k$.  Since $K=k+o(k)$ and $\npad=O(n)$, the result is \eqref{eq:uniform-amplification-time}.
\end{proof}

\begin{corollary}[Uniform form of the main theorem]\label{cor:uniform-main}
The NFA in \Cref{thm:main} can be constructed deterministically in
\[
  3.918^k n^{O(1)}
\]
time.
\end{corollary}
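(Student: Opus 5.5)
The corollary follows by running \Cref{thm:uniform-amplification} on the same inputs that were used to derive \Cref{thm:main}. Take the partial gadget $G=H_{11}$ from \Cref{sec:eleven-witt}, which is symmetric and has capacity $61$ over $c=121$ colors, together with the rational parameters $x=153/200$ and $y=81/50$. For these choices \Cref{sec:eleven-witt} already shows $E(x,y)<1.969918$. \Cref{thm:uniform-amplification} then gives a deterministic construction in time $2^{(E(x,y)+o(1))k}n^{O(1)}$, producing the complete lists of states and ordinary transitions. Since $2^{1.969918}<3.917459<3.918$, the $o(1)$ term is absorbed for all sufficiently large $k$. The finitely many small $k$ are handled by a fixed-size construction, for example the trivial subset NFA, which costs $n^{O(1)}$ time for each bounded $k$.

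Two hypotheses of \Cref{thm:uniform-amplification} need checking, and I would verify them in this order.

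First, $H_{11}$ must be a fixed symmetric partial gadget. This follows from \Cref{lem:partial-product} and \Cref{lem:compress}, because $A_W$ is symmetric. Its finite description must also be computable in constant time. Since $k$ does not affect it, $H_{11}$ can be built once by the reachability procedure of the explicit inner-hash subsection.

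Second, the certified-set membership test for $H_{11}$ must be explicit, so that the inner restriction family of \Cref{lem:explicit-restriction} can be built in $q^{O(1)}$ time per test. A set is certified exactly when it uses at most six colors from each Witt palette and at most $61$ colors in total, which is the characterization noted in the appendix.

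The main obstacle is bookkeeping rather than a new idea: every construction cost must stay within the claimed bound. The inner family costs $2^{O(q\log q)}=2^{o(k)}$ with $q=\Theta(\sqrt k)$. The outer splitter families cost $2^{o(K)}n_{\mathrm{pad}}^{O(1)}$. Listing the branches costs $O(S_q^T+n_{\mathrm{pad}}D_qS_q^{T-1})$. Padding and truncation are polynomial. All of this is already established in the proof of \Cref{thm:uniform-amplification}, so the corollary reduces to substituting the numerical exponent and absorbing the slack.
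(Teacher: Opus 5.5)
Your proposal is correct and follows the paper's proof: build the fixed gadget $H_{11}$ once by the layered-reachability procedure, apply \Cref{thm:uniform-amplification} with $x=153/200$ and $y=81/50$, and absorb the $o(1)$ term into the slack of \eqref{eq:final-number}. Your additional checks are consistent with what the paper already establishes, and the small-$k$ case is already covered inside the proof of \Cref{thm:uniform-amplification}. Those checks are the symmetry of $H_{11}$ via \Cref{lem:compress}, the explicit certified-set test (at most six colors per Witt palette and at most $61$ in total), and the small-$k$ case.
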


\begin{proof}
The eleven-Witt partial gadget $H_{11}$ is fixed.  Its finite compose-and-compress step is constructed once by the layered-reachability procedure used for the inner NFA.  Apply \Cref{thm:uniform-amplification} with the rational values $x$ and $y$ from \Cref{sec:eleven-witt}, and then use \eqref{eq:final-number}.
\end{proof}

If only rank $k$ is accepting, the same output graph recognizes the traditional exact-length language.  Ben-Basat, Gabizon, and Zehavi show that an explicit acyclic NFA of size $s$ for that language yields a deterministic minimum-weight simple $k$-path algorithm in time $O(sn^2\log W)$ for integer weights of magnitude at most $W$~\cite[Section~3]{BenBasatGabizonZehavi2016}.  Thus the uniform construction above also gives such an algorithm with running time $3.918^k n^{O(1)}\log W$.  This observation records the algorithmic consequence of uniformity; it is not used elsewhere in the paper.

\section{Coefficients of the eleven-Witt partial gadget}
\label{app:eleven-coefficients}

The next two tables expand \eqref{eq:Aeleven} and \eqref{eq:Beleven}.  They contain all exact integers used in the final substitution; no numerical search or external certificate is needed to verify the displayed bound.

\begin{table}[H]
\centering
\tiny
\setlength{\tabcolsep}{5pt}
\begin{tabular}{@{}r r r@{}}
\toprule
$j$ & $[z^j]A_{11}$ & $[z^j]B_{11}$\\
\midrule
0 & 1 & 1 \\
1 & 121 & 121 \\
2 & 7260 & 7260 \\
3 & 286891 & 287980 \\
4 & 8372595 & 8495410 \\
5 & 191926328 & 198792594 \\
6 & 3590148903 & 3843323484 \\
7 & 56226518975 & 63140310750 \\
8 & 750820249872 & 899749078800 \\
9 & 8664220979865 & 11296832619050 \\
10 & 87283865030813 & 126523976934740 \\
11 & 773660153828188 & 1276728613011390 \\
12 & 6070667004845190 & 11703086455223240 \\
13 & 42373390765351866 & 98121692243791370 \\
14 & 264118233584540500 & 756881012184576840 \\
15 & 1474717582817269098 & 5398390512819437562 \\
16 & 7395064892682784108 & 35756962984086163842 \\
17 & 33376621956399207090 & 220781732808203697900 \\
18 & 135843236767568489082 & 1275025954460637078700 \\
19 & 499444330218306720280 & 6907279173469073626150 \\
20 & 1661556681746456046794 & 35193441070815053735940 \\
21 & 5010046877807222954274 & 169041462406342601641390 \\
22 & 13714984061593772585100 & 767012962690500474869800 \\
23 & 34144119376323728018522 & 3293784641638203428118150 \\
24 & 77437025816050133825625 & 13408809122017810197265800 \\
25 & 160261587109715509988999 & 51823918771861501808788014 \\
26 & 303156976457717708620074 & 190409252871662685087548244 \\
27 & 524962205541521065357435 & 665843978673145925551573290 \\
28 & 833326882669836119261367 & 2218367455481676757416748920 \\
29 & 1214116241599850275349740 & 7048004806637136600063459390 \\
30 & 1625207726674894115232127 & 21370477843420614769784982696 \\
\bottomrule
\end{tabular}
\caption{Exact coefficients of $H_{11}$, degrees $0$ through $30$.}
\label{tab:aeleven-beleven-low}
\end{table}

\begin{table}[H]
\centering
\tiny
\setlength{\tabcolsep}{5pt}
\begin{tabular}{@{}r r r@{}}
\toprule
$j$ & $[z^j]A_{11}$ & $[z^j]B_{11}$\\
\midrule
31 & 1625207726674894115232127 & 61883160988698810044151119406 \\
32 & 1214116241599850275349740 & 171233699059034276254272577125 \\
33 & 833326882669836119261367 & 452968290930769116768808926075 \\
34 & 524962205541521065357435 & 1145955372380013863124868534200 \\
35 & 303156976457717708620074 & 2773354543347515465375527514310 \\
36 & 160261587109715509988999 & 6421766099824627001832732034410 \\
37 & 77437025816050133825625 & 14228056299440583813499827476760 \\
38 & 34144119376323728018522 & 30162378549580094655428298010380 \\
39 & 13714984061593772585100 & 61172550882732340044956092020660 \\
40 & 5010046877807222954274 & 118662668904783844745326962163008 \\
41 & 1661556681746456046794 & 220082143570904351703878914669128 \\
42 & 499444330218306720280 & 390088783091579848846523711227800 \\
43 & 135843236767568489082 & 660372807766656971235148843838400 \\
44 & 33376621956399207090 & 1066941562469213756447700967936800 \\
45 & 7395064892682784108 & 1643728762638690913078626400224480 \\
46 & 1474717582817269098 & 2412106515472011756720476125466880 \\
47 & 264118233584540500 & 3367408233588917777131894279473600 \\
48 & 42373390765351866 & 4465709629768930125562108491700800 \\
49 & 6070667004845190 & 5616109549756970662661370246585600 \\
50 & 773660153828188 & 6684380843158579956024573450993024 \\
51 & 87283865030813 & 7511917616038063290570634987749504 \\
52 & 8664220979865 & 7949016579455335405379951674583040 \\
53 & 750820249872 & 7894911698912352592545592448689920 \\
54 & 56226518975 & 7331500055535166639173837399632640 \\
55 & 3590148903 & 6336791071517629004643473671190016 \\
56 & 191926328 & 5069778473880165126986746880645376 \\
57 & 8372595 & 3729376854292201572403735168608000 \\
58 & 286891 & 2501473076022710659977161145907200 \\
59 & 7260 & 1513894347870970784258386016755200 \\
60 & 121 & 815471702059200619170645712949760 \\
61 & 1 & 383888224350711724813917287823360 \\
\bottomrule
\end{tabular}
\caption{Exact coefficients of $H_{11}$, degrees $31$ through $61$.}
\label{tab:aeleven-beleven-high}
\end{table}

\clearpage
\noindent\begin{minipage}{\textwidth}
\section{Interval verification of the final numerical bound}
\label{app:final-bound-verification}

This exact-arithmetic Python~3 script verifies the inequalities in
\Cref{sec:eleven-witt}. 
\begin{lstlisting}[style=pythoncopyable]
from fractions import Fraction as Q
from math import comb
from mpmath import iv

iv.dps = 60

def mul(p, q):
    r = [0] * (len(p) + len(q) - 1)
    for i, a in enumerate(p):
        for j, b in enumerate(q):
            r[i + j] += a * b
    return r

def power(p, n):
    r = [1]
    for _ in range(n):
        r = mul(r, p)
    return r

def evaluate(p, x):
    r = Q(0)
    for a in reversed(p):
        r = r * x + a
    return r

def I(x):
    return iv.mpf(x.numerator) / x.denominator

A_raw = power([1, 11, 55, 66, 55, 11, 1], 11)
B_raw = power([comb(11, j) for j in range(7)], 11)

# Raw ranks 31,...,35 are deleted; raw j >= 36 shifts to j-5.
A_11 = A_raw[:31] + A_raw[36:]
B_11 = B_raw[:62]

assert sum(A_raw[31:36]) == 10892645599457631372405834
assert sum(A_11) == 9587354400542368627594166

x, y = Q(153, 200), Q(81, 50)
Ax, By = evaluate(A_11, x), evaluate(B_11, y)
Bprime = [(j + 1) * B_11[j + 1] for j in range(61)]
mu = y * evaluate(Bprime, y) / By
lambda_ = 1 / mu

log2 = lambda z: iv.ln(z) / iv.ln(2)
H = log2(121 * I(lambda_) * iv.e) + log2(I(y)) - I(lambda_) * log2(I(By))
R = I(lambda_) * log2(I(Ax)) - iv.mpf("0.5") * log2(I(x))
E = H + R
base = iv.power(2, E)

assert H.b < iv.mpf("0.525678").a
assert R.b < iv.mpf("1.444240").a
assert E.b < iv.mpf("1.969918").a
assert base.b < iv.mpf("3.917459").a < iv.mpf("3.918").a

print("E =", E)
print("2^E =", base)
print("verified: 2^E < 3.917459 < 3.918")
\end{lstlisting}
\end{minipage}

\clearpage
\section{Proof of the compose-and-compress lemma}
\label{app:compress-proof}

\begin{proof}[Proof of \Cref{lem:compress}]
We verify the partial-gadget properties one at a time.
The raw product is a partial gadget by \Cref{lem:partial-product}; compose-and-compress changes only its state layers and the way paths cross the deleted band.

\paragraph{State polynomial}
The retained lower layers are the raw layers $0,\ldots,\ell$, so they contribute $\sum_{j=0}^{\ell}a_jz^j$.
A retained raw layer $j\ge\ell+s+1$ is shifted down by $s$, so it contributes $a_jz^{j-s}$.
Adding the two parts gives \eqref{eq:compressed-state}.

\paragraph{Soundness}
Take any path in the compressed automaton.
Every retained raw transition is also a transition of the raw product.
For each compressed jump, choose one raw path that witnesses its existence and replace the jump by that path.
The endpoints match by construction, so after all replacements we obtain one continuous path in the raw product.
If a visible color appeared twice on the compressed path, it would still appear twice on the expanded raw path.
This is impossible because the raw-product partial gadget accepts no word with a repeated color.
Hence every compressed path is repetition-free.

\paragraph{Completeness below the cut}
Let $S$ be a certified product set of size $j\le q$, and fix an arbitrary ordering $a_1a_2\cdots a_j$.
When $j\le\ell$, the raw accepting path has length at most $\ell$.
It stays entirely in the retained lower tail, so the same path exists in the compressed automaton.

\paragraph{Completeness across the cut}
Now suppose $j>\ell$.
The compressed path must cross from the lower tail to the upper tail after reading $a_1,\ldots,a_\ell$.
The deleted band has width $s$, so we need $s$ additional colors to witness that crossing.
By (P4), there is a certified set $T\supseteq S$ with $|T|=|S|+s$.
Write $T\setminus S=\{h_1,\ldots,h_s\}$; these colors are distinct and do not belong to $S$.
By (P3), the raw product accepts every ordering of $T$.
In particular, it accepts $a_1\cdots a_\ell h_1\cdots h_s a_{\ell+1}\cdots a_j$.
After the prefix $a_1\cdots a_\ell$, the next $s+1$ raw transitions read $h_1\cdots h_sa_{\ell+1}$.
The definition of compose-and-compress therefore adds a jump with the same endpoints and visible label $a_{\ell+1}$.
Replace those $s+1$ raw transitions by that jump.
The remaining suffix transitions are retained raw transitions.
The resulting compressed path reads exactly $a_1a_2\cdots a_j$, because the colors $h_1,\ldots,h_s$ are hidden rather than read.

\paragraph{Certified sets}
The raw product has $b_j$ certified sets of size $j$.
The argument above shows that every such set remains readable when $j\le q$.
We discard the larger certified sets because the new capacity is $q$.
This gives the truncation in \eqref{eq:compressed-good}.
Downward closure is preserved.
Property (P4) is also preserved: if a retained certified set has size $j$ and we request $h\le q-j$ additional colors, the product version of (P4) extends it to size $j+h\le q$, which is still retained.

\paragraph{Symmetry}
For $0\le j\le\ell$, the compressed layer of rank $j$ has size $a_j$.
Its mirror rank is $q-j$.
That rank comes from raw rank $q-j+s=R-j$ and therefore has size $a_{R-j}=a_j$, using the symmetry of the raw product.
Thus the compressed state profile is symmetric.
\end{proof}

\section*{Declaration of generative AI and AI-assisted technologies in the manuscript preparation process}
During preparation of this work, we used OpenAI's ChatGPT and Codex (models 5.6, 5.5, and 5.4 Pro) and Google's Gemini (model 3.1 Pro) to assist with proof verification, proof simplification, parameter optimization, literature survey, figure drawing, language editing, and organization. We independently reviewed and verified all mathematical arguments, citations, figures, code, and numerical calculations, edited the resulting material as needed, and takes full responsibility.

\AtBeginEnvironment{thebibliography}{%
  \renewcommand{\newline}{\unskip\space}%
  \def\urlprefix{}%
}
\bibliographystyle{elsarticle-num}
\bibliography{refs}

\end{document}